\def\yesd{\mathcal{YES}} \def\nod{\mathcal{NO}}
\newtheorem{theorem}{Theorem}[section]
\newtheorem{lemma}[theorem]{Lemma}
\newtheorem{property}[theorem]{Property}
\newtheorem{corollary}[theorem]{Corollary}
\newtheorem{claim}[theorem]{Claim}
\newtheorem{observation}[theorem]{Observation}
\newtheorem{definition}[theorem]{Definition}
\newtheorem{remark}[theorem]{Remark}
\newcommand{\overbar}[1]{\mkern 1.5mu\overline{\mkern-1.2mu#1\mkern-1.2mu}\mkern 1.5mu}
\newcommand{\wt}[1]{\mathrm{wt}(#1)}
   \def\11{\mathbf{1}}
\def\calP{\mathcal{P}}   
  \def\calD{\mathcal{D}} 
\def\calS{\mathcal{S}}  \def\inwt{\text{in-wt}} \def\calA{\mathcal{A}}
  \def\calB{\mathcal{B}}
\def\zero{\textsc{Zero}}  \def\dist{\text{dist}}
\def\fC{\frak C} \def\nil{\textsf{nil}}
\def\mconj{\textsc{Mconj}}
\def\conj{\textsc{Conj}}
\def\dl{\textsc{Dlist}}
\def\ltf{\textsc{LTF}}
\begin{document}

\title{Tight Bounds for the Distribution-Free Testing\\ of Monotone Conjunctions\vspace{0.3cm}}
\author{Xi Chen\\ Columbia University\\ \texttt{xichen@cs.columbia.edu}
 \and
Jinyu Xie\\ Columbia University\\ \texttt{jinyu@cs.columbia.edu}}
\date{}
\setcounter{page}{0}\maketitle
\thispagestyle{empty}

\begin{abstract} We improve both upper and lower bounds 
  for the distribution-free testing of monotone conjunctions.
Given oracle access to an unknown Boolean function $f\hspace{-0.02cm}:\hspace{-0.02cm}\{0,1\}^n
\hspace{-0.02cm}\rightarrow\hspace{-0.02cm} \{0,1\}$
  and sampling oracle access to an unknown distribution $\calD$ over $\{0,1\}^n$,
we present an $\tilde{O}(n^{1/3}/\epsilon^5)$-query algorithm that tests
  whether $f$ is a monotone conjunction versus $\epsilon$-far from any monotone conjunction
  with respect to $\calD$.
This improves the previous best upper bound of $\tilde{O}(n^{1/2}/\epsilon)$ by Dolev and Ron \cite{DolevRon}
  when $1/\epsilon$ is small compared to $n$.
For some constant $\epsilon_0>0$, we also prove a lower bound of $\tilde{\Omega}(n^{1/3})$
  for the query complexity,
  improving the previous best lower bound of $\tilde{\Omega}(n^{1/5})$ by Glasner and Servedio
  \cite{GlasnerServedio}.
Our upper and lower bounds are tight, up to a poly-logarithmic factor, when the distance parameter
  $\epsilon$ is a constant.
Furthermore, the same upper and lower bounds can be extended to the distribution-free testing
  of general conjunctions, and the lower bound can be extended to
  that of decision lists and linear threshold functions. 
 \end{abstract}
\newpage

\section{Introduction}

The field of property testing analyzes the resources an algorithm requires
  to determine whether an unknown object
  satisfies a certain property versus \emph{far} from satisfying the property.
It was introduced in \cite{RubinfeldSudan}, after prior work in \cite{Babai,Blum}, and
  has been studied extensively during the past two decades (see surveys in \cite{Goldreich,Fischer,Ron,AlonShapira,Rubinfeld}).

For our purpose, consider a Boolean function $f:\{0,1\}^n\rightarrow \{0,1\}$ and a class
  $\fC$ of Boolean functions, viewed as a property. 
The distance between $f$ and $\fC$ in the standard testing model is measured
  with respect to the \emph{uniform distribution}.
Equivalently, it is the smallest fraction of entries of $f$ one needs to
  flip to make it a member of $\fC$.
A natural generalization of the standard model, called \emph{distribution-free property testing},~was
  first introduced by Goldreich, Goldwasser and Ron \cite{GoldreichGoldwaserRon} and has
  been studied in \cite{AilonChazelle,HalevyKushilevitz,GlasnerServedio,HalevyKushilevitz2,HalevyKushilevitz3,DolevRon}.
In the distribution-free model, there is an unknown distribution $\calD$ over $\{0,1\}^n$
  in addition to the unknown $f$.
The goal of an algorithm is   to determine whether $f$ is in $\fC$ versus
  far from $\fC$ \emph{with respect to $\calD$}, given black-box access to $f$ and sampling access to $\calD$.
The model of distribution-free property testing is well motivated by scenarios where the distance
  being of interest
  is indeed measured with respect to an unknown distribution $\calD$. It
  is also inspired by similar models in computational learning theory (e.g.,
  the distribution-free PAC learning model \cite{Valiant} with membership queries).
It was observed \cite{GoldreichGoldwaserRon} that any proper distribution-free PAC learning algorithm
  can be used for distribution-free property testing.

In this paper we study the distribution-free testing
  of \emph{monotone conjunctions} (or \emph{monotone monomials}):
  $f$ is a monotone conjunction if $f(z)=\bigwedge_{i\in S} {z_i}$, for some
  $S\subseteq [n]$.
We first obtain an efficient algorithm that is one-sided
  and makes $\smash{\tilde{O}((n^{1/3}/\epsilon^5))}$ queries.
When $1/\epsilon$ is small compared to $n$, it improves the previous best
  $\smash{\tilde{O}(n^{1/2}/\epsilon)}$-query algorithm of Dolev and Ron \cite{DolevRon}.

\begin{theorem} \label{uppertheorem}
There is a ${O((n^{1/3}/\epsilon^5)\cdot\log^7(n/\epsilon))}$-query one-sided
  algorithm for the distribution-free\\ testing of monotone conjunctions.
\end{theorem}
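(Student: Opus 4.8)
The plan is to build a \emph{one-sided} tester, so that completeness is automatic and the whole argument reduces to soundness. The tester will reject only after it has exhibited an explicit \emph{violation}: either a monotonicity violation, i.e.\ points $z\le z'$ (coordinatewise) with $f(z)=1$ and $f(z')=0$; or an intersection violation, i.e.\ points $z,z'$ with $f(z)=f(z')=1$ but $f(z\wedge z')=0$; or the degenerate witness $f(\mathbf{1})=0$. None of these is consistent with any monotone conjunction, so the tester never rejects a monotone conjunction. Moreover, whenever the tester has a set $P$ of positive points and a negative point $y$ with $\mathrm{supp}(y)\supseteq\bigcap_{x\in P}\mathrm{supp}(x)$, a single extra query to the indicator of $\bigcap_{x\in P}\mathrm{supp}(x)$ exposes one of these violations; so it suffices to show that when $f$ is $\eps$-far from all monotone conjunctions with respect to $\calD$, the tester exhibits a violation with probability at least $2/3$.

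The tester proceeds in stages, maintaining a candidate monomial, and is repeated a constant number of times for amplification. In each stage it draws a fresh batch of samples from $\calD$ (of size $\tilde O(n^{1/3}/\eps^{O(1)})$) and queries $f$ on them, lets $A$ be the intersection of the supports of all positive points seen so far, queries the indicator $w_A$ of $A$ (rejecting if $f(w_A)=0$), rejects if some sampled negative point $y$ has $A\subseteq\mathrm{supp}(y)$, and otherwise, for each ``unexplained'' sampled negative point $y$ --- one not already ruled out by a previously discovered relevant coordinate --- runs a binary search between $w_A$ and the indicator of $A\cap\mathrm{supp}(y)$. This costs $O(\log(n/\eps))$ queries and either exposes a monotonicity violation or certifies a fresh coordinate $r\in A\setminus\mathrm{supp}(y)$ that must lie in the term of $f$ if $f$ is a monomial; symmetrically, a later positive point that omits an already-certified coordinate yields a violation. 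The soundness statement to prove is that if no stage finds a violation, then on a typical run the final candidate $f_A$ agrees with $f$ on more than a $(1-\eps)$-fraction of $\calD$-mass, contradicting $\eps$-farness: the negative points on which $f_A$ disagrees with $f$ are exactly those furnishing a certificate (hence rare, or we would have rejected), and the positive points on which $f_A$ disagrees are exactly those that would have shrunk $A$ (hence rare once $A$ has stabilized).

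The quantitative core --- and the step I expect to be the main obstacle --- is bounding how long it takes for the candidate to stabilize. An adversarial $\calD$ can diffuse the error mass of $f$ over polynomially many low-probability inputs, each deleting a single coordinate from $A$, so the naive bound on the number of refinements is $\Omega(n)$ and the naive sample complexity $\Omega(n/\eps)$. The remedy is a dichotomy: if $\calD$ really spreads the error this thinly, then $f$ must itself be close to the monomial whose term is the ``robust core'' of the sampled positive points, contradicting $\eps$-farness. Making this quantitative calls for a case split on a threshold $k$: when the relevant term is small, at most $k$ fresh coordinates are ever certified, so the learning phase finishes within $\tilde O(k/\eps^{O(1)})$ good samples; when it is large, the combinatorics of positive supports caps the refinement chain at roughly $n/k$ steps. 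Balancing $k$ against $n/k$ already recovers the $\tilde O(\sqrt n/\eps)$ of Dolev--Ron; to reach exponent $1/3$ one must instead run a finer charging argument that simultaneously tracks the two monotone quantities squeezing the candidate toward the true term and applies a birthday-type argument to the coordinates deleted by sampled positive points, optimizing a three-way rather than a two-way trade-off. Combining this with union bounds over the $\polylog{n/\eps}$ stages and over coordinates, and with the $\poly{1/\eps}$ overhead of repeatedly re-estimating small probabilities to decide whether to keep refining, should give the claimed $O\big((n^{1/3}/\eps^{5})\cdot\log^{7}(n/\eps)\big)$-query bound.
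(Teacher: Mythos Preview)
Your proposal has a genuine gap at exactly the point you flag as ``the main obstacle.'' Everything up through the dichotomy on a threshold $k$ is a correct sketch of the Dolev--Ron $\tilde O(\sqrt{n}/\eps)$ tester, and you acknowledge this. But the sentence that is supposed to carry you from $\sqrt{n}$ to $n^{1/3}$ --- ``run a finer charging argument that simultaneously tracks the two monotone quantities \ldots\ and applies a birthday-type argument to the coordinates deleted by sampled positive points, optimizing a three-way rather than a two-way trade-off'' --- is not an argument; it is a hope. No three quantities are named, no birthday event is identified, and no mechanism is given that would cap the refinement chain below $\sqrt{n}$. Since the paper's matching $\tilde\Omega(n^{1/3})$ lower bound shows the true complexity is exactly here, this is the entire content of the theorem, and it is missing from your proposal.

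More importantly, the paper's algorithm is \emph{not} a refinement of the learn-and-verify scheme you describe; it abandons that framework. The paper never maintains a candidate monomial. Instead it draws $t\approx n^{1/3}\cdot\mathrm{poly}(\log,1/\eps)$ positive samples, lets $B$ be the union of their zero-sets, and then runs two complementary random tests: Stage~1 queries $f$ on uniformly random size-$r$ subsets $Z\subseteq B$ (with $r=n^{1/3}$), expecting $f(Z)=1$; Stage~2 takes a fresh negative sample, binary-searches it down to a single representative index $\alpha$, and queries $f$ on $P\cup\{\alpha\}$ for a uniformly random size-$(r-1)$ subset $P\subseteq B$, expecting $f(P\cup\{\alpha\})=0$. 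The soundness analysis shows that no $f$ that is $\eps$-far can answer both stages consistently, because the query distributions in the two stages are nearly indistinguishable: a random $Z$ in Stage~1 that happens to contain exactly one ``special'' index from one of the sampled positive sets looks statistically like the $P\cup\{\alpha\}$ of Stage~2. Making this precise requires a new structural object, the \emph{violation bipartite graph} $G_f$ (positive points on one side, representative indices on the other, edges for $y_j=0$), together with a regularization step that iteratively strips heavy vertices to produce a subgraph $G_f^*$ in which all degrees and in-weights are controlled up to a factor $d=\mathrm{polylog}(n/\eps)/\eps$ while every vertex cover still has weight $\Omega(\eps)$. The regularity of $G_f^*$ is what bounds the double-counting when one transfers mass from Stage~1-passing sequences to Stage~2-failing ones (Lemma~\ref{maincount} and the argument in Section~\ref{passpass}). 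None of these ingredients --- the two-stage random-subset query scheme, the bipartite violation graph, or the heavy-vertex pruning --- appears in your plan.
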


For some constant distance parameter $\epsilon_0>0$,
we also present a $\tilde{\Omega}({n^{1/3}})$ lower bound on the
  number of queries required by any distribution-free testing algorithm.
This improves the previous best lower bound of $\tilde{\Omega}(n^{1/5})$ by Glasner and Servedio \cite{GlasnerServedio}.

\begin{theorem}  \label{lowertheorem}
There exists a universal constant $\epsilon_0>0$ such that any
  two-sided distribution-free algorithm\\ for testing whether an unknown Boolean
  function is a monotone conjunction versus $\epsilon_0$-far from monotone\\ conjunctions
  with respect to an unknown distribution
  must make $\Omega({n^{1/3}}/{\log^3 n})$ queries.
\end{theorem}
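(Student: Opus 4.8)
The plan is to apply Yao's minimax principle. I will construct two distributions on instances: a distribution $\yesd$ supported on pairs $(f,\calD)$ in which $f$ is a monotone conjunction, and a distribution $\nod$ supported on pairs in which $f$ is $\epsilon_0$-far from every monotone conjunction with respect to $\calD$. It then suffices to show that no $q$-query deterministic procedure — one that makes at most $q$ queries, each either a black-box evaluation of $f$ at a point of its choice or a draw $x\sim\calD$ (which also returns $f(x)$) — can have acceptance probabilities on $\yesd$ and on $\nod$ that differ by more than, say, $1/3$, unless $q=\Omega(n^{1/3}/\log^3 n)$; since a randomized tester is a mixture of deterministic ones, Theorem~\ref{lowertheorem} follows.

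Both $\yesd$ and $\nod$ will be built around a hidden random combinatorial structure — a random partition of (most of) the coordinates into $\approx n^{1/3}$ blocks, together with a small random amount of ``secret'' information planted inside each block — and, crucially, the distribution $\calD$ will be (essentially) the same whether we are on the $\yesd$ side or the $\nod$ side, so that the sampling oracle by itself is all but useless. On the $\yesd$ side $f$ is the monotone conjunction determined by the hidden structure. On the $\nod$ side $f$ is a monotone function that, on the ``easy'' part of $\calD$'s support, still coincides with an honest monotone conjunction, but that, on a family of ``hard'' support points of total $\calD$-mass $\ge\epsilon_0$, deviates so that no single monotone conjunction can be globally consistent; here I will have to check that $f$ is genuinely $\epsilon_0$-far by exhibiting, for every monotone conjunction $C$, a block-symmetric family of $\calD$-points on which $C$ and $f$ disagree with constant total mass. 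The reason this does not immediately defeat the lower bound is that each individual hard point is by itself consistent with \emph{some} conjunction; to detect the deviation one must examine several \emph{related} hard points at once, or hit a specific hard point with a carefully targeted query, and these relations are buried inside the $\approx n^{1/3}$ blocks of the random structure.

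The heart of the argument is then the indistinguishability claim, which I would prove by the principle of deferred decisions: reveal the hidden structure only as the algorithm's queries and samples are answered, and maintain a coupling of the two executions that stays exact until the algorithm performs a step that ``exposes'' the structure — a query $y$ whose value is not already forced to agree on the two sides, or a sample landing in the small region witnessing the deviation. For a single query $y$, a short counting argument over the (independent) random secrets shows that, no matter how $y$ is chosen, $\Pr[\,y\text{ exposes the structure}\,]=O(\polylog{n}/n^{1/3})$, and an analogous, easier bound holds for a sample. A union bound over the $\le q$ steps of the algorithm's adaptive decision tree — fixing the past at each step so that the current query is a determined point — then gives total exposure probability $O(q\cdot\polylog{n}/n^{1/3})$, which is $o(1)$ whenever $q=o(n^{1/3}/\log^3 n)$; conditioned on no exposure the two transcripts are identically distributed, so the acceptance probabilities agree. (Since the $\nod$-functions can further be checked to be far from decision lists and from linear threshold functions, the same construction yields the claimed extensions of the lower bound.)

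The step I expect to be the main obstacle is the joint design of $f$ on the $\nod$ side and of $\calD$: one must simultaneously make $f$ honestly $\epsilon_0$-far from \emph{all} monotone conjunctions with respect to $\calD$, keep $f$ locally indistinguishable from an honest conjunction so that no cheap query or sample exposes the hidden structure, and keep $\calD$ block-symmetric enough that no low-order statistic of a few samples can even weakly localize the special block(s). Arranging all three at once, and then carrying the deferred-decisions bookkeeping through adaptivity without losing more than the $\log^3 n$ factor in the statement, is where essentially all the difficulty lies.
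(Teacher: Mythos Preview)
Your high-level strategy --- Yao's principle on a $\yesd$/$\nod$ pair, with indistinguishability shown by a simulator that fakes the black-box oracle from the samples alone --- matches the paper, and you rightly flag the design of the $\nod$ function as the crux. But the concrete sketch has a real gap.

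The parameters force the following: $\calD$ must be supported on $m$ triples $(a^i,b^i,c^i)$ with $m=\Omega(n^{2/3})$ (else $q\approx n^{1/3}$ samples collide by the birthday bound and the tester can see a full witness triple directly), and each $C_i=\zero(c^i)$ must have size $\Omega(n^{2/3})$ (else black-box probing into a sampled $C_i$ locates its secret $\alpha_i$ too cheaply). Your ``$\approx n^{1/3}$ blocks, secret in each'' meets neither constraint, and more to the point $n^{2/3}$ pairwise disjoint sets of size $n^{2/3}$ do not fit into $[n]$ --- this is exactly the obstruction that caps the Glasner--Servedio construction at $\tilde\Omega(n^{1/5})$. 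The paper's key new idea, absent from your proposal, is to let the $C_i$ \emph{overlap heavily}: a random half $R$ of $[n]$ is partitioned into $\approx n^{1/3}\log^2 n$ blocks of size $\approx n^{2/3}/\log^2 n$, and each $C_i$ is the union of $2\log^2 n$ random blocks together with two fresh private indices $\alpha_i,\beta_i$. Only with this overlap are the three requirements (many triples, large triples, room left for a forcing complement $[n]\setminus R$) simultaneously satisfiable.

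The overlap then dictates the shape of the $\nod$-side analysis, which is not the clean ``per-query exposure $O(\polylog{n}/n^{1/3})$, then union bound'' you describe. Because a sampled $A_k$ now \emph{contains} its unlabelled $\alpha_k$, the function $g$ has to be engineered via a block-level predicate (the paper's ``$i$-special'': how many blocks of $A_i$ versus $B_i$ does $z$ hit heavily?) so that $g(a^i)=g(b^i)=1$ yet $g(c^i)=0$. The divergence analysis then splits into several events, some of which couple a black-box query $z$ with a sampled set $D_j$ and require a preliminary ``separated'' property of $Q$ (the sampled $C_i$'s share few blocks, which holds with probability $1-o(1)$) before any counting is meaningful; these are bounded over $q^2$ query--sample pairs rather than $q$ queries. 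Your deferred-decisions outline and ``short counting over independent secrets'' do not anticipate this coupling, and without the block/overlap construction there is no mechanism available to make $g$ simultaneously $\epsilon_0$-far and locally indistinguishable from a monotone conjunction.
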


Notably when the distance parameter $\epsilon$ is a constant,
  our new upper and lower bounds
  given in Theorems \ref{uppertheorem} and
  \ref{lowertheorem} are tight for the distribution-free testing of
  monotone conjunctions up to a poly-logarithmic factor of $n$.
Furthermore, these bounds can also be extended to several
  other basic Boolean function classes.

First, our upper bound can be extended to general conjunctions
  (i.e. $f$ is the conjunction of a subset of literals in $\{z_1,\ldots,z_n,\overbar{z_1},\ldots,\overbar{z_n}\}$)
  via a reduction to the distribution-free testing of monotone conjucntions,
  improving the previous best
  $\smash{\tilde{O}(n^{1/2}/\epsilon)}$-query algorithm of Dolev and Ron \cite{DolevRon} when $1/\epsilon$ is small.

\begin{theorem} \label{uppertheoremforgeneral}
There is a ${O((n^{1/3}/\epsilon^5)\cdot\log^7(n/\epsilon))}$-query one-sided
  algorithm for the distribution-free\\ testing of general conjunctions.
\end{theorem}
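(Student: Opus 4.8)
The plan is to reduce the distribution-free testing of general conjunctions to that of monotone conjunctions and then invoke the algorithm of Theorem~\ref{uppertheorem}. The starting observation is that any satisfying assignment of a general conjunction essentially tells us which of its literals are negated: if $f=\bigwedge_{i\in P}z_i\wedge\bigwedge_{j\in N}\overbar{z_j}$ and $x^{(0)}$ is any point with $f(x^{(0)})=1$, then $x^{(0)}_i=1$ for all $i\in P$ and $x^{(0)}_j=0$ for all $j\in N$, so the bit-flip involution $\sigma:\{0,1\}^n\to\{0,1\}^n$ that flips exactly the coordinates $k$ with $x^{(0)}_k=0$ satisfies $f\circ\sigma=\bigwedge_{k\in P\cup N}z_k$, a \emph{monotone} conjunction (flipping also touches coordinates outside $P\cup N$, but those do not appear in $f$). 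Since $\sigma$ is a bijection with $\sigma^{-1}=\sigma$, a query to $g:=f\circ\sigma$ at $y$ is answered by one query to $f$ at $\sigma(y)$, and a sample from $\sigma(\calD)$ is produced from one sample of $\calD$; hence running a monotone-conjunction tester on $(g,\sigma(\calD))$ costs what it costs on the original instance.

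Concretely I would first draw $m=O((1/\epsilon)\log(1/\delta))$ samples from $\calD$ and query $f$ on them; if none satisfies $f$, \emph{accept}; otherwise pick one satisfying sample $x^{(0)}$, form $\sigma$ as above, and run the one-sided tester of Theorem~\ref{uppertheorem} on $(g=f\circ\sigma,\,\sigma(\calD))$ with parameter $\epsilon$, returning its verdict. One-sidedness is inherited: if $f$ is a general conjunction then we either accept outright, or $x^{(0)}$ truly satisfies $f$ and $g$ is truly a monotone conjunction, so Theorem~\ref{uppertheorem}'s tester accepts with probability $1$.

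For soundness, suppose $f$ is $\epsilon$-far from every general conjunction with respect to $\calD$. Because the constant-$0$ function is itself a general conjunction (e.g.\ $z_1\wedge\overbar{z_1}$), distance at least $\epsilon$ to it forces $\Pr_{z\sim\calD}[f(z)=1]\ge\epsilon$, so with probability $\ge 1-\delta$ the sample set contains a satisfying assignment and we proceed to the tester. Conditioned on that, for every monotone conjunction $h$ one has $\Pr_{z\sim\sigma(\calD)}[g(z)\ne h(z)]=\Pr_{z\sim\calD}[f(z)\ne h(\sigma(z))]$ and $h\circ\sigma$ is a general conjunction, so $g$ must be $\epsilon$-far from monotone conjunctions with respect to $\sigma(\calD)$; thus Theorem~\ref{uppertheorem}'s tester rejects with probability $\ge 2/3$. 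Taking $\delta$ a suitably small constant keeps the total error below $1/3$, and the query count is that of Theorem~\ref{uppertheorem} plus $O((1/\epsilon)\log(1/\delta))$, i.e.\ $O((n^{1/3}/\epsilon^5)\cdot\log^7(n/\epsilon))$.

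The one place that needs genuine care is the degenerate case: a general conjunction can be unsatisfiable, and the reduction itself needs a planted satisfying assignment to ``un-negate'' the literals, so it is not a priori clear the reduction is well-defined. The resolution is exactly the soundness observation above --- constant $0$ being a general conjunction means an $\epsilon$-far $f$ has $\Pr_\calD[f=1]\ge\epsilon$, so a satisfying assignment surfaces among $O(1/\epsilon)$ samples --- after which the remaining facts (that $\sigma$, being a bijection, preserves $\calD$-distance to the class in the direction we need, and that $g=f\circ\sigma$ is defined and a monotone conjunction on all of $\{0,1\}^n$ so there are no support issues with the tester's queries) are routine.
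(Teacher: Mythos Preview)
Your proposal is correct and follows essentially the same reduction as the paper: find a satisfying assignment $x^{(0)}$ via $O(1/\epsilon)$ samples (accepting if none is found, justified by the all-$0$ function being a general conjunction), use it to define the coordinate-flip involution $\sigma$, and run the monotone-conjunction tester of Theorem~\ref{uppertheorem} on $(f\circ\sigma,\sigma(\calD))$. Your $\sigma$ is precisely the paper's map $x\mapsto x^{(C)}$ with $C=\zero(x^{(0)})$, and your distance-preservation argument matches the contrapositive direction of the paper's Lemma~\ref{conjunctionreduction}.
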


Second, our lower bound can be extended to the distribution-free testing of
  general conjunctions, deci\-sion lists, as well as linear threshold functions
  (see their definitions in Section \ref{sec:pre}), improving the previous
  best lower bound of $\tilde{\Omega}(n^{1/5})$ by Glasner and Servedio \cite{GlasnerServedio}
  for these classes.
For general conjunctions, our bounds are also tight up to a poly-logarithmic factor of $n$ when
  $\epsilon$ is a constant.

\begin{theorem}  \label{lowertheoremgeneral}
There exists a universal constant $\epsilon_0>0$ such that any
  two-sided distribution-free algorithm\\ for testing whether an unknown Boolean
  function is a general conjunction versus $\epsilon_0$-far from
  general\\ conjunctions
  with respect to an unknown distribution
  must make $\Omega({n^{1/3}}/{\log^3 n})$ queries.
The same lower\\ bound holds for testing decision lists and testing linear threshold functions.
\end{theorem}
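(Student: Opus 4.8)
The plan is to recycle the hard instances built for Theorem~\ref{lowertheorem}. That lower bound is obtained, via Yao's principle, from a pair of distributions $\yesd$ and $\nod$ over pairs $(f,\calD)$: a draw from $\yesd$ has $f$ a monotone conjunction, a draw from $\nod$ has $f$ that is $\epsilon_0$-far from every monotone conjunction with respect to $\calD$, and no deterministic $o(n^{1/3}/\log^3 n)$-query algorithm distinguishes the two with constant advantage. To carry this over to a class $\fC\in\{\conj,\dl,\ltf\}$ it suffices to verify: (i) every $f$ in the support of $\yesd$ lies in $\fC$; and (ii) for some universal constant $\epsilon_0'>0$, every $(f,\calD)$ in the support of $\nod$ has $f$ that is $\epsilon_0'$-far from $\fC$ with respect to $\calD$. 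Given (i) and (ii), $(\yesd,\nod)$ is a hard instance for testing $\fC$ with distance parameter $\epsilon_0'$, and the claimed $\Omega(n^{1/3}/\log^3 n)$ bound follows for all three classes.

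Item (i) is immediate and uniform: a monotone conjunction $\bigwedge_{i\in S}z_i$ is a general conjunction; it is also the decision list that tests $\overbar{z_i}$ for $i\in S$ in any fixed order, outputting $0$ on the first satisfied test and $1$ by default; and it is the linear threshold function $\mathrm{sign}(\sum_{i\in S}z_i-|S|+1/2)$. So the YES side of the construction needs no modification, and the entire content lies in item (ii).

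For $\fC=\conj$ I expect the far-from-$\mconj$ argument inside the proof of Theorem~\ref{lowertheorem} to transfer almost verbatim: the points on which $\calD$ is supported already witness disagreement with any conjunction on a constant fraction of $\calD$-mass, and since the obstruction exploited by the construction is monotone in nature, permitting negated literals does not help a candidate conjunction fit those points, so one still obtains a universal constant $\epsilon_0'$. The genuinely new work is $\fC\in\{\dl,\ltf\}$, where the target class is far richer than conjunctions and can ``explain away'' a lot of apparent structure. Here the plan is to show that agreeing with the NO function on all but an $\epsilon_0'$ fraction of $\calD$-mass forces a rigid feature that the construction precludes: a decision list consistent with that much mass would have to impose one global priority order on the ``blocks'' used in the construction, whereas $\calD$ is supported on configurations that are pairwise out of order, so no decision list of any length fits more than a $1-\epsilon_0'$ fraction; and a linear threshold function consistent with that much mass would have to separate two families of points in the support that are interleaved in a way no halfspace can, which one checks by an anti-concentration / dimension argument for halfspaces (equivalently, by exhibiting inside the support a blown-up copy of a fixed constant-size function that is provably not an LTF). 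In each case the bound on the agreeing fraction comes from a union bound over the relevant combinatorial types --- decision lists, respectively halfspaces restricted to the support, of which there are at most $2^{O(n\log n)}$ each --- together with a concentration estimate for the random NO instance.

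The main obstacle is exactly this last step: that the NO instances remain $\Omega(1)$-far from decision lists and from linear threshold functions is \emph{not} automatic from Theorem~\ref{lowertheorem}, and establishing it may require revisiting the parameters of that construction (block sizes and the distribution defining $\calD$) so that the rigidity it supplies is quantitatively strong enough --- e.g. so that the non-separable / high-alternation gadget sits on an $\Omega(1)$ fraction of $\calD$'s mass and survives flipping an $\epsilon_0'$ fraction of values, and so that the union bounds above close. I would package this as a single lemma asserting that the NO instances are simultaneously far from $\conj$, $\dl$, and $\ltf$, prove it via the two arguments just sketched, and then the theorem follows from (i) and (ii).
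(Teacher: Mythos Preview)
Your plan is essentially what the paper does for general conjunctions and decision lists, but it has a genuine gap for linear threshold functions.

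For $\conj$ and $\dl$ the paper indeed reuses $(\yesd,\nod)$ unchanged and proves a single lemma that, with probability $1-o(1)$, a NO instance is $\Omega(1)$-far from every decision list (hence from every general conjunction). The argument is a short combinatorial one: whenever $C_i\cap C_j=\emptyset$, no decision list can agree with $g$ on all six strings $a^i,b^i,c^i,a^j,b^j,c^j$, and a constant fraction of the pairs $(C_{2i-1},C_{2i})$ are disjoint. This is much more concrete than your ``global priority order on blocks'' sketch, and it does not use any union bound over decision lists; you should expect to prove it this way rather than via concentration over $2^{O(n\log n)}$ types.

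For $\ltf$, however, item (ii) of your plan is simply false for the $\nod$ of Theorem~\ref{lowertheorem}. The support of $\calD_g$ consists only of strings with $|\zero(\cdot)|\in\{\ell/2,\ell\}$, and $g$ is $1$ on the first kind and $0$ on the second. So the halfspace $h(x)=\mathbb{1}\big[\sum_k x_k\ge n-\ell/2\big]$ agrees with $g$ on the \emph{entire} support of $\calD_g$; the NO instances are distance~$0$ from $\ltf$. No parameter tweak or union bound will repair this --- the obstruction is structural, not quantitative. The paper therefore builds a \emph{different} pair $(\yesd^*,\nod^*)$: it adds $\mathbf{1}^n$ to the support with weight $1/4$ and sets $g(\mathbf{1}^n)=0$, so that the coordinatewise identity $a^i+b^i=c^i+\mathbf{1}^n$ forces any LTF to err on at least one of $a^i,b^i,c^i,\mathbf{1}^n$. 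But then no monotone conjunction can serve as a YES function (it would have $f(\mathbf{1}^n)=1$), so the YES functions are replaced by carefully designed LTFs, and the indistinguishability argument is redone for the new pair. Your proposal needs this additional construction for the LTF case.
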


In most part of the paper we focus on the distribution-free testing of monotone conjunctions
  (except~for Sections
 \ref{upperboundgeneral},  \ref{lowerboundgeneral}~and  \ref{LTFsec}).~We start with some intuition behind our new algorithm and lower bound construction for monotone conjunctions, and compare our approaches and techniques
  with those of \cite{GlasnerServedio} and \cite{DolevRon}.

\subsection{The Lower Bound Approach}

We start with the lower bound because our algorithm was indeed
  inspired by obstacles we encountered when attempting to push it further
  to match the upper bound of Dolev and Ron \cite{DolevRon}.

We follow the same high-level approach of Glasner and Servedio \cite{GlasnerServedio}.
They define two distributions $\yesd$ and $\nod$:
  in each pair $(f,\calD_f)$ drawn from $\yesd$, $f$ is a monotone conjunction, whereas
  in each $(g,\calD_g)$ drawn from $\nod$, $g$ is constant-far from monotone conjunctions with respect to $\calD_g$.
Then they show that no algorithm with $\tilde{O}(n^{1/5})$ queries
  can distinguish them.
We briefly review their construction and arguments.

Both distributions start by sampling $m=n^{2/5}$ pairwise disjoint sets $C_i$ of size $n^{2/5}$ each.
Each $C_i$ is then randomly partitioned into two disjoint sets $A_i,B_i$ of the same size, with
  a special index $\alpha_i$ randomly sampled from $A_i$. Let $a^i,b^i,c^i$ denote the strings
  with $A_i=\zero(a^i)$, $B_i=\zero(b^i)$, and $C_i=\zero(c^i)$, where we write $\zero(x)=\{i:x_i=0\}$.
For $\yesd$, $f$ is the conjunction of $x_{\alpha_i}$'s, $i\in [m]$, and
   $x_j$'s, $j\notin \cup_i C_i$. So $f(a^i)=f(c^i)=0$ and $f(b^i)=1$.
$\calD_f$ puts weight $2/(3m)$ on $b^i$ and $1/(3m)$ on $c^i$.
The definition~of~$\nod$~is much more involved.
  $g$ sets $g(a^i)=g(b^i)=1$ and $g(c^i)=0$; $\calD_g$ is uniform over all $3m$ strings
  $\{a^i,b^i,c^i\}$.
On the one hand, $g$ is clearly far from monotone conjunctions with respect to $\calD_g$.
On the other hand,
by the birthday paradox, any algorithm that draws $n^{1/5}$ samples
  with high probability gets at most one sample from each triple $(a^i,b^i,c^i)$,
  and information theoretically cannot distinguish $\yesd$ and $\nod$: What the algorithm
  sees is just a bunch of pairwise disjoint sets of two sizes, as $\zero(x)$ of samples $x$ received.
In discussion below we refer to them as the sets the algorithm receives in the sampling phase
  \footnote{Without
  loss of generality, we may always assume that an algorithm starts by a sampling phase when
  it receives all the samples
  drawn from $\calD$. After that it only queries the black-box oracle.}.

The real challenge for Glasner and Servedio is to define $g$
  in $\nod$ carefully on strings of $\{0,1\}^n$ outside of $\{a^i,b^i,c^i\}$
  such that even an algorithm with access to a black-box oracle cannot distinguish them.
For~this purpose, $g$ follows $f$ by setting $g(x)=0$ whenever $x_j=0$ for some $j\notin \cup_i C_i$.
This essentially discourages a reasonable algorithm from querying
  $z$ with $z_j=0$ for some $j$ outside of the sets it received in the sampling phase: for any such $z$,
  both $f$ and $g$ return $0$ with probability $1-n^{1/5}$ so with only $n^{1/5}$ queries
  the risk~is~too high to take.
Knowing that an algorithm only  queries such strings, 
  \cite{GlasnerServedio} sets up $g$ so that an algorithm can distinguish
  $g$ from $f$ only when it incurs an event that is unlikely to happen (e.g., hitting
  $z_{\alpha_i}=0$ in some~$A_i$ with a query $z$ that has a small $\zero(z)\cap A_i$).
When events like this do not happen, the algorithm can be successfully simulated
  with no access to the black-box oracle.
This finishes the proof.

Our lower bound proof follows similar steps as those of Glasner and Servedio \cite{GlasnerServedio}.
\emph{The improvement mainly
comes from a more delicate construction of the two distributions $\yesd$ and $\nod$,
  as well as a tighter analysis on a no-black-box-query simulation of any testing algorithm with
  access to both oracles.}
The first difficulty we encountered is a dilemma in the construction: There are only $n$ indices in total
  but we want~the following three things to happen at the same time:
We need $n^{2/3}$ sets $C_i$'s so that the birthday paradox
  still applies for $n^{1/3}$ queries; We would like each $C_i$ to have size $n^{2/3}$
  to survive black-box queries; Also $\cup_i C_i$ is better small compared to $n$ so one can still
  argue that no reasonable algorithm makes any crazy black-box query $z$
  with zero entries outside of the sets it receives.
There is simply no way to satisfy all these conditions;
  Glasner and Servedio had the best parameters in place and they are tight in more than one places.

It seems that the only possible solution is to allow $C_i$'s to have significant overlap with each other.~This, however, makes the analysis more challenging, since an algorithm may
  potentially gain crucial information from
  intersections of sets it receives in the sampling phase.
Informally we first randomly pick a set $R$~of~size $n/2$ and randomly partition it into
  $n^{1/3}$ disjoint blocks of size $n^{2/3}$ each.
Each of the $n^{2/3}$ sets $C_i$'s consists of $2\log^2 n$ random blocks and
  two special indices $\alpha_i$ and $\beta_i$ that are unique to $C_i$.
Each $C_i$ is then partitioned into~$A_i$, $B_i$ with $\log^2 n$ blocks each,
  which also receive $\alpha_i$ and $\beta_i$, respectively.
An important property from our~setup (and simple calculation) that is crucial
  to our analysis later on is that even with
  $\tilde{O}(n^{1/3})$ sets drawn uniformly,
  most likely only a $o(1)$-fraction of each set is covered by other sets sampled.
The rest of our $\yesd$ and $\nod$ is similar to \cite{GlasnerServedio},
  but with a more intricate $\nod$ function $g$ outside of the support of $\calD_g$.

Our distributions $\yesd$ and $\nod$ work well against any algorithm with no
  access to the black-box oracle.
The technically most challenging part is to show that
  any given algorithm can be simulated closely without the black-box oracle.
Note that $\cup_i C_i$ above is about $n/2$.
An algorithm with $n^{1/3}$ many queries has a much stronger incentive to take the risk
  and query $z$ with $z_j=0$, for some $j$ outside of the sets sampled.~This then demands a more sophisticated analysis  to
  characterize every possible loophole an algorithm may explore,~in distinguishing the
  two distributions $\yesd$ and $\nod$.
At the end, we need to fine-tune the construction of $\nod$
  to really fit the analysis perfectly (not surprising given the upper bound) so that we
  can manage to bound the probability of each loophole, and
  show that the no-black-box-query simulation succeeds most of the time.

\subsection{The Approach of Our Algorithm}

We now describe the high-level approach of our algorithm.
For clarity, we assume here that $\epsilon$ is a constant.
We first review the $\tilde{O}(n^{1/2})$-query algorithm of Dolev and Ron \cite{DolevRon}.
An ingredient from \cite{DolevRon}, which we also use heavily as a subroutine,
  is a deterministic binary search procedure:
upon $\smash{x\in f^{-1}(0)}$, it attempts to find an index $i\in \zero(x)$
  such that $f(\{i\})=0$.\footnote{For convenience we extend $f$ to subsets of $[n]$,
    with $f(A)$ defined as $f(z)$ with $A=\zero(z)$.}
If it fails on $x$, then $f$ is not a monotone conjunction; otherwise, let
  $h(x)$ denote the index found, called the representative index of $x$ \cite{DolevRon}.
Roughly speaking, the algorithm of Dolev and Ron draws $n^{1/2}$ samples from $\calD$
  and uses the binary search procedure to compute the representative index
  $h(x)$ of each sample $x$ from $f^{-1}(0)$.
Then the algorithm rejects if $y_\alpha=0$ for some sample $y\in f^{-1}(1)$
  and some representative index $\alpha$ found.
The algorithm is one-sided. But to reject with high probability when $f$ is far from monotone conjunctions
  with respect to $\calD$, $n^{1/2}$
  samples seem necessary.

Our algorithm was inspired by obstacles encountered when trying to
  improve the $\tilde{\Omega}(n^{1/3})$ lower bound.
To give some intuition, consider the same distribution of triples of sets, $(A_i,B_i,C_i)$,
  drawn as in the lower bound proof sketch, with $m\approx n^{2/3}$ many $C_i$'s each of size $\approx n^{2/3}$.
Let $\calD$ be the uniform distribution over $\{a^i,b^i,c^i\}$,
  with $g$ satisfying $g(a^i)=g(b^i)=1$ and $g(c^i)=0$.
Now consider the following scenario~where an adversary tries to fill in entries of $g$ outside
  of $\{a^i,b^i,c^i\}$,  aiming
  to fool algorithms with a small number of queries as a monotone conjunction.
An obstacle for the adversary is the following testers: Let $t\approx  n^{1/3}$.
\begin{flushleft}\begin{enumerate}
\item[] \textbf{Tester 1.} Draw $t$ samples $y^1,\ldots,y^t$ from $g^{-1}(1)$ with respect to $\calD$.
Let $E_i=\zero(y^i)$, $E=\cup_i E_i$.
Given the definition of $\calD$ and that $g(a^i)=g(b^i)=1$ and $g(c^i)=0$, each $E_i$ is either $A_k$ or $B_k$.
Repeat $t$ times: pick a subset $Z$ of $E$ of size $t$ uniformly at random and query $z$ with $\zero(z)=Z$.
(Note that if $g$ is a monotone conjunction, then $E$ cannot contain any index of a variable that belongs to the conjunction and hence for every $Z\subseteq E$ and $z$ with $\zero(z)=Z$, $g$ must return $1$ to query $z$.)\vspace{-0.04cm}

\item[] \textbf{Tester 2.} Draw $t-1$ samples $y^1,\ldots,y^{t-1}$ from $g^{-1}(1)$ with respect to $\calD$,
  and one sample $x$ from $g^{-1}(0)$ (so $\zero(x)=C_k$ for some $k$). Define $E_i$ and $E$ similarly.
Use the binary search procedure to find the representative index $h(x)$ of $x$;
  for the sake of discussion here assume that it finds the special index $\alpha_k$ in $C_k$
  if $\zero(x)=C_k$ (reject if $\alpha_k\in E$).
Pick a subset $Z$ of $E$ of size $t-1$ uniformly at random, and query $z$ with $\zero(z)=Z\cup \{\alpha_k\}$.
(Note that if $g$ is a monotone conjunction, then $h(x)$ must be the index of a variable
  in the conjunction  and hence, we have $h(x)\notin E$ and for every $Z\subseteq E$ and $z$ with $\zero(z)=Z\cup \{h(x)\}$, $g$ must return 0 to query $z$.)
\end{enumerate}\end{flushleft}

Consider an algorithm that runs both testers with independent samples.
Clearly $g$ fails and gets rejected if it returns $0$ to a query $z$ from Tester 1 or it returns $1$ to a
  query $z$ from Tester 2.
It turns out that there is no way to design a $g$ that returns the correct bit
  most of the time for both testers.
To see this is the case, assume for now that about half of the $E_i$'s in Tester 1
  are indeed $A_i$'s so each of them contains a special and unique index
  $\alpha_i$; in total there are $\Omega(t)$ many of them in $E$.
Given that $|E|\le n$, and we repeat $t$ times in picking $z$, most likely one of the
  strings $z$ queried has an $\alpha_i\in \zero(z)$ and it is also the only index in
  $\zero(z)\cap E_i^*$, where we let $E_i^*$ denote the indices that are unique to $E_i$ among all $E_j$'s.
(For the latter, the intuition is that there simply cannot be too many
  large $E_i^*$ because they are disjoint and their union is $E$.)

For such a string $z$ drawn and queried in Tester 1,
  $g$ has to return $1$. 
However, the distribution of such $z$ is very similar to the distribution of
  $z$ queried in Tester 2, where an $\alpha_k$ is first picked randomly (by drawing
  a $C_k$ and running the binary search procedure on it to reveal $\alpha_k$)
  and then unioned with a set of $t-1$ indices drawn uniformly from $E$ obtained from
  $t-1$ samples from $g^{-1}(1)$.

This is essentially how our algorithm works. It consists of two stages, each of which implements~one~of the two testers.
The main challenge for us is the analysis to show that it works for any input pair $(f,\calD)$
  that not necessarily looks like those constructed in $\nod$.
At a high level, we show that if $f$ is far from monotone conjunctions  with respect to $\calD$
  and passes stage 1 with high probability,
  then it fails stage 2~and~gets rejected with high probability
  since the two distributions of $z$ queried in the two stages are very close to each other.

An important ingredient of our analysis is the notion of a \emph{violation bipartite graph} $G_f$
  of a pair $(f,\calD)$.
Compared to the \emph{violation hypergraph} $H_f$ introduced by Dolev and Ron, our
  bipartite graph $G_f$ is~easier~to work with and its vertex covers
  also characterize the distance between $f$ and monotone conjunctions (similar to
  the violation hypergraph of \cite{DolevRon}).
In particular, our analysis of correctness
  heavily relies on a \emph{highly regular} bipartite subgraph $G_f^*$ of $G_f$,
  of which every vertex cover still has total weight $\Omega(\epsilon)$.
The regularity of $G_f^*$ plays a critical role in our comparison of the two stages.
More specifically, it helps bound the double counting when we lower bound the
  probability of $(f,\calD)$ failing stage 2, assuming that it passes stage 1 with high probability.
\vspace{0.26cm}

\noindent\textbf{Organization.} We define the model of
  distribution-free testing, and introduce some useful notation in~Section \ref{sec:pre}.
We present the new algorithm for monotone conjunctions and its analysis in Section \ref{sec:upper},
  followed~by the lower bound proof in Section \ref{sec:lower}.
We then extend the upper bound to  general conjunctions in Section
  \ref{upperboundgeneral}, and extend the lower bound to general conjunctions
  and decision lists in Section \ref{lowerboundgeneral}, and to linear threshold functions in
  Section \ref{LTFsec}.

\section{Preliminaries}\label{sec:pre}

We review the model of
  distribution-free property testing and then introduce some useful notation.

Let $f,g:\{0,1\}^n\rightarrow \{0,1\}$ denote two Boolean functions over
  $n$ variables, and $\calD$ denote a probability distribution over $\{0,1\}^n$.
We define the distance between $f$ and $g$ with respect to $\calD$ as
$$
\dist_\calD(f,g)=\Pr_{{z}\in \calD} \big[f({z})\ne g(z)\big].
$$
Given a class $\fC$ of Boolean functions over $\{0,1\}^n$, we define
$$
\dist_\calD(f,\fC)=\min_{g\in \fC}\Big(\dist_\calD(f,g)\Big)
$$
as the distance between $f$ and $\fC$ with respect to $\calD$.
We also say $f$ is $\epsilon$-far from $\fC$ with respect to $\calD$ for some $\epsilon\ge 0$
  if $\dist_\calD(f,\fC)\ge \epsilon$.
Now we define distribution-free testing algorithms. 

\begin{definition}
Let $\fC$ be a class of Boolean functions over $\{0,1\}^n$.
A distribution-free testing algorithm~$T$ for $\fC$ is a probabilistic
  oracle machine with access to a pair $(f,\calD)$, where
  $f$ is an unknown Boolean function $f:\{0,1\}^n\rightarrow \{0,1\}$
  and $\calD$ is an unknown probability distribution   over $\{0,1\}^n$, via
\begin{enumerate}
\item a black-box oracle that returns the value $f(z)$ when $z\in \{0,1\}^n$ is queried; and \vspace{-0.2cm}
\item a sampling oracle that returns a pair $(z,f(z))$
with $z$ drawn independently from $\calD$ each time.
\end{enumerate}
The algorithm $T$ takes as input a distance parameter $\epsilon>0$
  and satisfies
for any $(f,\calD)$:
\begin{enumerate}
\item If $f\in \fC$, then $T$ accepts with probability at least $2/3$; and\vspace{-0.2cm}
\item If $f$ is $\epsilon$-far from $\fC$ with respect to $\calD$,
  then $T$ rejects with probability at least $2/3$.
\end{enumerate}
We say an algorithm is \emph{one-sided} if it always accepts a function $f$ in $\fC$.
\end{definition}

In this paper we focus on the distribution-free testing of
  $\mconj$, the class of all monotone conjunctions (or monotone monomials as in
  \cite{DolevRon}): $f:\{0,1\}^n\rightarrow \{0,1\}$ is in $\mconj$
  if there exists an $S\subseteq [n]$ with\vspace{-0.06cm}
$$
f(z_1,\ldots,z_n)=\bigwedge_{i\in S} z_i.\vspace{-0.06cm}
$$
Note that $f$ is the all-$1$ function when $S$ is empty.
{In addition to monotone conjunctions we are
  interested~in the distribution-free testing of general conjunctions, decision lists, and linear threshhold functions:
\begin{flushleft}\begin{itemize}
\item We say $f:\{0,1\}^n\rightarrow \{0,1\}$ is a general conjunction if there exist two sets
  $S,S'\subseteq[n]$ with\vspace{-0.08cm}
$$
f(z_1,\ldots,z_n)=\left(\bigwedge_{i\in S} z_i\right)\bigwedge \left(\bigwedge_{i\in S'} \overbar{z_i}\right).\vspace{-0.5cm}
$$.
\item A decision list $f:\{0,1\}^n\rightarrow \{0,1\}$ of length $k$ over
  Boolean variables $z_1,\ldots,z_n$ is
  defined by a sequence of $k$ pairs $(\ell_1,\beta_1),\ldots,(\ell_k,\beta_k)$ and a bit
  $\beta_{k+1}$, where $\beta_i\in\{0,1\}$ for all  $i\in [k+1]$ and
  each $\ell_i$ is a literal in $\{z_1,\ldots,z_n,\overbar{z_1},\ldots,\overbar{z_n}\}$.
Given any $z\in \{0,1\}^n$,
   $f(z)$ is determined in the following way:
  $f(z)=\beta_i$ if $i\in [k]$ is the smallest index such that $\ell_i$ is made true by $z$;
  if no $\ell_i$ is true then $f(z)=\beta_{k+1}$.

\item We say $f:\{0,1\}^n\rightarrow \{0,1\}$ is a linear threshold function
 if there exist $w_1,w_2,\ldots,w_n,\theta\in \mathbb{R}$ such that $f(z)=1$ if
$w_1z_1+\cdots+w_nz_n\ge\theta$ and $f(z)=0$ if $w_1z_1+\cdots+w_nz_n<\theta$.
\end{itemize}\end{flushleft}
}

Next we introduce some notation used in the proofs.
Given a positive integer $n$ we let $[n]=\{1,\ldots,n\}$.
Given a distribution $\calD$ over $\{0,1\}^n$   we use $\calD(z)$ to
  denote the probability of a string $z$ in $\{0,1\}^n$
  and $\calD(C)$ to denote the total probability of strings in $C\subseteq \{0,1\}^n$.

We call $x$ a $0$-string (with respect to $f$) if $f(x)=0$, and write $f^{-1}(0)$ to
  denote the set of $0$-strings.
We call $y$ a $1$-string (with respect to $f$) if $f(y)=1$, and write $f^{-1}(1)$ to
  denote the set of $1$-strings.

For both our lower and upper bound proofs, it is easier to
  use the language of sets.
Given $z\in \{0,1\}^n$:\vspace{-0.1cm}
$$
\zero(z)=\big\{i\in [n]: z_i=0\big\}.\vspace{-0.1cm}
$$
For convenience we abuse the notation and allow   $f$ to take as input
  a subset of $[n]$: $f(E)$ is defined as $f(z)$ with $z\in \{0,1\}^n$ and
  $E=\zero(z)$.
This should be clear from~the context, since we use lowercase letters
  for strings and uppercase letters for sets.
We call $A$ a $0$-set if~$f(A)=0$, and $B$ a $1$-set if $f(B)=1$.

We use $\11^n$ to denote the all-$1$ string of length $n$ and drop the $n$ when
  it is clear from the context.

\section{Upper Bound: Proof of Theorem \ref{uppertheorem}}\label{sec:upper}

In this section, we present our one-sided distribution-free tester
  for $\mconj$.
Throughout the section we use $f:\{0,1\}^n\rightarrow \{0,1\}$ to denote
  the unknown Boolean function, and $\calD$ to denote the unknown distribution.

For clarity of the analysis in this section, we always write $x$ to denote a string from $f^{-1}(0)$,
  $y$ to denote a string from $f^{-1}(1)$, and $z$ to denote a string with $f(z)$ unknown
  (or we do not care about $f(z)$).

\subsection{Binary Search, Empty Strings, and Representative Indices}

The algorithm of Dolev and Ron \cite{DolevRon} uses a deterministic binary search procedure which,
  given a string $x\in f^{-1}(0)$, tries to find an index $i\in \zero(x)$
  such that $f(\{i\})=0$.
(Note that such an $i$ always exists if $f$ is in $\mconj$.)
Our algorithm also uses it as a subroutine  so
  we include it in Figure \ref{BinarySearch} for completeness.

We record the following property of the binary search procedure:

\begin{property}\label{simpleproperty}
The binary search procedure uses $O(\log n)$ many queries.
Given as an input $x\in f^{-1}(0)$, it returns either $\textsf{\emph{nil}}$ or an index $i\in \zero(x)$ such
  that $f(\{i\})=0$.
The former never happens if $f\in \mconj$.
\end{property}

Given $x\in f^{-1}(0)$, we write $h(x)\in [n]\cup \{\nil\}$ to denote the
   output of the binary search procedure~on $x$ ($h(\cdot)$ is well-defined since the procedure is deterministic).
We follow \cite{DolevRon} and call $x\in f^{-1}(0)$ an \emph{empty} string~(with respect to $f$) if $h(x)=\nil$,
  and call $h(x)\in [n]$ the \emph{representative index} of $x$ (with respect to $f$) when $h(x)\ne \nil$.

\begin{figure}[t!]
\textbf{Algorithm 1: Binary Search.} Input: $x\in f^{-1}(0)$.
\begin{flushleft}\begin{enumerate}
\item Let $Z=\zero(x)$. If $Z=\emptyset$, return \textsf{nil}; if $|Z|=1$, output the only
  index in $Z$.\vspace{-0.1cm}

\item While $|Z|\ge 2$ do\vspace{-0.1cm}
\begin{itemize}
\item[--] Let $Z_0$ be the subset of $Z$ that contains the smallest $\lceil |Z|/2\rceil$ indices in $Z$,
  and  $Z_1=Z\setminus Z_0$.\vspace{-0.06cm}
\item[--] Query both $f(Z_0)$ and $f(Z_1)$.\vspace{-0.06cm}
\item[--] If $f(Z_0)=0$, set $Z=Z_0$; if $f(Z_0)=1$ but $f(Z_1)=0$, set $Z=Z_1$; otherwise, return \textsf{nil}.\vspace{-0.08cm}
\end{itemize}
\item Return the only element that remains in $Z$.\vspace{-0.2cm}
\end{enumerate}\end{flushleft}
\caption{The binary search procedure from Dolev and Ron \cite{DolevRon}.\vspace{0.08cm}}\label{BinarySearch}
\end{figure}

\subsection{A One-sided Algorithm for Testing Monotone Conjunctions}

We use the following parameters in the algorithm and its analysis:
\begin{equation}\label{parameters1}
{d=\frac{\log^2 (n/\epsilon)}{\epsilon},\quad\ d^*=d^2/\epsilon,\quad\  r=n^{1/3},\quad\
 t= d\cdot r \quad\ \text{and}\quad \ s=t\log n.}
\end{equation}

Our algorithm is presented in Figure \ref{fig:main-algorithm}, which consists
  of three stages.
We refer to it as \emph{Algorithm 2} and
  start its analysis with the following simple observations.

\begin{observation}\label{obob1}
The number of queries used by Algorithm 2 is ${O((n^{1/3}/\epsilon^5)\cdot\log^7(n/\epsilon)).}$
\end{observation}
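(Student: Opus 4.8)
The plan is a straightforward accounting of the cost of \emph{Algorithm 2} across its three stages, plugging in the parameter settings in \eqref{parameters1}. I would go through the figure stage by stage and, in each, bound three contributions: the number of calls to the sampling oracle, the number of direct black-box queries to strings $z$ (those of the form $\zero(z)=Z$ or $\zero(z)=Z\cup\{\alpha\}$ issued in the two ``tester'' stages), and the number of invocations of the binary search procedure. For the last one I would invoke Property~\ref{simpleproperty}: each invocation costs only $O(\log n)$ queries, so its total contribution is $O(\log n)$ times the number of $0$-strings on which binary search is run.

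The point is that every stage is built from identical rounds, and in a single round the algorithm draws $O(t)$ samples, runs binary search on at most $O(t)$ of them --- contributing $O(t\log n)=O(s)$ queries --- and then issues at most $O(s)$ additional direct queries; hence a round costs $O(s)$ queries and $O(s)$ samples. It then remains to multiply by the number of rounds, which for the dominant stage is $O(d^*/\epsilon)$. Substituting $r=n^{1/3}$, $d=\log^2(n/\epsilon)/\epsilon$, $t=dr$, $s=t\log n$ and $d^*=d^2/\epsilon$, the dominant term becomes
$$
O\!\left(s\cdot\frac{d^*}{\epsilon}\right)=O\!\left(t\log n\cdot\frac{d^2}{\epsilon^2}\right)=O\!\left(\frac{\log^2(n/\epsilon)}{\epsilon}\cdot n^{1/3}\cdot\log n\cdot\frac{\log^4(n/\epsilon)}{\epsilon^4}\right)=O\!\left(\frac{n^{1/3}}{\epsilon^5}\cdot\log^7(n/\epsilon)\right),
$$
and one checks directly that the sampling-oracle cost and the contributions of the remaining stages are bounded by the same quantity, so they are absorbed into it.

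The only thing requiring genuine care here is the bookkeeping of logarithmic factors, since the target exponent of $\log(n/\epsilon)$ is exactly $7$: each binary search call hides one $\log n$, $s$ already carries one $\log n$, and $d$ and $d^*$ carry $\log^2(n/\epsilon)$ and $\log^4(n/\epsilon)$, which must combine to $\log^7(n/\epsilon)$ and no more --- picking up a stray extra logarithm would be the easiest slip. I would also sanity-check that no quadratic term in $t$ or $s$ creeps in anywhere --- a term of order $t^2$, which is $\Theta(n^{2/3})$ up to factors polynomial in $\log n$ and $1/\epsilon$, would immediately break the bound --- i.e.\ that the repetition inside each round picks a single random subset $Z$ and makes $O(1)$ queries per sample rather than $\Theta(t)$ of them.
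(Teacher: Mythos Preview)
Your accounting is correct and the paper offers no proof (it is stated as an Observation), so there is nothing substantive to compare against. One small bookkeeping slip: a ``round'' in Stage~2 (and likewise the single Stage~1 round) actually draws $3t/\epsilon$ samples, not $O(t)$, and there are $d^*+1$ such rounds, not $O(d^*/\epsilon)$; you have swapped the location of one $1/\epsilon$ factor between the per-round cost and the number of rounds. The product is unaffected, so your displayed computation $O(s\cdot d^*/\epsilon)=O\!\big((n^{1/3}/\epsilon^5)\log^7(n/\epsilon)\big)$ remains the correct dominant term, and the subordinate contributions (the $2s$ direct queries in Stage~1 and the $O(\log n)$ queries per Stage~2 iteration) are indeed absorbed into it.
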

\begin{observation}\label{obob2}
All queries to the sampling oracle are made in Stage 0.
\end{observation}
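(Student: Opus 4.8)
The statement to prove is Observation~\ref{obob2}: all queries to the sampling oracle are made in Stage~0 of Algorithm~2.

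\medskip

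The plan is to verify this by direct inspection of the pseudocode of Algorithm~2 (Figure~\ref{fig:main-algorithm}), tracking every line that could possibly invoke an oracle. First I would recall that the algorithm has access to exactly two oracles: the sampling oracle, which returns a pair $(z,f(z))$ with $z\sim\calD$, and the black-box oracle, which returns $f(z)$ on a queried $z$. The claim is purely syntactic: it asserts that the only lines referencing the sampling oracle sit in Stage~0, and that Stages~1, 2, and~3 touch $\calD$ nowhere. So the proof is essentially a line-by-line audit.

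\medskip

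The key steps, in order, are: (i) Point out that Stage~0 is exactly where the algorithm draws its batch of samples from $\calD$ (both the $1$-strings $y^1,\dots$ used to build the sets $E_i$ and the $0$-strings $x$, to whatever counts the parameters in~\eqref{parameters1} prescribe), so of course the sampling oracle is used there. (ii) Observe that Stages~1 and~2 only form subsets $Z$ of the already-received sets $E$ (via internal, private coin tosses of the algorithm, not the sampling oracle) and then call the black-box oracle on the corresponding strings $z$ with $\zero(z)=Z$ (in Stage~2, with an extra representative index adjoined). (iii) Observe that the binary search subroutine (Algorithm~1, Figure~\ref{BinarySearch}), which Stage~2 calls on $0$-strings, itself only queries the black-box oracle (this is part of Property~\ref{simpleproperty}); it never samples. (iv) Conclude that no line outside Stage~0 invokes the sampling oracle, which is exactly the assertion.

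\medskip

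There is essentially no obstacle here — this is a bookkeeping observation whose purpose is to license, in later arguments, the standard reduction in which one fixes the whole sample set once and for all at the start and then reasons about the remaining (black-box-only) computation conditioned on that sample. The only mild subtlety worth a sentence is to make explicit that the random subsets $Z$ chosen inside Stages~1 and~2 are drawn from the algorithm's own internal randomness over the fixed ground set $E$, and are in no way fresh draws from $\calD$; once that is said, the observation follows immediately from reading the figure.
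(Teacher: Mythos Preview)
Your approach is correct and matches the paper's treatment: the paper offers no explicit proof of this observation, taking it as immediate from inspection of Figure~\ref{fig:main-algorithm}, which is precisely the line-by-line audit you propose. Two small slips to clean up: the sets in the actual pseudocode are called $B_i$ and $B$, not $E_i$ and $E$ (the latter appear only in the informal intro sketch), and there is no Stage~3 --- the algorithm has Stages~0, 1, and~2 only.
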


Next we prove that this is indeed a one-sided algorithm for testing
  monotone conjunctions.

\begin{lemma}\label{lem:onesided}
If $f\in \mconj$, then Algorithm 2 always accepts $(f,\calD)$ for any distribution
  $\calD$ over $\{0,1\}^n$.
\end{lemma}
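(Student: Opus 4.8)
The statement to prove is Lemma~\ref{lem:onesided}: if $f\in\mconj$, then Algorithm~2 always accepts $(f,\calD)$. We haven't seen the full algorithm, but from the excerpt the structure is clear: Stage 0 samples from $\calD$; Stages 1 and 2 implement the two testers, which reject only when certain queries $z$ return the "wrong" bit (Tester 1 rejects when $g(z)=0$ on a $z$ with $\zero(z)\subseteq E$ built from $1$-strings; Tester 2 rejects when $g(z)=1$ on a $z$ whose zero-set includes a representative index $h(x)$). The binary search subroutine may also produce $\nil$, which would trigger a rejection.

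The plan is to verify that whenever $f\in\mconj$, say $f(z)=\bigwedge_{i\in S}z_i$, none of the rejection conditions can ever fire, regardless of the random choices made by the algorithm or the distribution $\calD$. First I would invoke Property~\ref{simpleproperty}: since $f\in\mconj$, the binary search procedure never returns $\nil$ on any $x\in f^{-1}(0)$, so no rejection arises from an "empty string." Moreover $h(x)\in\zero(x)$ with $f(\{h(x)\})=0$, which for a monotone conjunction forces $h(x)\in S$. Next, for any $1$-string $y$ (i.e.\ $f(y)=1$), we have $\zero(y)\cap S=\emptyset$; hence if $E$ is a union of zero-sets of $1$-strings, then $E\cap S=\emptyset$. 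Therefore any query $z$ with $\zero(z)\subseteq E$ satisfies $\zero(z)\cap S=\emptyset$, so $f(z)=1$ — this is exactly what Stage~1 (Tester~1) needs, so Stage~1 never rejects. For Stage~2 (Tester~2), the queried string $z$ has $\zero(z)\supseteq\{h(x)\}$ with $h(x)\in S$, so $f(z)=0$; the other rejection trigger there, $h(x)\in E$, cannot happen since $h(x)\in S$ and $E\cap S=\emptyset$. Thus Stage~2 never rejects either. I would also note explicitly that in Stage~0 the sampling oracle always returns a correct pair $(z,f(z))$, so the partition of samples into $f^{-1}(0)$ and $f^{-1}(1)$ is correct and no inconsistency is detected there.

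Concretely the proof is a short case analysis walking through each line of Figure~\ref{fig:main-algorithm} where the algorithm can output \textsf{reject}, and showing each is unreachable when $f\in\mconj$; the two key facts doing all the work are (i) Property~\ref{simpleproperty} (no $\nil$, and $h(x)\in S$), and (ii) monotonicity of conjunctions: $f(z)=1 \iff \zero(z)\cap S=\emptyset$, which is monotone in $\zero(z)$ under inclusion so that querying a subset of a zero-set of a $1$-string yields $1$, and querying a superset of $\{h(x)\}$ for $h(x)\in S$ yields $0$.

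I do not anticipate a real obstacle here — this is the "easy direction," and the lemma is essentially a sanity check that the rejection conditions were designed to be sound. The only mild care needed is to make sure I enumerate \emph{every} place the algorithm can reject (including any implicit rejections inside the binary search subroutine and any consistency check in Stage~0), and to confirm that the random subset choices in Stages~1 and~2 are irrelevant because the argument holds for \emph{every} choice of $Z$. Since the conclusion must be "always accepts" (one-sided with zero error on yes-instances), it suffices that the rejection set is empty as an event, which the above establishes.
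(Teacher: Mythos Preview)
Your proposal is correct and follows essentially the same approach as the paper: enumerate every line where Algorithm~2 can reject and verify none fires when $f\in\mconj$, using Property~\ref{simpleproperty} for the $\nil$ case and the characterization $f(z)=1\iff \zero(z)\cap S=\emptyset$ for the query lines. The paper's proof differs only cosmetically (it phrases the key fact as ``$f(B_1)=\cdots=f(B_k)=1$ implies $f(Z)=1$ for all $Z\subseteq\cup_i B_i$'' rather than explicitly naming $S$), and it also checks the very first rejection point you didn't yet see---\textsf{Reject} if $f(\11^n)=0$---which is immediate since $\11^n$ satisfies every conjunction.
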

\begin{proof}
Since Algorithm 2 always accepts at the end of Stage 2,
  it suffices to show that it never rejects when $f$ is a
  monotone conjunction.
First note that $\smash{f(\11^n)}$ must be $1$ when $f$ is a monotone conjunction.
By Property \ref{simpleproperty}, $h(x)=\nil$ can never happen in Stage 0
  when $f$ is a monotone conjunction and $x\in f^{-1}(0)$.

This leaves us to check lines 1.1, 1.2, 2.1 and 2.2.
Assume that $f\in \mconj$:
\begin{flushleft}\begin{enumerate}
\item If $B_1,...,B_k\subseteq [n]$ satisfy $f(B_1)=\cdots =f(B_k)=1$,
  then every $Z\subseteq \cup_i B_i$ satisfies $f(Z)=1$.
\\ This implies that Algorithm 2 never rejects on line 1.1, 1.2 or 2.1.\vspace{-0.1cm}

\item For line 2.2, $\alpha=h(x)$ implies that $f(\{\alpha\})=0$ which
  implies that $f(P\cup\{\alpha\})=0$ when $f$ is\\ a monotone conjunction.
So Algorithm 2 never rejects on line 2.2.
\end{enumerate}\end{flushleft}
This finishes the proof of the lemma.
\end{proof}

\begin{figure}[t!]
\textbf{\ \ \ \ Algorithm 2: Monotone Conjunctions.\vspace{-0.06cm}}
\begin{spacing}{1.04}\begin{flushleft}\begin{itemize}
\item[] \textbf{Stage 0.}
Query $f(\11^n)$ and \textsf{Reject} if $f(\11^n)=0$.
Make $3t(d^*+1)/\epsilon$ many queries to the sampling\\ oracle.
Let $(z^{i,j},f(z^{i,j}))$ denote the pairs received, for $i\in [d^*+1]$ and $j\in [3t/\epsilon]$.
Run the binary search procedure
  to compute the representative index $h(x)$ for each $x\in f^{-1}(0)$ sampled.
\textsf{Reject}\\ if one of them has $h(x)=\nil$.

\item[] \textbf{Stage 1.}
\textsf{Accept} if the number of $j\in [3t/\epsilon]$ with $z^{1,j}\in f^{-1}(1)$ is less than $t$; 
  otherwise, we let $y^1,\ldots,y^t$ be
  the first $t$ (not necessarily distinct) $1$-strings in $(z^{1,j})$.
 Let $B_i=\zero(y^i)$, $B=\cup_i B_i$.\vspace{0.1cm}

\subitem \hspace{-0.3cm}1.1. Repeat $s$ times: Draw an index $i$ from $B$ uniformly at random.
  \textsf{Reject} if $f(\{i\})=0$.\vspace{-0.04cm}

\subitem \hspace{-0.3cm}1.2. Repeat $s$ times: Draw
  a subset $Z\subseteq B$ of size $r$ uniformly at random. \textsf{Reject} if $f(Z)=0$.
 \vspace{0.08cm}

\item[] \textbf{Stage 2.} Repeat the following steps for $d^*$ iterations.
For the $i$th iteration, $i\in [d^*]$:\vspace{0.015cm}
\newline \textsf{Accept} if the number of $j\in [3t/\epsilon]$ with $z^{i+1,j}\in f^{-1}(1)$
  is less than $t-1$ or  no string in $(z^{i+1,j})$ is from $f^{-1}(0)$;
otherwise, let $y^1,\ldots,y^{t-1}$ be the first $t-1$
  (not necessarily distinct) $1$-strings from $(z^{i+1,j})$,
  and $x$ be the first $0$-string from $(z^{i+1,j})$.
Let $B_i=\zero(y^i)$ for each $i$, and $B=\cup_i B_i$.
\\ Use the binary search procedure to compute $h(x)$, and
\textsf{Reject} if $h(x)=\nil$.\vspace{0.1cm}

\subitem\hspace{-0.3cm}2.1\hspace{0.1cm} Let $\alpha=h(x)\in \zero(x)$.
\textsf{Reject} if $\alpha\in B$.

\subitem\hspace{-0.3cm}2.2. 
Uniformly draw a $P\subseteq B$
  of size $r-1$. \textsf{Reject} if
  $\smash{f(P\cup\{\alpha\})=1}$.\vspace{0.06cm}

\item[] \textbf{End of Stage 2.} \textsf{Accept}.\vspace{-0.3cm}
\end{itemize}\end{flushleft}\end{spacing}
\caption{The distribution-free
algorithm for testing monotone conjunctions.}\label{fig:main-algorithm}
\end{figure}

Theorem \ref{uppertheorem} follows directly from the following lemma combined with
  Observation \ref{obob1} and Lemma \ref{lem:onesided} (since Algorithm 2 is
  one-sided its success probability in Lemma \ref{mainlemma} can be easily amplified to $2/3$).

\begin{lemma}\label{mainlemma}
If $f$ is $\epsilon$-far from $\mconj$ with respect to $\calD$,
  Algorithm 2 rejects with probability at least $0.1$.
\end{lemma}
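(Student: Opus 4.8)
The plan is to prove the contrapositive-flavored statement: assuming $f$ is $\epsilon$-far from $\mconj$ with respect to $\calD$, I will show that if $f$ survives Stage 0 and Stage 1 with probability more than $0.9$, then it gets rejected in Stage 2 with probability at least (say) $0.2$, so that the overall rejection probability is at least $0.1$. The first ingredient I would set up is the \emph{violation bipartite graph} $G_f$: put a $0$-string $x$ (with representative index $h(x)$) on one side and a $1$-string $y$ on the other, with an edge when $h(x)\in \zero(y)$, weighted by $\calD(x)\calD(y)$ (or some suitable normalization). The key structural claim is that a vertex cover of $G_f$ of small weight yields a monotone conjunction close to $f$ (take the conjunction over indices not "blamed" by the cover), so $\epsilon$-farness forces every vertex cover to have weight $\Omega(\epsilon)$ — equivalently, by LP duality / König-type reasoning, there is a large fractional matching, and in fact one can extract a \emph{regular} subgraph $G_f^*$ (degrees bounded above and below on both sides up to log factors) in which every vertex cover still has weight $\Omega(\epsilon)$. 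I would isolate this as a lemma and prove it by a bucketing argument on the $\calD$-weights of strings and on degrees, discarding negligible buckets; the farness is what guarantees that after discarding, a constant fraction of the violation mass remains.

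Next I would analyze Stage 1 as a filter. Condition on the event that Stage 0 succeeded (no empty strings, $f(\11)=1$) and that Stage 1 does not reject. Intuitively, surviving lines 1.1 and 1.2 with high probability means that the set $B=\cup_i \zero(y^i)$ built from $t$ random $1$-strings almost never contains an index $i$ with $f(\{i\})=0$, and almost never "kills" a random $r$-subset. I would translate this into a statement of the form: with high probability over the Stage-1 samples, the $1$-string side behaves like a conjunction on $B$ — more precisely, the "bad" indices inside $B$ (representative indices of $0$-strings, i.e. the $1$-string-side neighbors in $G_f^*$) have tiny total $\calD$-mass, because otherwise a fresh random $1$-string would hit one and line 1.1/1.2 would reject. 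This is where the regularity of $G_f^*$ matters: regularity lets me convert "$B$ covers little mass of the $0$-side" into "$B$, viewed as a vertex set on the $1$-side, has small degree into $G_f^*$," controlling the double counting. The upshot I want is: assuming survival of Stage 1 w.h.p., the $1$-side of $G_f^*$ still carries $\Omega(\epsilon)$ worth of uncovered violation mass relative to the typical $B$.

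Then comes Stage 2, which I expect to be the main obstacle. Here I would argue that the distribution of the query string $z = P \cup \{\alpha\}$ in line 2.2 — where $\alpha=h(x)$ for a fresh $0$-string $x$ and $P$ is a random $(r-1)$-subset of a fresh $B$ — is statistically close to the distribution of strings $z=Z$ queried in line 1.2 of Stage 1 (a random $r$-subset of a fresh $B$), \emph{conditioned on $z$ hitting an edge of $G_f^*$}. Since $f$ had to return $1$ on the Stage-1 queries (else it would have rejected, contradicting survival), closeness of the two distributions forces $f$ to return $1$ on a noticeable fraction of Stage-2 queries as well — but returning $1$ on $P\cup\{\alpha\}$ with $\alpha=h(x)$, $f(\{\alpha\})=0$ is exactly a rejection on line 2.2. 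To make "a noticeable fraction" into "constant probability over $d^*$ iterations" I would show each iteration independently catches a violation with probability $\Omega(1/d^*)$ up to logs, using the $d^*=d^2/\epsilon$ repetitions and the $3t/\epsilon$ samples per iteration to guarantee enough fresh $0$-strings of the right weight class. The delicate point — and the reason the construction is "fine-tuned" — is bounding the total variation distance between the Stage-1 and Stage-2 query distributions: one must control how $\alpha$ correlates with $B$, how often $\alpha\in B$ (line 2.1 rejection, which only helps), and the double counting when a single heavy index $\alpha$ is shared by many $0$-strings; the upper and lower degree bounds from $G_f^*$ are exactly what tame this. I would finish by a union bound over the low-probability failure events (bad Stage-0 samples, atypical $B$, $\alpha$ landing outside its expected weight class) and conclude the claimed $0.1$.
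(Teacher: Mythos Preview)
Your outline matches the paper's architecture: the violation bipartite graph $G_f$ whose vertex covers lower-bound the distance, extraction of a highly regular subgraph $G_f^*$ in which covers still have weight $\Omega(\epsilon)$, and the core insight that Stage-2 queries $P\cup\{\alpha\}$ mimic Stage-1 queries conditioned on containing a representative index, with the regularity of $G_f^*$ controlling the double counting. A few execution points differ and one paragraph of yours is muddled. First, the paper puts representative \emph{indices} $j\in[n]$ (not $0$-strings) on the right side of $G_f$, with weight $\wt{j}=\sum_{x:h(x)=j}\calD(x)$, and it first reduces to \emph{well-supported} $\calD$ (zero mass on empty strings) before any of this---that is the clean version of your Stage-0 conditioning. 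Second, $G_f^*$ is extracted by iteratively deleting heavy vertices, with a telescoping bound $\sum_i (\wt{G_i}-\wt{G_{i+1}})/\wt{G_i}=O(\log(n/\epsilon))$ on the total weight removed, rather than by bucketing. Third, and more substantively, your Stage-1 paragraph mis-states what survival buys: passing lines 1.1 and 1.2 constrains only the \emph{count} of indices $j\in B$ with $f(\{j\})=0$ (as a fraction of $|B|$) and the fraction of $r$-subsets $Z\subseteq B$ with $f(Z)=1$; it says nothing directly about $\calD$-mass, and those representative indices sit on the $R^*$ side of $G_f^*$, not the $1$-string side. Correspondingly, the paper does not bound a total-variation distance between the two query distributions (the conditioning event you describe has very small probability, so a literal TV bound would be loose). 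Instead it uses an explicit generation-and-counting argument: from each \emph{strong} length-$t$ sequence $\calB$ that passes Stage~1 with probability $\ge 1/2$, it manufactures $\Omega(rW)$ pairs $(i,\alpha)$ with $i\in I_\calB^*$ and $\alpha\in B_i\cap R^*$, each yielding a Stage-2 input $(\calB^i,\alpha)$ rejected with constant probability; then the in-weight upper bound $\inwt(\alpha)\le d\cdot W\cdot \wt{\alpha}$ on $R^*$-vertices is exactly what caps the overcount when converting $\sum_{\calB} p(\calB)\cdot|J_\calB|=\Omega(rW)$ into $\sum_{(\calB',\alpha)} q(\calB',\alpha)=\Omega(\epsilon/d^2)$, the per-iteration rejection probability you want.
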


\subsection{Reduction to Well-Supported Probability Distributions}

To ease the proof of Lemma \ref{mainlemma}, we show that it suffices
  to focus on so-called well-supported distributions.
We say a probability distribution $\calD$ on $\{0,1\}^n$ is \emph{well-supported} with respect
  to $f$ if every empty string~of~$f$ has probability zero in $\calD$.
Given $f$, intuitively an adversary to pair it with a hard probability
distribution $\calD$ may not
  want to allocate much probability on empty points of $f$, in case Algorithm 2
  rejects in Stage 0.

Following this intuition that well-supported probability distributions are probably
  hard cases of Lemma \ref{mainlemma},
  we prove Lemma \ref{main-lemma-2} below concerning
  such distributions in the rest of the section.
Before its proof we show that it indeed implies Lemma \ref{mainlemma}.

\begin{lemma}\label{main-lemma-2}
Assume that $f$ is a Boolean function and $\calD'$ is a well-supported
  distribution with respect to $f$.
If $f$ is $(\epsilon/2)$-far from $\mconj$ with respect to $\calD'$,
  Algorithm 2 rejects $(f,\calD')$ with probability at least $0.1$.
\end{lemma}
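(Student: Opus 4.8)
\emph{Proof strategy.} The plan is to extract from $(f,\calD')$ a robust combinatorial witness of distance and then show that this witness is almost surely detected by one of the two stages of Algorithm 2. First, normalize: we may assume $f(\11^n)=1$, since otherwise Stage 0 rejects outright, and since $\calD'$ is well-supported no empty string of $f$ is ever sampled, so by Property \ref{simpleproperty} every sampled $0$-string $x$ has a representative index $h(x)\in\zero(x)$ with $f(\{h(x)\})=0$; hence Stage 0 never rejects and only Stages 1 and 2 matter. The witness is the \emph{violation bipartite graph} $G_f$: the left vertices are the indices in $I\eqdef\{i:f(\{i\})=0\}$, with $i$ carrying weight $\calD'\{x:f(x)=0,\ h(x)=i\}$; the right vertices are the $1$-strings, with $y$ carrying weight $\calD'(y)$; and $(i,y)$ is an edge exactly when $i\in\zero(y)$. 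Every edge is a genuine violation, and from any vertex cover $(L',R')$ of $G_f$ one reads off the monotone conjunction $g=\bigwedge_{i\in I\setminus L'}z_i$, for which $\{x:f(x)\ne g(x)\}\subseteq\{x:h(x)\in L'\}$ and $\{y:f(y)\ne g(y)\}\subseteq R'$; thus $\dist_{\calD'}(f,\mconj)$ is at most the minimum weight of a vertex cover of $G_f$ (this is essentially tight, but only this direction is needed). Since $f$ is $(\epsilon/2)$-far, every vertex cover of $G_f$ has weight at least $\epsilon/2$; in particular $\calD'(f^{-1}(0)),\calD'(f^{-1}(1))\ge\epsilon/2$, as each side is itself a cover, so with probability $1-e^{-\Omega(t)}$ neither Stage 1 nor any iteration of Stage 2 accepts early for want of the right samples.

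Next I would pass to a \emph{regular} sub-witness $G_f^\ast$ by a standard cleaning argument --- dyadically bucket the (weighted) degrees on both sides, keep the heaviest bucket pair, and discard the low-degree vertices --- obtaining a subgraph on $L^\ast\subseteq I$, $R^\ast\subseteq f^{-1}(1)$ in which all left-degrees agree up to a factor of $2$ with some $d_L$, all right-degrees up to a factor of $2$ with some $d_R$, and whose minimum-weight vertex cover is still $\Omega(\epsilon)$ up to $\mathrm{polylog}(n/\epsilon)$ factors. Regularity is exactly what makes the later sampling estimates amenable to first- and second-moment arguments, and in particular what bounds the double counting --- how many distinct $0$-strings share a representative index, how many $1$-strings contain a given index in their $0$-set --- when one converts a heavy cover of $G_f^\ast$ into a noticeable $\calD'$-mass of ``useful'' $0$-strings.

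With $G_f^\ast$ in hand I would split into two cases. \textbf{Case A:} Stage 1 rejects with probability at least $0.1$, and then we are done; morally this is the regime where the witness is concentrated, i.e.\ either a constant fraction of the heavy right vertices $y$ have $|\zero(y)\cap I|$ a non-negligible fraction of $|\zero(y)|$, so that a uniform index of $B=\bigcup_j\zero(y^j)$ lies in $I$ and line 1.1 fires within its $s=t\log n$ tries, or a uniform $r$-subset of $B$ is a $0$-set with non-negligible probability, so line 1.2 fires; regularity turns ``non-negligible'' into an explicit bound in terms of $d_L$, $d_R$, and the cover weight. \textbf{Case B:} Stage 1 rejects with probability below $0.1$; here it suffices to show that Stage 2 rejects with probability at least $0.2$, since subtracting the at most $0.1$ chance that Stage 1 rejected first still leaves the ``reach Stage 2 and reject there'' event with probability at least $0.1$. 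The core of Case B is a comparison of the two stages' query laws. Line 1.2 queries a uniform $r$-subset $Z$ of $B^{(1)}=\bigcup_{j=1}^{t}\zero(y^j)$; line 2.2 queries $P\cup\{\alpha\}$, where $P$ is a uniform $(r-1)$-subset of $B^{(2)}=\bigcup_{j=1}^{t-1}\zero(y^j)$ and $\alpha=h(x)$ for an independent $0$-string $x$. Conditioned on the $t$-th sampled $1$-string $y^t$ having a representative index $i^\ast\in\zero(y^t)\cap L^\ast$ and on $i^\ast\in Z$, the law of $Z$ is within $o(1)$ total variation of the law of $P\cup\{\alpha\}$ given $\alpha=i^\ast$, because on the bucketed family of $1$-strings $|\zero(y^t)\setminus B^{(2)}|$ is negligible against $|B^{(2)}|$ --- the upper-bound analogue of the ``each sampled set is $o(1)$-covered by the others'' phenomenon.

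To close Case B, recall that line 1.2 is run $s$ times and, in Case B, almost never rejects, so $f(Z)=1$ for a typical query; but the conditioning $i^\ast\in Z$ has probability only $\approx r/|B^{(1)}|$, and this dilution is precisely what Stage 2 removes by planting $\alpha$ deterministically in its query. Unwinding the comparison lower-bounds by $\Omega(\epsilon/d^2)$ the $\calD'$-mass of $0$-strings $x$ for which one Stage-2 iteration rejects --- on line 2.1 because $h(x)\in B$, or on line 2.2 because $f(P\cup\{h(x)\})=1$ --- with constant probability; since Stage 2 runs $d^\ast=d^2/\epsilon$ independent iterations, this amplifies to a constant. I expect the main obstacle to be exactly this last step: making the two-stage comparison quantitative for an \emph{arbitrary} far pair $(f,\calD')$, which need bear no resemblance to the structured $\nod$ instance, and in particular controlling the double counting in the passage from ``every vertex cover of $G_f^\ast$ is heavy'' to ``a noticeable $\calD'$-mass of $0$-strings has a representative index useful to Stage 2''. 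This is what pins down the parameter choices $d,d^\ast,r,t,s$ in \eqref{parameters1} and the exact peeling defining $G_f^\ast$.
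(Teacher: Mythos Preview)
Your proposal follows the paper's approach closely: define the violation bipartite graph, show its minimum vertex-cover weight lower-bounds the distance, extract a regular subgraph, and then argue that if Stage~1 does not reject then Stage~2 must, via a comparison of the two query distributions. A few differences are worth flagging. First, your direct construction of the conjunction $g=\bigwedge_{i\in I\setminus L'}z_i$ from a cover $(L',R')$ is a cleaner route to the cover--distance inequality than the paper's, which instead reduces to Dolev--Ron's violation \emph{hypergraph} (their Lemmas 3.2 and 3.4) and then shows any bipartite cover lifts to a hypergraph cover (Corollary~\ref{totalweight}). Second, your regularization by ``dyadic bucketing on both sides, keep the heaviest pair'' is not what the paper does; it iteratively peels \emph{heavy} vertices (those whose removal drops the total edge weight by at least $d\cdot\wt{G}\cdot\wt{u}$) and bounds the peeled mass by an integral, yielding the one-sided bound $\inwt(j)\le d\cdot W\cdot\wt{j}$ (Property~\ref{summary}); the lower-degree side is then isolated separately as $L'=\{y:\deg(y)\ge W/2\}$. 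That specific in-weight upper bound is exactly what controls the double-counting coefficient in (\ref{coe}), so if you stick with bucketing you will need to check that your two-sided factor-$2$ regularity gives you the analogous control.

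Two smaller points. Your total-variation framing of the Stage~1/Stage~2 comparison is the right intuition, but the paper does not prove a TV bound; Lemma~\ref{maincount} is an explicit counting argument over a fixed \emph{strong} sequence $\calB$ that passes Stage~1 with probability $\ge 1/2$, restricting to indices $i\in I_\calB^*$ (those with $|B_i^*|\le 6|B|/r$) and counting $\Omega(rW)$ valid pairs $(i,\alpha)$ for which the derived $(\calB',\alpha)$ fails Stage~2 with constant probability --- the ``$|\zero(y^t)\setminus B^{(2)}|$ negligible'' step only holds after this restriction, not for an arbitrary $y^t$. Finally, a terminology slip: $1$-strings do not have ``representative indices''; you mean an index $i^*\in\zero(y^t)\cap L^*$.
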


\begin{proof}[Proof of Lemma \ref{mainlemma} assuming Lemma \ref{main-lemma-2}]
Assume that $f$ is $\epsilon$-far from $\mconj$ with respect to $\calD$.
Let $\delta\ge 0$ denote the total probability of $\calD$ over
  empty strings of $f$.
If $\delta=0$, Lemma \ref{mainlemma} follows directly from Lemma \ref{main-lemma-2}
  since $\calD$ is well-supported.
If $\delta\ge \epsilon/2$, Algorithm 2 must reject with probability
  $1-o(1)$ in Stage 0.
We consider below the remaining case when $0<\delta<\epsilon/2$.

Let $\calD'$ denote the following distribution derived from $\calD$.
The probability of any empty string of $f$ in $\calD'$ is $0$.
The probability of any other string is set to be its probability in $\calD$
  multiplied by $1/(1-\delta)$.
Clearly $\calD'$ is now a well-supported probability distribution
  with respect to $f$.
We prove the following claim:

\begin{claim}\label{claim-claim}
The probability of Algorithm 2 rejecting $(f,\calD)$ is at least as large
  as that of rejecting $(f,\calD')$.
\end{claim}
\begin{proof}
Algorithm 2 always rejects $(f,\calD)$ if one of
  the samples in Stage 0 is an empty string.
Let $E$ denote the event that no sample in Stage 0 is empty.
Then the probability of Algorithm 2 accepting $(f,\calD')$
  is exactly that of it accepting $(f,\calD)$ conditioning on $E$.
This follows from the definition of $\calD'$ and our observation~\ref{obob2}:
  Stages 1 and 2 access the black-box oracle only, which does not involve
  $\calD$ or $\calD'$.
As a result, we have
\begin{align*}
\Pr\big[ \text{$(f,\calD)$ accepted}\hspace{0.04cm}\big]
 =\Pr\big[\text{$(f,\calD)$ accepted}\hspace{0.04cm}\big|\hspace{0.04cm}E\hspace{0.04cm}\big]\cdot \Pr[E]
 \le \Pr\big[\text{$(f,\calD')$ accepted}\hspace{0.04cm}\big].
\end{align*}
This finishes the proof of the claim.
\end{proof}

Finally we show that $f$ is $(\epsilon/2)$-far from $\mconj$
  with respect to $\calD'$.
Given this we can then apply Claim \ref{claim-claim} to finish the proof of the lemma.
To see this is the case, note that the total variation distance
  $d_{TV}(\calD,\calD')$ is $\delta$ by the definition of $\calD'$.
This implies that for any Boolean function $g$, we have
$$\big|\hspace{0.03cm}\dist_\calD(f,g)-\dist_{\calD'}(f,g)\hspace{0.03cm}\big|\le d_{TV}(\calD,\calD')\le \delta.$$
As a result, $\dist_{\calD'}(f,\mconj)\ge \dist_{\calD}(f,\mconj)-\delta\ge \epsilon/2.$
This finishes the proof of the lemma.
\end{proof}

We prove Lemma \ref{main-lemma-2} in the rest of the section.
For convenience, we still use $\calD$ to denote the unknown distribution,
  but from now on we always assume without loss of generality that 1) $\calD$ is well-supported
  with respect to $f$, and 2) $f$ is $(\epsilon/2)$-far from $\mconj$
  with respect to $\calD$.

It is worth mentioning that since $\calD$ is well-supported, Algorithm 2 can skip Stage 0,
  which is the reason why it is named Stage 0,
  and have both Stage 1 and each iteration of Stage 2
  start by making $3t$ new queries to the sampling oracle.
We will follow this view in the analysis of Algorithm 2 in the rest of the section.

\def\edgewt{\wt}

\subsection{The Violation Bipartite Graph}

We first review the \emph{violation hypergraph} of a
  Boolean function $f$ introduced by Dolev and Ron \cite{DolevRon}. It inspires us to define the
 \emph{violation bipartite graph} $G_f$ of $f$. The latter
  is conceptually simpler, and characterizes the distance of $f$
  to $\mconj$ as well.
The main lemma of this subsection shows that if $\dist_\calD(f)\ge \epsilon/2$,
  then $G_f$ has a \emph{highly regular} subgraph $G_f^*$ with vertex covers of weight $\Omega(\epsilon)$ only.

We start with the definition of the violation hypergraph of
  a given $f:\{0,1\}^n\rightarrow \{0,1\}$ from \cite{DolevRon}.

\begin{definition}[Violation Hypergraph]
Given $f$, we call $H_f=(V(H_f),E(H_f))$ the \emph{violation hypergraph}\\
  of $f$, where
  $V(H_f)=\{0,1\}^n$;
  $E(H_f)$ contains all subsets $\{x,y^1,\ldots,y^t\}\subseteq \{0,1\}^n$
  such that
\begin{itemize}
\item[--] $f(x)=0$; $f(y^i)=1$ for all $i:1\le i\le t$; and
  $\zero(x)\subseteq \cup_{i=1}^t \zero(y^i)$.
\end{itemize}
Note that $\{\11^n\}\in E(H_f)$ if $f(\11^n)=0$ \emph{(}this is the only
  possible special case when $t=0$\emph{)}.
\end{definition}

It turns out that $\dist_\calD(f,\mconj)$ is characterized by
   weights of vertex covers of $H_f$.

\begin{lemma}[Lemmas 3.2 and 3.4 of \cite{DolevRon}]\label{LemTotal}
A function $f$ is in $\mconj$ if and only if $E(H_f)=\emptyset$.\newline
For any Boolean function $f$,
  every vertex cover $C$ of $H_f$ has total probability $\calD(C)\ge \dist_\calD(f,\mconj)$.
\end{lemma}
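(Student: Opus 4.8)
The plan is to prove the two assertions separately, both by elementary combinatorial arguments straight from the definition of $H_f$, and in each case to treat the degenerate edge $\{\11^n\}$ uniformly with the general ($t\ge 1$) edges.

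\textbf{Part 1: $f\in\mconj$ iff $E(H_f)=\emptyset$.} For the forward direction, suppose $f(z)=\bigwedge_{i\in S}z_i$ and, towards a contradiction, that $\{x,y^1,\ldots,y^t\}$ is an edge. Then $f(x)=0$ forces some $i\in S\cap\zero(x)$; since $\zero(x)\subseteq\bigcup_j\zero(y^j)$, this $i$ lies in some $\zero(y^j)$, so $y^j_i=0$ and $f(y^j)=0$, contradicting $f(y^j)=1$. The degenerate edge $\{\11^n\}$ is excluded because $f(\11^n)=1$ for a monotone conjunction. For the converse, assume $E(H_f)=\emptyset$ and set $S=\{i\in[n]:f(\{i\})=0\}$; the claim is that $f(Z)=1\iff Z\cap S=\emptyset$ for every $Z=\zero(z)$. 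If $i\in Z\cap S$ but $f(Z)=1$, then $\{z',z\}$ with $\zero(z')=\{i\}$ is an edge (using $f(z')=0$, $f(z)=1$, $\{i\}\subseteq Z$), a contradiction; so $Z\cap S\ne\emptyset\Rightarrow f(Z)=0$. Conversely, if $Z\cap S=\emptyset$ then every $j\in Z$ satisfies $f(\{j\})=1$, so if $f(Z)=0$ the collection $\{z\}\cup\{z^j:\zero(z^j)=\{j\},\,j\in Z\}$ is an edge (with $\{\11^n\}$ handling the case $Z=\emptyset$), again a contradiction; hence $f(Z)=1$. Thus $f(z)=\bigwedge_{i\in S}z_i\in\mconj$.

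\textbf{Part 2: vertex covers.} Given a vertex cover $C$ of $H_f$, I would exhibit a $g\in\mconj$ agreeing with $f$ outside $C$; then $\dist_\calD(f,g)=\calD(\{z:f(z)\ne g(z)\})\le\calD(C)$, which yields $\dist_\calD(f,\mconj)\le\calD(C)$. Let $Y=\{y\notin C:f(y)=1\}$, put $S=[n]\setminus\bigcup_{y\in Y}\zero(y)$, and define $g(z)=\bigwedge_{i\in S}z_i$, equivalently $g(Z)=1\iff Z\subseteq\bigcup_{y\in Y}\zero(y)$. If $z\notin C$ and $f(z)=1$ then $z\in Y$, so $\zero(z)\subseteq\bigcup_{y\in Y}\zero(y)$ and $g(z)=1$. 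If $z\notin C$ and $f(z)=0$, suppose $g(z)=1$: then $\zero(z)\subseteq\bigcup_{y\in Y}\zero(y)$, so picking for each $i\in\zero(z)$ some $y\in Y$ with $i\in\zero(y)$ produces finitely many $y^1,\ldots,y^t\in Y$ with $\zero(z)\subseteq\bigcup_j\zero(y^j)$; then $\{z,y^1,\ldots,y^t\}$ (or $\{\11^n\}$ if $\zero(z)=\emptyset$) is an edge of $H_f$ lying entirely outside $C$, contradicting that $C$ is a vertex cover. Hence $g(z)=0=f(z)$, so $g$ and $f$ agree off $C$ and $g\in\mconj$, completing the bound.

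\textbf{Main obstacle.} The arguments are short, so the only genuine care needed is in choosing the right witness conjunction: in Part 2 the defining set $S$ must be driven by the $1$-strings \emph{outside} $C$ (not by the ``relevant'' variables of $f$), and in both parts one must make sure the degenerate edge $\{\11^n\}$ is handled consistently with the $t\ge1$ edges so that no case is overlooked.
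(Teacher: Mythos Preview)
Your proof is correct. Note, however, that the paper does not supply its own proof of this lemma: it is quoted verbatim as Lemmas~3.2 and~3.4 of Dolev and Ron~\cite{DolevRon} and used as a black box, so there is nothing in the present paper to compare your argument against. Your write-up is a clean, self-contained proof of the cited result; the choice in Part~2 to build the witness conjunction from the $1$-strings \emph{outside} $C$ is exactly the right move, and your uniform treatment of the degenerate edge $\{\11^n\}$ is sound.
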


Note that this lemma holds for any (not necessarily well-supported) probability distribution $\calD$.
Now we define the violation bipartite graph of $f$.

\begin{definition}[Violation Bipartite Graph]
Given a Boolean function $f$
  we call the following graph $G_f=(L\cup R,E)$ the \emph{violation bipartite graph}
  of $f$: vertices on the left side are $L=f^{-1}(1)$; vertices on the right side are
  $R=\{j\in [n]:x\in f^{-1}(0)\ \text{and}\ h(x)=j\}$;
  add an edge between $y\in f^{-1}(1)$ and $j\in R$ if $y_j=0$.

Let $\calD$ be a probability distribution over $\{0,1\}^n$.
It defines a nonnegative weight $\mathrm{wt}_\calD(\cdot)$ for each vertex in $G_f$ as follows.
The weight of $y\in f^{-1}(1)=L$ is simply $\mathrm{wt}_\calD(y)=\calD(y)$.
The weight of $j\in R$  is
$$\mathrm{wt}_\calD(j)=\sum_{x\in f^{-1}(0):\hspace{0.05cm}h(x)=j} \calD(x).$$
Given a set of vertices $C\subseteq L\cup R$, we let
  $\mathrm{wt}_\calD({C})$ denote the total weight of $C$: $\mathrm{wt}_\calD({C})=\sum_{u\in C}
  \mathrm{wt}_\calD(u)$.
Most of the time when $\calD$ is clear from the context, we drop the subscript
  and use simply $\mathrm{wt}$ for the weight. 
\end{definition}

From now on we assume that $\calD$ is well-supported with respect to $f$.
We get the following corollary: 

\begin{corollary}\label{totalweight}
If $\calD$ is well-supported,
  then every vertex cover $C$ of $G_f$ has $\wt{C}\ge \dist_\calD(f,\mconj)$.
\end{corollary}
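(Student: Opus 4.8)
The plan is to derive Corollary~\ref{totalweight} directly from Lemma~\ref{LemTotal} by exhibiting, for every vertex cover $C$ of $G_f$, a vertex cover $C'$ of the violation hypergraph $H_f$ with $\calD(C')\le \wt{C}$; then Lemma~\ref{LemTotal} gives $\wt{C}\ge \calD(C')\ge \dist_\calD(f,\mconj)$, which is exactly the claim. So the real content is a translation between covers of the bipartite graph and covers of the hypergraph, and the hypothesis that $\calD$ is well-supported is what makes the translation lossless.

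\medskip
\noindent\textbf{Construction of $C'$ from $C$.} Given a vertex cover $C\subseteq L\cup R$ of $G_f$, write $C=C_L\cup C_R$ with $C_L\subseteq L=f^{-1}(1)$ and $C_R\subseteq R$. I would set
$$
C' \;=\; C_L \;\cup\; \big\{\,x\in f^{-1}(0) : h(x)\in C_R\,\big\} \;\cup\; \big\{\,x\in f^{-1}(0): h(x)=\nil\,\big\}.
$$
The first step is to check $\calD(C')\le \wt{C}$. Since $\calD$ is well-supported, every empty string of $f$ (i.e.\ every $x$ with $h(x)=\nil$) has probability zero, so the last set contributes nothing; and by definition $\wt_\calD(j)=\sum_{x:h(x)=j}\calD(x)$, so the $\calD$-mass of $\{x:h(x)\in C_R\}$ is exactly $\sum_{j\in C_R}\wt(j)$, while $\calD(C_L)=\sum_{y\in C_L}\wt(y)$. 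Hence $\calD(C')=\wt{C_L}+\wt{C_R}=\wt{C}$ (in general $\le$, if one is sloppy about double counting strings with $h(x)$ in $C_R$, but they are weighted once each, so equality holds).

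\medskip
\noindent\textbf{$C'$ is a vertex cover of $H_f$.} This is the step I expect to be the crux. Take any hyperedge $\{x,y^1,\dots,y^t\}\in E(H_f)$, so $f(x)=0$, $f(y^i)=1$, and $\zero(x)\subseteq\bigcup_i\zero(y^i)$. I need to show $C'$ contains at least one of these vertices. Consider $h(x)$. If $h(x)=\nil$, then $x\in C'$ and we are done; this also handles the degenerate hyperedge $\{\11^n\}$. Otherwise $h(x)=j$ for some $j\in[n]$, and by Property~\ref{simpleproperty} we have $j\in\zero(x)$ and $f(\{j\})=0$, so in particular $j\in R$. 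Since $j\in\zero(x)\subseteq\bigcup_i\zero(y^i)$, there is some $i$ with $y^i_j=0$, i.e.\ $\{y^i,j\}$ is an edge of $G_f$. Because $C$ covers this edge, either $y^i\in C_L\subseteq C'$ (done), or $j\in C_R$, in which case $x\in C'$ by construction (done). Thus every hyperedge of $H_f$ meets $C'$.

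\medskip
\noindent Combining the two steps: $\wt{C}=\calD(C')\ge \dist_\calD(f,\mconj)$ by the vertex-cover bound in Lemma~\ref{LemTotal}, which holds for arbitrary $\calD$. The only place well-supportedness is used is to drop the empty strings from the mass count — without it the inequality would fail because empty strings must still be covered in $H_f$ but carry no weight in $G_f$ (indeed empty strings don't even index vertices of $G_f$). One subtlety to state carefully is that $R$ is defined as the set of representative indices actually realized, so $j=h(x)$ is automatically in $R$ whenever $h(x)\ne\nil$; this is what makes the edge $\{y^i,j\}$ legitimately an edge of $G_f$ rather than a reference to a nonexistent vertex. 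No separate argument is needed for the first assertion of Lemma~\ref{LemTotal}-style ($f\in\mconj\iff E(G_f)=\emptyset$) since the corollary only concerns the weight bound, though it is worth remarking that the same construction shows $E(G_f)=\emptyset$ forces $E(H_f)=\emptyset$.
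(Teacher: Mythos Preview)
Your proof is correct and follows essentially the same approach as the paper: you build the same vertex cover $C'$ of $H_f$ (empty strings, $C\cap L$, and preimages under $h$ of $C\cap R$), verify $\calD(C')=\wt{C}$ using well-supportedness, and check coverage of each hyperedge via the edge $(y^i,h(x))$ in $G_f$ exactly as the paper does. The only cosmetic difference is that the paper separates out the degenerate $t=0$ hyperedge $\{\11^n\}$ explicitly, whereas you (correctly) absorb it into the $h(x)=\nil$ case.
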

\begin{proof}
Given a vertex cover $C$ of $G_f$, we define a vertex cover $C'$ of $H_f$
  as follows.
$C'$ consists of 1) all the empty strings of $f$; 2) $C\cap L=C\cap f^{-1}(1)$;
  and 3) $x\in f^{-1}(0)$ such that $h(x)\ne \nil$ and $h(x)\in C\cap R$.

By the definition of $C'$ and $\wt{\cdot}$, we have
  $\wt{C}=\calD(C')$ ($\calD$ is well-supported so has zero probability
  on empty strings).
It suffices to show that $C'$ is a vertex cover of $H_f$, and then apply Lemma \ref{LemTotal}.

Fix a hyperedge $\{x,y^1,\ldots,y^t\}$ in $H_f$.
For the special case when $t=0$, we have $x=\11^n$ and $f(\11^n)=0$.
So $\11^n$ is empty, and $\11^n\in C'$.
When $t\ge 1$, either $h(x)=\nil$, for which case we have $x\in C'$,
  or $h(x)\ne \nil$ and $h(x)\in \zero(x)$.
The latter implies $h(x)\in \zero(y^k)$, for some $k\in [t]$,
  and thus, $(y^k,h(x))$ is an~edge in $G_f$.
Since $C$ covers this edge, either $y^k\in C'$
  or $x\in C'$.
This finishes the proof of the lemma.
\end{proof}

Next, we extract from $G_f$ a highly regular bipartite graph $G_f^*$, with the
  guarantee that any vertex cover of $G_f^*$ still has total weight $\Omega(\epsilon)$
  (recall that $\dist_\calD(f,\mconj)\ge \epsilon/2$).
We start with some notation.
Given a subgraph $G=(L(G)\cup R(G),E(G))$ induced by
  $L(G)\subseteq L$ and $R(G)\subseteq R$, the \emph{weight} of
  graph $G$ is
$$
\edgewt{G}=\sum_{y\in L(G)} \wt{y}\cdot \deg_G(y).
$$
where $\deg_G(y)$ is the degree of $y$ in $G$.
Equivalently, one can assign each edge $(y,j)$ in $G_f$
  an edge weight of $\wt{y}$, and $\edgewt{G}$ is its total edge weight.
For each $j\in R(G)$, we define its \emph{incoming weight} as
$$
\inwt(j)=\sum_{y:\hspace{0.04cm}(y,j)\in E(G)} \wt{y},
$$
which can be viewed as the total edge weight from edges incident to $j$.

Recall the parameter
$d$ in (\ref{parameters1}).
We say a vertex $y\in L(G)$ is \emph{heavy} in $G$ if
$
\deg_G(y)\ge d\cdot \edgewt{G}; $
a vertex $j\in R(G)$ is \emph{heavy} in $G$ if
$
 \inwt(j)\hspace{-0.02cm}\ge\hspace{-0.02cm} d\cdot \edgewt{G}\cdot \wt{j}.
$
In either cases, removing a heavy vertex $u$ (and
  its incident edges) would reduce $\edgewt{G}$ by
  $\ge d\cdot \edgewt{G}\cdot \wt{u}$.
We say a vertex is \emph{light} if it is not heavy.

We run the following deterministic procedure on $G_f$
  to define a subgraph $G_f^*$ of $G_f$.
(This procedure is not new and has seen many applications in the literature,
  e.g., see \cite{RazMcKenzie}.)
\begin{flushleft}\begin{enumerate}
\item Let $G=G_f$ and $S=\emptyset$. Remove all vertices in $G$ with degree zero.\vspace{-0.1cm}
\item Remove all heavy vertices on the left side of $G$
  and their incident edges, if any; \\move them to $S$.
Also remove vertices on the right side that now have degree zero.\vspace{-0.1cm}
\item If $G$ has a vertex cover $C$ of total (vertex) weight $\wt{C}\le \epsilon/4$, exit.\vspace{-0.1cm}
\item Remove all heavy vertices on the right side of $G$ and their incident edges,
if any;\\ move them to $S$.
  Also remove vertices on the left side that now have degree zero.\vspace{-0.1cm}
\item If $G$ has a vertex cover $C$ of total (vertex) weight
  $\wt{C}\le \epsilon/4$ or there exists no more\\ heavy vertex in $G$, exit;
  otherwise go back to Step 2.
\end{enumerate}\end{flushleft}
Let $G_f^*=(L^*\cup R^*,E^*)$ denote the subgraph of
  $G_f$ induced by $L^*\subseteq L$ and $R^*\subseteq R$ we obtain at the end.

We show that $G_f^*$ has no heavy vertex, and any vertex cover $C$ of $G_f^*$
  still has a large total weight.

\begin{lemma}\label{vcvclem}
Assume that $\calD$ is well-supported with respect to $f$
  and they satisfy $\dist_\calD(f,\mconj)\ge \epsilon/2$.\\
Then $G_f^*$ has no heavy vertex, and any
  of its vertex cover $C$ has a total weight of $\wt{C}\ge 3\epsilon/8$.
\end{lemma}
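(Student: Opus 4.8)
The plan is to analyze the termination of the deterministic peeling procedure and track how much vertex weight is removed along the way. First I would address the "no heavy vertex" claim: the procedure exits from Step~5 either because $G$ has a cheap vertex cover (weight $\le \epsilon/4$) or because no heavy vertex remains. I claim the first exit condition can never actually fire, which forces the second, giving $G_f^*$ no heavy vertex. To see the first condition never fires, I would maintain the invariant that after each round of peeling, every vertex cover of the current graph $G$ still has weight $\ge 3\epsilon/8$ (in particular $> \epsilon/4$), so Steps~3 and~5 never trigger their cheap-cover exits; the procedure can only stop when heavy vertices are exhausted. This simultaneously proves both assertions of the lemma, since at termination $G = G_f^*$ has no heavy vertex and its vertex covers have weight $\ge 3\epsilon/8$.

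The core of the argument is the invariant, which I would prove by bounding the total weight of vertices moved into $S$. Start from Corollary~\ref{totalweight}: every vertex cover of $G_f$ has weight $\ge \dist_\calD(f,\mconj) \ge \epsilon/2$. When we remove a set $S$ of vertices from a graph, any vertex cover of the remaining graph, together with $S$, forms a vertex cover of the original; hence if every vertex cover of $G_f$ has weight $\ge \epsilon/2$ and $\wt{S} \le \epsilon/8$ throughout, then every vertex cover of the current $G$ has weight $\ge \epsilon/2 - \epsilon/8 = 3\epsilon/8$. So everything reduces to showing $\wt{S} \le \epsilon/8$ at all times. Here I would use the defining property of heavy vertices: removing a heavy vertex $u$ decreases $\edgewt{G}$ by at least $d \cdot \edgewt{G} \cdot \wt{u}$ — wait, that is the wrong direction, so the right accounting is that removing $u$ decreases $\edgewt{\cdot}$ by at least $d\cdot\wt{u}\cdot(\text{current }\edgewt{})$, i.e. each unit of weight we move to $S$ destroys a $d$-fraction-times-weight amount of edge weight; more usefully, the amount of edge weight destroyed per unit of $S$-weight is at least $d \cdot (\text{current }\edgewt{})$. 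Since $\edgewt{G}$ starts at some finite value $W_0$ and only decreases, and each heavy removal of a vertex of weight $w$ kills at least... — the clean way is: order the removals, let $w_k$ be the weight of the $k$-th removed vertex and $W_k$ the edge weight just before; then $W_k - W_{k+1} \ge d\, w_k W_k \ge d\, w_k W_{\text{final}}$ is not quite it either. The correct telescoping is multiplicative: $W_{k+1} \le W_k(1 - d w_k)$, so $\prod_k (1 - d w_k) \ge W_{\text{final}}/W_0 > 0$, whence $\sum_k d w_k \le \ln(W_0/W_{\text{final}})$, and I need a crude upper bound on $W_0$ and lower bound on $W_{\text{final}}$ in terms of $\epsilon$ and $n$ to conclude $\sum_k w_k = \wt{S} \le \epsilon/8$, using $d = \log^2(n/\epsilon)/\epsilon$.

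I expect the main obstacle to be making this weight-of-$S$ bound actually come out to $\epsilon/8$ with the given value of $d$: I need to argue that $\edgewt{G_f}$ is at most something like $\poly(n/\epsilon)$ times a lower bound on $\edgewt{G_f^*}$ (or on the edge weight at the last heavy removal), so that $\ln(W_0/W_{\text{final}}) = O(\log(n/\epsilon))$ and then dividing by $d = \log^2(n/\epsilon)/\epsilon$ yields $O(\epsilon/\log(n/\epsilon)) \le \epsilon/8$ for large enough $n$. The upper bound $W_0 = \edgewt{G_f} \le n \cdot \wt{L} \le n$ is immediate since each $y\in L$ has degree at most $n$ and $\sum_y \wt{y} \le 1$. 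The lower bound on $W_{\text{final}}$ is the delicate point: as long as the procedure has not exited via a cheap cover, $G$ has no cheap vertex cover, so by König/LP-duality-type reasoning (or directly: a graph whose min vertex cover has weight $\ge 3\epsilon/8$ must have edge weight bounded below, e.g. $\ge 3\epsilon/8$ since $L(G)$ itself, or $R(G)$ itself, is a cover and the smaller-weight side still has weight $\ge 3\epsilon/8$, and every such vertex has degree $\ge 1$). So $W_{\text{final}} \ge 3\epsilon/8$ — actually I should be careful to take $W_{\text{final}}$ to be the edge weight at the step just before the procedure would have exited, where the no-cheap-cover invariant still holds. Putting $\ln(W_0/W_{\text{final}}) \le \ln(n/(3\epsilon/8)) = O(\log(n/\epsilon))$ and dividing by $d$ gives $\wt{S} \le O(\epsilon/\log(n/\epsilon)) \le \epsilon/8$ for $n$ sufficiently large, closing the induction; for small $n$ the whole testing problem is trivial. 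I would also need to separately confirm the bookkeeping that removing degree-zero vertices (Steps~1,~2,~4) changes neither $\edgewt{\cdot}$ nor any vertex cover weight, which is routine.
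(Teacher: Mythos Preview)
Your approach is essentially the paper's: bound $\wt{S}$ via the multiplicative drop in edge weight under heavy-vertex removal (the paper phrases it as $\sum_i (\wt{G_i}-\wt{G_{i+1}})/\wt{G_i} = O(\log(n/\epsilon))$ using an integral, which is equivalent to your telescoping $\sum_k d\,w_k \le \ln(W_0/W_{\text{final}})$), and then use $\wt{C\cup S}\ge \epsilon/2$ from Corollary~\ref{totalweight} to get $\wt{C}\ge \epsilon/2 - o(\epsilon)\ge 3\epsilon/8$. One point to tighten: for the lower bound on $W_{\text{final}}$ you should use the procedure's own exit threshold $\epsilon/4$ at the penultimate graph $G_{s-1}$ (which was checked and not exited, hence has no cover of weight $\le\epsilon/4$, hence $\wt{G_{s-1}}\ge\wt{L(G_{s-1})}>\epsilon/4$) rather than the target $3\epsilon/8$, so the argument is not circular---the paper does exactly this, and absorbs the final peeling step via the trivial bound $(\wt{G_{s-1}}-\wt{G_s})/\wt{G_{s-1}}\le 1$.
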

\begin{proof}
The first part, i.e. $G_f^*$ has no heavy vertex,
  follows from the second part of the lemma, which implies that
  the procedure exits because $G$ contains no more heavy vertex.

The second part follows from the claim that $\wt{S}=o(\epsilon)$ (as for
  any vertex cover $C$ of~$G_f^*$, $C\cup S$ is
  a vertex cover of $G_f$ but by Corollary \ref{totalweight},
  $\wt{C\cup S}\ge \epsilon/2$).
To prove the claim,  we let $G_0, \ldots,G_s$ denote the sequence
  of graphs obtained by following the procedure, with $G_f=G_0$ and $G_s=G_f^*$,
  and let $S_i$ denote the set of vertices that are removed from $G_i$ to obtain $G_{i+1}$
  and added to $S$.
(Note that $S_i$ does not include those vertices removed because their degrees drop to zero.)
By the definition of heavy vertices, we have
$$
{\edgewt{G_{i}}-\edgewt{G_{i+1}}}
\ge d \cdot \wt{G_i}\cdot \wt{S_i}.
$$

Given this connection, we upperbound $\wt{S}=\sum_{i=0}^{s-1} \wt{S_i}$ by analyzing
  the following sum:
$$
\sum_{i=0}^{s-1}\frac{\wt{G_{i}}-\wt{G_{i+1}}}{\wt{G_{i}}}
\le 1+\sum_{i=0}^{s-2} \int_{\wt{G_{i+1}}}^{\wt{G_i}}(1/u)\hspace{0.03cm}du
=1+\int_{\wt{G_0}}^{\wt{G_{s-1}}} (1/u)\hspace{0.03cm}du = O\left(\log (n/\epsilon)\right),
$$
where the last inequality follows from $\wt{G_0}\le n$ and
  $\wt{G_{s-1}}\ge \epsilon/4$ (since
  any of its vertex cover, e.g., by taking all vertices on the left side,
  has weight at least $\epsilon/4$).
Thus,
$$
\wt{S}=\sum_{i=0}^{s-1}\wt{S_i}\le \frac{1}{d}\cdot
\sum_{i=0}^{s-1}\frac{\wt{G_{i}}-\wt{G_{i+1}}}{\wt{G_{i}}}
=o(\epsilon),
$$
by the choice of $d$ in (\ref{parameters1}).
This finishes the proof of the lemma.
\end{proof}

Note that because any of its vertex cover has weight $\Omega(\epsilon)$,
  we have $\wt{L^*}=\Omega(\epsilon)$.
Let $W=\wt{G_f^*}$.
Then we also have $W=\Omega(\epsilon)$ simply because every vertex
  in $G_f^*$ has degree at least one.
Since all vertices are light,
  we have in $G_f^*$ that $\deg(y)\le d\cdot W$ for all $y\in L^*$
  and $\inwt(j)\le d \cdot W\cdot \wt{j}$ for all $j\in R^*$. 

The bipartite graph $G_f^*$ is extremely useful for the analysis of our algorithm later.
Before that we make a short detour to sketch an informal analysis of
  the tester of Dolev and Ron \cite{DolevRon} (note that our dependency on $\epsilon$ here
  is worse than their analysis) which may help the reader better understand
  the construction so far.

First, let $R'\subseteq R^*$ be the set of vertices $j\in R^*$ such that
  $\inwt(j)\ge \wt{j}\cdot W/2$.
Then
\begin{align*}
W&=\sum_{j\in R^*} \inwt(j)
\le (W/2)\cdot \sum_{j\notin R'}\wt{j} + d\cdot W\cdot \sum_{j\in R'} \wt{j}
\le (W/2)+d\cdot W\cdot \wt{R'},
\end{align*}
which implies that $\wt{R'}=\Omega(1/d)$.
Moreover, every $S\subseteq R'$ satisfies the following nice property (below
we use $N(S)$ to denote the set of neighbors of $S$ in $G_f^*$):

\begin{lemma}\label{simple2}
In $G_f^*$, every $S\subseteq R'$ satisfies $\wt{N(S)}= \Omega\left(\wt{S}/d\right)$.
\end{lemma}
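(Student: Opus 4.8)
The plan is to lower bound $\wt{N(S)}$ by a double-counting argument on edge weights crossing between $S\subseteq R'$ and $N(S)\subseteq L^*$. First I would count the total edge weight incident to $S$ from below: by the definition of $R'$, every $j\in R'$ has $\inwt(j)\ge \wt{j}\cdot W/2$, and since $\inwt(j)$ is exactly the sum of $\wt{y}$ over edges $(y,j)\in E^*$, summing over $j\in S$ gives that the total edge weight on edges incident to $S$ is at least $(W/2)\cdot\wt{S}$. (Here I am using that distinct $j$'s contribute disjoint sets of edges, so no double counting occurs on the $S$-side.)

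Next I would upper bound the same quantity from the $L^*$-side. Every edge incident to $S$ has its left endpoint in $N(S)$, and an edge $(y,j)$ carries weight $\wt{y}$; so the total edge weight incident to $S$ is at most $\sum_{y\in N(S)} \wt{y}\cdot \deg_{G_f^*}(y)$. Since $G_f^*$ has no heavy vertex (Lemma \ref{vcvclem}), every $y\in L^*$ satisfies $\deg_{G_f^*}(y)\le d\cdot W$, so this sum is at most $d\cdot W\cdot \sum_{y\in N(S)}\wt{y}=d\cdot W\cdot \wt{N(S)}$.

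Combining the two bounds gives $(W/2)\cdot\wt{S}\le d\cdot W\cdot\wt{N(S)}$, and dividing through by $W$ (which is $\Omega(\epsilon)>0$) yields $\wt{N(S)}\ge \wt{S}/(2d)=\Omega(\wt{S}/d)$, as desired. The argument is short; the only point requiring a little care is making sure both the lower bound on the $S$-side and the upper bound on the $N(S)$-side are counting the \emph{same} set of edges with the \emph{same} edge weight $\wt{y}$, so that the inequality chains line up — once that bookkeeping is fixed, the two regularity facts ($j\in R'$ is "heavy enough" on one side, every $y\in L^*$ is "light" on the other) plug in directly. No genuine obstacle is anticipated here; the real work of the section is in how this lemma (and the regularity of $G_f^*$) gets used later in comparing the two stages of the algorithm.
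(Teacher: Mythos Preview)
Your proposal is correct and follows essentially the same double-counting argument as the paper: lower-bound the total edge weight incident to $S$ using the definition of $R'$ (each $j\in R'$ has $\inwt(j)\ge (W/2)\wt{j}$), upper-bound it via the lightness of left vertices in $G_f^*$ ($\deg(y)\le dW$), and divide through by $W$. The paper's proof is exactly this, with the same intermediate inequality $(W/2)\cdot\wt{S}\le d\cdot W\cdot\wt{N(S)}$.
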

\begin{proof}
Let $W=\wt{G_f^*}$. The total edge weight between $S$ and $N(S)$ in $G_f*$ is
$$
 \sum_{j\in S} \inwt(j)\le \sum_{y\in N(S)} \deg(y)\cdot \wt{y}.
$$
Because $S\subseteq R'$, the LHS is at least
$$
\sum_{j\in S} \inwt(j)\ge (W/2)\cdot \sum_{j\in S}\wt{j}
  =(W/2)\cdot \wt{S}.
$$
Since there is no heavy vertex in $G_f^*$, the RHS is at most
$$
\sum_{y\in N(S)} \deg(y)\cdot \wt{y}
\le d\cdot W \cdot \sum_{y\in N(S)}\wt{y}=d \cdot W\cdot \wt{N(S)}.
$$
The lemma follows by combining all these inequalities.
\end{proof}

\begin{remark}
We use $G_f^*$ and $R'$ to sketch an alternative and informal analysis of the tester of Dolev and Ron
\emph{\cite{DolevRon}}
  for well-supported distributions $\calD$ (which can be extended to general distributions).
Below we assume that $\epsilon$ is a constant for convenience
  so the dependency on $\epsilon$ is worse than that of \emph{\cite{DolevRon}}.
The tester starts by sampling a set $T$ of $\tilde{O}(\sqrt{n})$ pairs from
  the sampling oracle.
It then claims victory if there are two strings $x$ and $y$ from $T$ such that $f(x)=0$,
  $f(y)=1$, and $(y,h(x))$ is an edge in $G_f$. 

Let $T_1$ denote the set of $1$-strings,
  and $T_0$ denote the set of $0$-strings from $T$.
Also let $R''\subseteq R'$ denote the  set of $j\in R'$ such that
  $h(x)=j$ for some $x\in T_0$.
Since $\calD(R')=\wt{R'}=\tilde{\Omega}(1)$,
  we have $\wt{R''}=\tilde{\Omega}(1/\sqrt{n})$
  with high probability (here $\tilde{O}(\sqrt{n})$ samples
  suffice because there are only $n$ coordinates).
When this happens,
  by Lemma \ref{simple2} we have $\wt{N(R'')}=\tilde{\Omega}(1/\sqrt{n})$ as well.
The tester then rejects if
  one of the samples in $T_1$ lies in
  $N(R'')$.
This should happen with high probability if we set the hidden
  polylogarithmic factor in the number of queries large enough.
\end{remark}

Now we return to the analysis of our algorithm
  (actually we will not use $R'$ in our analysis).
Recall that $W=\wt{G_f^*}$.
Let $L'\subseteq L^*$ denote the set of $y\in L^*$ such that
  $\deg(y)\ge W/2$ in $G_f^*$. Then similarly
\begin{align*}
W=\sum_{y\in L^*} \deg(y)\cdot \wt{y}
\le (W/2)\cdot \sum_{y\notin L'}\wt{y} + d\cdot W\cdot \sum_{y\in L'} \wt{y}
\le (W/2)+d\cdot W\cdot \wt{L'},
\end{align*}
which implies that $\wt{L'}\ge 1/(2d)$.
Our analysis of Algorithm 2 heavily relies on $G_f^*$
  and $L'\subseteq L^*$. 

We summarize below all the properties we need about $G_f^*$ and $L'$.

\begin{property}\label{summary}
Assume that $\calD$ is well-supported with respect to $f$ and
  $\dist_\calD(f,\mconj)\ge \epsilon/2$.
Then\\ $G_f^*$ $=(L^*\cup R^*,E^*)$ and $L'\subseteq L^*$ defined
  above have the following properties \emph{(}letting $W=\wt{G_f^*}$\emph{)}.
\begin{enumerate}
\item  $W=\Omega(\epsilon)$ and $\wt{L'}\ge 1/(2d)$.\vspace{-0.18cm}
\item  
  $\emph{in-wt}(j)\le d\cdot W\cdot \wt{j}$ \hspace{0.08cm}for all $j\in R^*$.
  \emph{(}We only use the fact that vertices in $R^*$ are light.\emph{)}\vspace{-0.18cm}
\item Every $y\in L'$ has $\deg(y)\ge \max\left(1,W/2\right)$.
\end{enumerate}
\end{property}

\subsection{Analysis of Algorithm 2}\label{sec:mainanalysis}

We now prove Lemma \ref{main-lemma-2}.
Let $\calD$ be a well-supported probability distribution
  with respect to $f:\{0,1\}^n\rightarrow \{0,1\}$,
  such that $f$ is $(\epsilon/2)$-far from $\mconj$ with respect to $\calD$.
Let $G_f^*=(L^*\cup R^*,E^*)$ denote
  the bipartite graph defined using $f$ and $\calD$ in the previous subsection,
  with $G_f^*$ and $L'\subseteq L^*$ satisfying Property \ref{summary}.

Here is a sketch of the proof.
We first analyze Stages 1 and 2 of Algorithm 2 in Section \ref{sec:stage1},
  where we show that if a sequence of $t$ samples $(y^1,\ldots,y^t)$
  passes Stage 1 with high probability then it can be used to produce
  many sequences of strings that get rejected in Stage 2 with high probability.
Then in Section \ref{passpass}, assuming that $(f,\calD)$ passes
  Stage 1 with high probability without loss of generality,
  we use $G_f^*$ to show that $(f,\calD)$ must get rejected in Stage 2 with high probability,
  where Property \ref{summary} plays a crucial role.

\subsubsection{Analysis of Stages 1 and 2}\label{sec:stage1}

First we assume without loss of generality that $f(\11^n)=1$; otherwise
  it is rejected at the beginning of Stage 0.
As $f$ is $(\epsilon/2)$-far from~$\mconj$, we have that both $\calD(f^{-1}(0))$ and
  $\calD(f^{-1}(1))$ are at least~$\epsilon/2$.~The former follows trivially from the fact that the all-$1$ function is in $\mconj$.
For the latter, we only need to observe that the distance between $f$ and the conjunction
  of all $n$ variables with respect to $\calD$ is at most $\calD(f^{-1}(1))$, given $f(\11^n)=1$.
{Recall that since $\calD$ is well-supported with respect to $f$,
  we can skip Stage 0 and have Stage 1 and each~iteration of Stage 2
  start by drawing $(3t/\epsilon)$ fresh samples from~the sampling oracle.
It follows directly from Chernoff bound that~Stage 1 reaches Step 1.1 with probability $1-o(1)$.
Let $\calD^1$ denote the distribution of $y\in_R\calD$ conditioning on $y\in f^{-1}(1)$.
Equivalently, we have that
  Stage 1 accepts with probability $o(1)$, and with probability $1-o(1)$ it~draws a sequence of $t$ samples $y^1,\ldots,y^t$ independently from $\calD^1$ and
  then goes through Steps 1.1 and 1.2.

The same can be said about Stage 2: Stage 2 accepts with probability $o(1)$ by Chernoff bound
  and a union~bound;
  with probability $1-o(1)$, each iteration of Stage 2
  draws a sequence of $t-1$ samples $y^1,\ldots,y^{t-1}$ from $\calD^1$
  as well as one sample $x$ from $f^{-1}(0)$, proportional~to $\calD(x)$.
Since Steps 2.1 and 2.2 use only $\alpha=h(x)$ but~not the string $x$ itself,
this inspires us to introduce $\calD^0$ as the distribution over $R$
  proportional to $\wt{j}$, $j\in R$.
Hence equivalently, each iteration of Stage 2 draws an index $\alpha$ from $\calD^0$ and
  goes through Steps 2.1 and 2.2 using $y^i$ and $\alpha$.}

We introduce some notation.
Let $\calB=(B_1,\ldots,B_t)$ be a sequence of $t$
  (not necessarily distinct) $1$-sets of $f$ (i.e.,
  $f(B_i)=1$).
We refer to $\calB$~as~a \emph{$1$-sequence of length $t$}.
Let $B=\cup_i B_i$. 
We say $\calB$ \emph{passes Stage 1 with probability $c$}
  if $B$ passes Steps 1.1 and 1.2 with
  probability $c$, without being rejected.
Similarly, we let $\calB=(B_1,\ldots,B_{t-1})$ denote a $1$-sequence~of length $t-1$,
  with $B=\cup_i B_i$.
Let $\alpha\in R$.
Then we say $(\calB,\alpha)$ \emph{fails an iteration of Stage 2} with probability $c$
  if $(\calB,\alpha)$ gets rejected in Steps
  2.1 or 2.2 with probability $c$.

We now analyze $1$-sequences $\calB=(B_1,\ldots,B_t)$ that
  pass Stage 1 with high probability.
Let
$$
B_i^*=B_i-\cup_{j\ne i} B_j,\quad\text{for each $i\in [t]$.}
$$
So $B_i^*$ contains indices that are unique to $B_i$ among all
  sets in $\calB$.
Let $I_{\calB}$ denote the set of $i\in [t]$ such that $y_i\in L'$,
  where $y_i$ is the 1-string with $\zero(y_i)=B_i$.
Intuitively, $|I_\calB|$ should be large with high probability since $\calD(L')=\wt{L'}$ is large by
  Property \ref{summary}.
We say $\calB$ is \emph{strong} if
  $|I_{\calB}|\geq t/(3d)=r/3$.
Moreover, let
  $I_{\calB}^*$ denote the set of
  $i\in I_\calB$ such that $|B_i^*|\le 6 |B|/r$.

By an averaging argument we show that if $\calB$ is strong then $|I_\calB^*|$
  is at least $r/6$.

\begin{lemma}\label{ijcount}
If $\calB$ is strong, then we have $|I_{\calB}^*|\ge r/6.$
\end{lemma}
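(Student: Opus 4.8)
The statement to prove is Lemma~\ref{ijcount}: if $\calB$ is strong (i.e.\ $|I_\calB| \ge t/(3d) = r/3$), then $|I_\calB^*| \ge r/6$. Since $I_\calB^* \subseteq I_\calB$ is obtained by discarding those $i \in I_\calB$ with $|B_i^*| > 6|B|/r$, it suffices to show that at most $r/6$ indices $i$ (whether or not in $I_\calB$) can have $|B_i^*| > 6|B|/r$; then $|I_\calB^*| \ge |I_\calB| - r/6 \ge r/3 - r/6 = r/6$. So the entire content is a counting/averaging bound on the number of ``large-unique-part'' indices.

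\textbf{Key steps.} First I would record the crucial disjointness observation: the sets $B_1^*, \ldots, B_t^*$ are pairwise disjoint, since $B_i^* = B_i \setminus \bigcup_{j \ne i} B_j$ consists of indices belonging to $B_i$ and to no other $B_j$. Consequently $\sum_{i=1}^t |B_i^*| \le |B| = |\bigcup_i B_i|$ (in fact they sit inside $B$). Second, let $J = \{ i \in [t] : |B_i^*| > 6|B|/r \}$ be the set of indices with large unique part. Then
$$
|B| \ \ge\ \sum_{i=1}^t |B_i^*| \ \ge\ \sum_{i \in J} |B_i^*| \ >\ |J| \cdot \frac{6|B|}{r},
$$
which forces $|J| < r/6$. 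Third, conclude: $I_\calB^* = I_\calB \setminus J$, so $|I_\calB^*| \ge |I_\calB| - |J| > r/3 - r/6 = r/6$, and since $|I_\calB^*|$ is an integer this gives $|I_\calB^*| \ge r/6$ (or one can simply note $r/3 - r/6 = r/6$ exactly). I should double-check the boundary inequality: we need strict inequality somewhere to land at $r/6$ rather than just below, but the definition of $J$ uses a strict ``$>$'' while $I_\calB^*$ is defined with ``$\le$'', so $I_\calB^* = I_\calB \cap ([t] \setminus J)$ and the arithmetic $r/3 - r/6 = r/6$ is exact regardless.

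\textbf{Main obstacle.} Honestly there is no serious obstacle here — the lemma is a one-line pigeonhole argument once the disjointness of the $B_i^*$'s is noticed; the only thing to be careful about is bookkeeping with the constants ($6$, the threshold $6|B|/r$, and the bound $t/(3d) = r/3$ coming from $t = dr$) so that the subtraction yields exactly $r/6$ and not something smaller. One minor point worth stating explicitly is that the argument does not use anything about $L'$ or $G_f^*$ or well-supportedness: it is purely a statement about an arbitrary sequence of sets, and strongness is used only through the lower bound $|I_\calB| \ge r/3$. I would present it in that clean form.
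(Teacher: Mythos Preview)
Your proposal is correct and is essentially identical to the paper's own proof, which is the one-line averaging argument: since the $B_i^*$ are pairwise disjoint subsets of $B$, we have $\sum_i |B_i^*|\le |B|$, so fewer than $r/6$ indices can have $|B_i^*|>6|B|/r$, and subtracting from $|I_\calB|\ge r/3$ gives the claim. Your write-up simply spells out the disjointness and the bookkeeping that the paper leaves implicit.
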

\begin{proof}
As $\sum_{i} |B_i^*|\le|B|$, the number of $B_i$ with 
  $|B_i^*|> 6 |B|/r$ is at most $r/6$.
The lemma follows.
\end{proof}

Let $\calB=(B_1,\ldots,B_t)$ denote a strong $1$-sequence of length $t$ and
  $y_i$ denote the string with $\zero(y_i)=B_i$.
We   use it to generate input pairs $(\calB',\alpha)$ to
  Stage 2, where $\calB'$
  is a $1$-sequence of length $t-1$ and $\alpha\in R$, as follows.
For each pair $(i,\alpha)$ such that
  $i\in I_{\calB}^*$ and $\alpha\in B_i\bigcap R^*$, we say
  $\calB$ generates $(\calB',\alpha)$ via $(i,\alpha)$ if
$$
\calB'=(B_1,\ldots,B_{i-1},B_{i+1},\ldots,B_t),
$$
and we call such $(i,\alpha)$ a \emph{valid} pair.
Note that as $B_i$'s are not necessarily distinct, $\calB$
  may generate the same pair $(\calB',\alpha)$ via $(i,\alpha)$ and $(j,\alpha)$,
  $i\ne j$.
In the main technical lemma of this section, Lemma \ref{maincount} below,
  we show that if $\calB$ is strong and passes Stage 1 with high probability,
  then many $(i,\alpha)$ would lead to pairs $(\calB',\alpha)$ that fail
  Stage 2 with high probability.
Before that we make a few observations.
Recall $W=\wt{G_f^*}$.

\begin{observation}\label{obobob1}
Since $y_i\in L'$, we have $B_i\cap R^*=\deg(y_i)$ in $G_f^*$ and
  $|B_i\cap R^*|\ge \max\left(1,W/2\right)$.\\
So the total number of valid pairs $(i,\alpha)$ is bounded from below
  by both $r/6$ and $rW/12$.
\end{observation}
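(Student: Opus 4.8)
The plan is to unwind the definitions and combine Property \ref{summary} with Lemma \ref{ijcount}. First I would recall that for any $y\in f^{-1}(1)$, the neighbors of $y$ in the violation bipartite graph $G_f$ are exactly the indices $j\in R$ with $y_j=0$, i.e. $\zero(y)\cap R$. Since $G_f^*$ is the subgraph of $G_f$ induced on $L^*\cup R^*$ and $R^*\subseteq R$, the neighbors of a vertex $y\in L^*$ in $G_f^*$ are precisely $\zero(y)\cap R^*$, so $\deg_{G_f^*}(y)=|\zero(y)\cap R^*|$. Applying this to $y_i$, the $1$-string with $\zero(y_i)=B_i$, gives $\deg_{G_f^*}(y_i)=|B_i\cap R^*|$, which is the first claimed identity.

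Next I would invoke part 3 of Property \ref{summary}: membership $i\in I_\calB$ means exactly that $y_i\in L'$, and every $y\in L'$ satisfies $\deg_{G_f^*}(y)\ge\max(1,W/2)$. Hence $|B_i\cap R^*|\ge\max(1,W/2)$ for every $i\in I_\calB$, and in particular for every $i\in I_\calB^*$ since $I_\calB^*\subseteq I_\calB$.

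Finally, to count valid pairs I would note that $(i,\alpha)$ is valid exactly when $i\in I_\calB^*$ and $\alpha\in B_i\cap R^*$, so the number of valid pairs equals $\sum_{i\in I_\calB^*}|B_i\cap R^*|$. Since $\calB$ is strong, Lemma \ref{ijcount} gives $|I_\calB^*|\ge r/6$; combining this with $|B_i\cap R^*|\ge\max(1,W/2)\ge 1$ for each $i\in I_\calB^*$ yields at least $r/6$ valid pairs, and combining instead with $|B_i\cap R^*|\ge W/2$ yields at least $(r/6)(W/2)=rW/12$ valid pairs. There is no substantive obstacle here: the statement is a bookkeeping consequence of the regularity of $G_f^*$ recorded in Property \ref{summary} together with the averaging bound of Lemma \ref{ijcount}. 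The only point requiring care is the inclusion $I_\calB^*\subseteq I_\calB$, which is what lets the degree lower bound from $L'$ apply to every summand in the count of valid pairs.
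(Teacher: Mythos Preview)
Your proposal is correct and follows exactly the approach the paper intends: the observation is stated without proof because it is a direct unpacking of the definitions (neighbors of $y$ in $G_f^*$ are $\zero(y)\cap R^*$) together with Property~\ref{summary}(3) and Lemma~\ref{ijcount}, which is precisely what you do. Your care in noting $I_\calB^*\subseteq I_\calB$ so that the $L'$ degree bound applies to every index in the sum is the only non-automatic step, and you handle it correctly.
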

\begin{observation}\label{obobob2}
If a valid pair $(i,\alpha)$ satisfies $\alpha\in B_i\setminus B_i^*$
  (i.e., $\alpha$ is shared by another $B_j$ in $\calB$),
  then\\ it generates a pair $(\calB',\alpha)$ that fails Stage 2 (Step 2.1) with
  probability $1$.
\end{observation}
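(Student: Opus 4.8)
The plan is to unwind the definitions, since this observation falls right out of the description of Step 2.1 of Algorithm~2. The key point is what $\alpha\in B_i\setminus B_i^*$ means: by the definition $B_i^*=B_i-\bigcup_{j\ne i}B_j$, the conditions $\alpha\in B_i$ and $\alpha\notin B_i^*$ together say precisely that $\alpha$ lies in some other set $B_j$ of the $1$-sequence $\calB$ with $j\ne i$.

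Next I would look at the pair $(\calB',\alpha)$ that $\calB$ generates via $(i,\alpha)$: by definition $\calB'=(B_1,\ldots,B_{i-1},B_{i+1},\ldots,B_t)$, so the union of the $1$-sets in $\calB'$ is $\bigcup_{k\ne i}B_k$, which contains $B_j$ and hence contains $\alpha$. Now run one iteration of Stage~2 on input $(\calB',\alpha)$. The set denoted $B$ inside that iteration (and used in Step 2.1) is exactly this union $\bigcup_{k\ne i}B_k$, and Step 2.1 rejects as soon as it detects $\alpha\in B$. Since $\alpha\in B_j\subseteq B$, this rejection is triggered deterministically, so $(\calB',\alpha)$ is rejected in Step 2.1 — and therefore fails an iteration of Stage~2 — with probability $1$, as claimed.

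I do not anticipate any real obstacle; the only thing to keep straight is the overloading of the symbol $B$: in the statement of Step 2.1 it denotes the union of the sets in whatever $1$-sequence is fed to that iteration (here $\calB'$, whose union is $\bigcup_{k\ne i}B_k$), not the $B=\bigcup_k B_k$ associated with the original strong $1$-sequence $\calB$. Once this is noted, the argument is immediate and uses no randomness at all, since Step 2.1 is a deterministic membership check.
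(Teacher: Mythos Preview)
Your proposal is correct and is exactly the definition-unwinding the paper intends; the observation is stated without proof precisely because Step~2.1 is a deterministic membership check and $\alpha\in B_i\setminus B_i^*$ forces $\alpha\in\bigcup_{k\ne i}B_k$, which is the set $B$ used in that iteration of Stage~2.
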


Now we prove Lemma \ref{maincount}. 
\begin{lemma}\label{maincount}
Assume that $\calB=(B_1,\ldots,B_t)$ is a strong $1$-sequence
  that passes Stage 1 with probability\\ at least $1/2$.
Then there are at least $\Omega(rW)$ many
  valid $(i,\alpha)$ such that the pair $(\calB',\alpha)$
  generated by $\calB$ via\\ $(i,\alpha)$ fails an iteration of
  Stage 2 with probability at least $\Omega(1)$
  (a constant that does not depend on $n$ or $\epsilon$).
\end{lemma}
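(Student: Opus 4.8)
The plan is to charge, via a double counting, those valid pairs that \emph{fail} to generate a rejecting Stage-2 input against the event that $\calB$ passes Step~1.2, while exploiting the structure forced by Step~1.1 to keep the charging almost lossless.

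\textbf{Consequences of passing Stage 1.} First I would record what ``$\calB$ passes Stage~1 with probability $\ge 1/2$'' buys. Since Steps~1.1 and~1.2 draw independent random objects, each is individually passed with probability $\ge 1/2$. Letting $q$ be the fraction of $i\in B$ with $f(\{i\})=0$, Step~1.1 gives $(1-q)^s\ge 1/2$, hence $q=O(1/s)$; since every $j\in R$, and in particular every $j\in R^*$, satisfies $f(\{j\})=0$ by Property~\ref{simpleproperty}, this yields $|B\cap R^*|=O(|B|/s)$, i.e.\ $B$ meets $R^*$ in very few indices. From Step~1.2, for a uniformly random $r$-subset $Z$ of $B$ we get $\Pr_Z[f(Z)=0]=O(1/s)$. (I will assume $|B|\ge 2r$; the degenerate regime $|B|<2r$, in which Step~1.2 is essentially vacuous, is handled separately and more easily.)

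\textbf{Reduction to bounding the bad pairs.} Call a valid pair $(i,\alpha)$ \emph{good} if the pair $(\calB',\alpha)$ it generates fails an iteration of Stage~2 with probability at least $c_0:=1/2$; it suffices to exhibit $\Omega(rW)$ good pairs. Split the valid pairs into those with $\alpha\in B_i\setminus B_i^*$ — each good by Observation~\ref{obobob2} — and the set $\mathrm{Valid}^*$ of those with $\alpha\in B_i^*$; inside $\mathrm{Valid}^*$ let $\mathrm{Bad}$ be the set of $(i,\alpha)$ with $\Pr_P[f(P\cup\{\alpha\})=1]<c_0$, where $P$ is a uniform $(r-1)$-subset of $B'=\bigcup_{j\ne i}B_j=B\setminus B_i^*$. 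Every pair in $\mathrm{Valid}^*\setminus\mathrm{Bad}$ is good. Because $(i,\alpha)\mapsto\alpha$ is injective on $\mathrm{Valid}^*$ (the sets $B_i^*$ are pairwise disjoint) and each such $\alpha$ lies in $B\cap R^*$, Observation~\ref{obobob1} yields at least $rW/12-|\mathrm{Bad}|$ good pairs. Thus the lemma reduces to the single inequality $|\mathrm{Bad}|\le rW/24$.

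\textbf{The double count.} To bound $|\mathrm{Bad}|$ I would relate it to $\Pr_Z[f(Z)=0]$. Draw a uniform $r$-subset $Z$ of $B$, and for $(i,\alpha)\in\mathrm{Bad}$ let $\calE_{(i,\alpha)}$ be the event ``$\alpha\in Z$ and $Z\cap B_i^*=\{\alpha\}$''. Conditioned on $\calE_{(i,\alpha)}$, $Z$ is exactly $\{\alpha\}$ together with a uniform $(r-1)$-subset of $B'$, so $\Pr[f(Z)=0\mid\calE_{(i,\alpha)}]>1-c_0$; and since $|B_i^*|\le 6|B|/r$ (as $i\in I_\calB^*$) and $|B|\ge 2r$, a short computation gives $\Pr[\calE_{(i,\alpha)}]=\Theta(r/|B|)$. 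By inclusion--exclusion,
\[
\Pr_Z[f(Z)=0]\ \ge\ \sum_{(i,\alpha)\in\mathrm{Bad}}\Pr\!\big[f(Z)=0\ \wedge\ \calE_{(i,\alpha)}\big]\ -\ \sum_{(i,\alpha)\ne(i',\alpha')\in\mathrm{Bad}}\Pr\!\big[\calE_{(i,\alpha)}\wedge\calE_{(i',\alpha')}\big].
\]
The first sum is $\Omega\!\big(r|\mathrm{Bad}|/|B|\big)$. In the second, $\calE_{(i,\alpha)}\wedge\calE_{(i',\alpha')}$ forces two distinct indices of $R^*$ ($\alpha\ne\alpha'$ since $B_i^*,B_{i'}^*$ are disjoint) into $Z$, so each term is at most $(r/|B|)^2$ and the whole sum is $O\!\big((r|\mathrm{Bad}|/|B|)^2\big)$. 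Here Step~1.1 re-enters: $|\mathrm{Bad}|\le|B\cap R^*|=O(|B|/s)$ makes $r|\mathrm{Bad}|/|B|=O(r/s)=o(1)$, so the correction is lower order and $\Pr_Z[f(Z)=0]=\Omega\!\big(r|\mathrm{Bad}|/|B|\big)$. Combining with $\Pr_Z[f(Z)=0]=O(1/s)$ and $|B|\le n$,
\[
|\mathrm{Bad}|\ =\ O\!\left(\frac{|B|}{r\,s}\right)\ =\ O\!\left(\frac{n}{d\,r^{2}\log n}\right)\ =\ O\!\left(\frac{r}{d\log n}\right)\ \le\ \frac{rW}{24},
\]
the last step using $d=\log^{2}(n/\epsilon)/\epsilon$ from (\ref{parameters1}) and $W=\Omega(\epsilon)$, valid for $n$ large. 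This is exactly the inequality needed, and with it the lemma follows.

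\textbf{Main obstacle.} The delicate point is keeping the double count nearly lossless: a priori a single $Z$ could witness as many as $r$ distinct bad pairs, which would be far too wasteful. What rescues the argument is that passing Step~1.1 forces $B\cap R^*$ to be tiny, so a random $r$-subset of $B$ almost never meets two indices of $R^*$; this both kills the inclusion--exclusion correction and supplies the extra factor of roughly $r$ in the denominator that pushes $|\mathrm{Bad}|$ below $rW/24$. Pinning down the exact constants in $\Pr[\calE_{(i,\alpha)}]$ for all ranges of $|B|$ and disposing of the $|B|<2r$ corner case are the only remaining routine chores.
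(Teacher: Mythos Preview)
Your argument is correct and rests on the same two observations the paper uses: Step~1.1 forces $|B\cap R^*|\le O(|B|/s)$, and Step~1.2 forces $\Pr_Z[f(Z)=0]=O(1/s)$, with the coupling ``$Z$ conditioned on $Z\cap B_i^*=\{\alpha\}$ is $\{\alpha\}$ plus a uniform $(r-1)$-subset of $B\setminus B_i^*$'' linking the two stages. The paper organizes this slightly differently: it splits into three cases on $|\Gamma|$ (where $\Gamma=\big(\bigcup_{i\in I_\calB^*}B_i^*\big)\cap R^*$), and in the main case it lower-bounds $\sum_\alpha q_\alpha$ by making the sets $\calS_\alpha'$ \emph{disjoint} (restricting $P$ to avoid all of $\Gamma$), then averages. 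You instead upper-bound the bad pairs directly via inclusion--exclusion and kill the second-order term with the Step~1.1 bound. These are dual realizations of the same idea; your version avoids the case split and is arguably cleaner, while the paper's disjointness trick avoids tracking the inclusion--exclusion correction. One small remark: your ``corner case $|B|<2r$'' never actually arises --- since $\calB$ is strong, $B\cap R^*\neq\emptyset$, and then $|B\cap R^*|\ge 1$ together with $|B\cap R^*|\le |B|\ln 2/s$ already forces $|B|\ge s/\ln 2\gg r$.
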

\begin{proof}
For convenience, we use $I$ to denote $I_\calB^*$, with $|I|=\Omega(r)$ because
  $\calB$ is strong (Lemma \ref{ijcount}).
We let $B^*=\cup_{i\in I} B_i^*$, and let $\Gamma=B^*\cap R^*$
  (which can be empty).
We first consider two special cases on $|\Gamma|$.

\textbf{Case 1:} $|\Gamma|>|B|/t$.
Note that every $j\in \Gamma$ satisfies $f(\{j\})=0$.
This implies that $\calB$ would get rejected with probability
  $1-o(1)$ in Step 1.1, contradicting 
  the assumption that $\calB$ passes it with probability $1/2$.

\textbf{Case 2:} $|\Gamma|<rW/ 24 $.
By Observation \ref{obobob1}, the number of valid pairs  $(i,\alpha)$ is
  at least $rW/ 12 $.
In this case, however, the number of valid pairs $(i,\alpha)$ such that
  $\alpha\in B_i^*$ is at most $rW/24$.
Thus, the number of valid pairs $(i,\alpha)$ such that
  $\alpha\in B_i\setminus B_i^*$ is at least $rW/24$.
The lemma follows from Observation \ref{obobob2}.

\def\calZ{\mathcal{Z}}

In the rest of the proof we assume that
  $|B|\ge t|\Gamma|$ and $|\Gamma|=\Omega(rW)$.
They together imply that
\begin{equation}\label{hahaeq}
|B|\ge t|\Gamma|= \Omega(rtW).
\end{equation}

For $\alpha\in \Gamma$ let $s_\alpha\in [t]$ be the unique
  index with $\alpha\in B_{s_\alpha}^*$.
Now we need to do some counting.

Let $\cal Z$ denote the set of all subsets $Z\subset B$ of size $r$ such that $f(Z)=1$.
Since we assumed that $\calB$ passes Stage 1 with probability at least $1/2$, it must be the case that
$$
|\calZ|\ge \left(1-O\left(\frac{1}{s}\right)\right)\cdot {|B|\choose r}.
$$
Fixing an $\alpha\in \Gamma$ with $\alpha\in B^*_{s_{\alpha}}$,
  we are interested in
$$
\calS_\alpha=\Big\{P\cup \{\alpha\}:\text{$P$ is a subset of $B\setminus B_{s_\alpha}^*$
  of size $r-1$}\Big\}\ \ \ \text{and}\ \ \
N_\alpha=|\calS_\alpha\cap \calZ|.
$$
We would like to prove a strong lower bound for $\sum_{\alpha\in \Gamma} N_\alpha$.

To give some intuition on the connection between $N_\alpha$ and the goal, notice that
  $B\setminus B^*_{s_{\alpha}}=\cup_{i\ne s_{\alpha}}B_i$.
Let $(\calB',\alpha)$ be the pair generated from $\calB$ via $(s_\alpha,\alpha)$.
If a set $P$ of size $r-1$ is drawn from
  $\cup_{i\ne s_{\alpha}}B_i$  uniformly at random, then the probability
  of $P$ leading Step 2.2 to reject $(\calB',\alpha)$, denoted by $q_\alpha$,
  is
$$
q_{\alpha}=\frac{N_\alpha}{{|B\setminus B_{s_\alpha}^*|\choose r-1}}\ge \frac{N_\alpha}{{|B|\choose r-1}}=
\frac{N_\alpha}{{|B|\choose r}\cdot \frac{r}{|B|-r+1}} \ge \frac{N_\alpha}{{|B|\choose r}}
\cdot \frac{|B|}{2r},
$$
where the last inequality used (\ref{hahaeq}) that $|B|\gg r$.
So a strong bound for $\sum_{\alpha\in \Gamma} N_\alpha$
  may lead us to the desired claim that $q_\alpha$ is large for most $\alpha\in \Gamma$.
To bound $\sum_{\alpha\in \Gamma} N_\alpha$
and avoid double counting, let
$$
\calS_\alpha'=\Big\{P\cup \{\alpha\}:\text{$P$ is a subset of $B\setminus (B_{s_\alpha}^*\cup
  \Gamma)$ of size $r-1$}\Big\}\ \ \ \text{and}\ \ \
N_\alpha'=|\calS_\alpha'\cap \calZ|.
$$
Since $\calS_\alpha'\subseteq \calS_\alpha$ and
  $\calS_\alpha'$ are now pairwise disjoint, we have $\sum_\alpha N_\alpha\ge
  \sum_{\alpha} N_{\alpha}'$ and
$$
\sum_{\alpha\in \Gamma} N_\alpha'
  =\Big|(\cup_{\alpha\in \Gamma} \calS_\alpha')\cap \calZ\Big|
  \ge \big|\cup_{\alpha\in \Gamma} \calS_\alpha'|+|\calZ|-{|B|\choose r}\ge
  \sum_{\alpha\in \Gamma} |\calS_\alpha'| -O\left(\frac{1}{s}\right)\cdot {|B|\choose r}.
$$

On the other hand, by the definition of $I_\calB^*$ we have
  $|B_{s_\alpha}^*|\le 6 |B|/r$. We also have
  $\Gamma\le |B|/t$. Thus
\begin{equation}\label{hairy}
|\calS_\alpha'|={|B\setminus (B_{s_\alpha}^*\cup \Gamma)|\choose r-1}
\ge {|B|-(7 |B|/r)\choose r-1}=\Omega\left(\frac{r}{|B|}\cdot {|B|\choose r}\right),
\end{equation}
where details of the last inequality can be found in Appendix \ref{proof:hairy}.

Using $|\Gamma|= \Omega(rW)$ and $W=\Omega(\epsilon)$,
  $r=n^{1/3}$ and $|B|\le n$, we have
$$
\sum_{\alpha\in \Gamma} |\calS_\alpha'|=\Omega\left(\frac{r|\Gamma|}{|B|}\cdot {|B|\choose r}\right)
= \omega\left(\left(\frac{1}{s}\right)\cdot {|B|\choose r}\right).
$$
As a result, we obtain the following lower bound for $\sum_{\alpha\in \Gamma} N_\alpha$:
$$
\sum_{\alpha\in \Gamma} N_\alpha=\Omega\left(\frac{r|\Gamma|}{|B|}\cdot {|B|\choose r}\right).
$$
Combining the connection between $N_\alpha$ and
  $q_\alpha$, we have $\sum_{\alpha\in\Gamma}q_{\alpha}=\Omega(|\Gamma|)$.
Since $q_\alpha\le 1$ (it is a probability) for all $\alpha$,
  it follows easily that $q_\alpha=\Omega(1)$ for $\Omega(|\Gamma|)$ many $\alpha$'s in $\Gamma$.
For each such $\alpha$, $(s_\alpha,\alpha)$ is a valid pair
  via which $\calB$ generates a pair $(\calB',\alpha)$ that gets rejected
  by Stage 2 with probability $\Omega(1)$.

The lemma then follows from $|\Gamma|= \Omega(rW)$.
\end{proof}

\subsubsection{Finishing the Proof of Lemma \ref{main-lemma-2}}\label{passpass}

Now we combine Lemma \ref{maincount} and $G_f^*$, $L'$ to
  finish the proof of Lemma \ref{main-lemma-2}.

Assume without loss of generality that Stage 1 of Algorithm 2 either accepts
  $(f,\calD)$ or passes it down to Stage 2 with probability at least $0.9$;
  otherwise we are already done.

\def\calQ{\mathcal{Q}}
\def\calP{\mathcal{P}}

Recall that  $\calD^1$ is the distribution of $y\in_R \calD$
  conditioning on $y\in f^{-1}(1)$.
We abuse the~notation a little bit and also use $\calD^1$ to denote
  the corresponding distribution on 1-sets. 
Given a $1$-seqnence $\calB=(B_1,\ldots,B_t)$ of length $t$, we write
  $p(\calB)=\Pr_{\calD^1}[B_1]\times \cdots \times \Pr_{\calD^1}[B_t]$. 
From our discussion earlier, Stage 1 accepts $(f,\calD)$ with probability $o(1)$, and
  with probability $1-o(1)$, it runs Steps 1.1 and 1.2 on a $1$-sequence $\calB$ with
  each entry $B_i$ drawn from $\calD^1$ independently.
This implies that 
$$
\sum_{\text{$1$-seq\ }\calB} p(\calB)\cdot \Pr[\hspace{0.03cm}\calB\text{\ passes Stage 1}\hspace{0.03cm}]\ge 0.8.
$$

We focus on strong $1$-sequences. We write $S$ to denote the set of strong $1$-sequences
  and let $S'$ denote the set of strong $1$-sequences that pass Stage 1 with
  probability at least $1/2$.
Because $\calD(L')=\wt{L'}\ge 1/(2d)$ we have that Stage 1 draws a
  strong $\calB$ with probability $1-o(1)$ by Chernoff bound.
As a result, we have
$$
\sum_{\calB\in S} p(\calB)\cdot \Pr[\hspace{0.03cm}\calB\text{\ passes Stage 1}
\hspace{0.03cm}]\ge 0.8-o(1)>0.7.
$$
But the LHS is at most
$$
\sum_{\calB\in S} p(\calB)\cdot \Pr[\hspace{0.03cm}\calB\text{\ passes Stage 1}\hspace{0.03cm}]\le
(1/2)\cdot \sum_{\calB\in S\setminus S'} p(\calB)
+\sum_{\calB\in S'}p(\calB)\le (1/2)+\sum_{\calB\in S'}p(\calB)
$$
and thus, $\sum_{\calB\in S'} p(\calB)=\Omega(1)$.
The remaining proof is to use this (combined with Lemma \ref{maincount},
  $G_f^*$ and $L'$) to show that a random pair
  $(\calB',\alpha)$ gets rejected in Stage 2 with high probability.

To this end, recall that $\calD^0$ is the distribution
  over $R$ proportional to $\wt{j}$, $j\in R$.
For each pair $(\calB',\alpha)$, where $\calB'$ is
  a $1$-sequence of length $t-1$ and $\alpha\in R$,
  let $q(\calB',\alpha)=\Pr_{\calD^1}[B_1']\times \cdots\times
  \Pr_{\calD^1}[B_{t-1}']\cdot \Pr_{\calD^0}[\alpha]$.

Since Stage 2 consists of $d^*=d^2/\epsilon$ iterations,
  it suffices to show that
\begin{equation}\label{tututu}
\sum_{(\calB',\alpha)} q(\calB',\alpha) \cdot \Pr[\hspace{0.03cm}\text{$(\calB',\alpha)$
  fails an iteration of Stage 2}\hspace{0.03cm}]=\Omega(\epsilon/d^2),
\end{equation}
{as Stage 2 either accepts with~probability $o(1)$, or with probability
  $1-o(1)$ each iteration of Stage 2 draws $(\calB',\alpha)$ according to $q(\cdot)$
  and runs it through Steps 2.1 and 2.2.}

\def\gen{\mathrm{gen}}

To take advantage of Lemma \ref{maincount} we use $T$ to denote
  the set of $(\calB',\alpha)$ that is generated by a $\calB$ from
  $S'$ via a pair $(i,\alpha)$ and fails an iteration of Stage 2 with probability $\Omega(1)$
  (the same constant hidden in Lemma \ref{maincount}).
For (\ref{tututu}) it then suffices to show that
\begin{equation}\label{lululu}
\sum_{(\calB',\alpha)\in T} q(\calB',\alpha) =\Omega(\epsilon/d^2).
\end{equation}
Lemma \ref{maincount} implies that for each $\calB$ in $S'$,
  there exist $\Omega(rW)$ many valid $(i,\alpha)$ such that the pair
  generated by $\calB$ via $(i,\alpha)$ belongs to $T$ (though these
  $(\calB',\alpha)$'s are not necessarily distinct).
We use $J_\calB$ to denote these pairs of $\calB$.
We also write $(\calB^i,\alpha)$ to denote the
  pair generated by $\calB$ via $(i,\alpha)$ for convenience.

Then there is the following connection between probabilities
  $p(\calB)$ and $q(\calB^i,\alpha)$:
$$
q(\calB^i,\alpha) =\frac{p(\calB)}{\Pr_{\calD^1}[B_i]}\cdot \mathrm{Pr}_{\calD^0}[\alpha]
= p(\calB)\cdot \frac{\calD(f^{-1}(1))}{\calD(B_i)}\cdot \frac{\wt{\alpha}}{\wt{R}}
  \ge \frac{\epsilon}{2}\cdot p(\calB)\cdot \frac{\wt{\alpha}}{\calD(B_i)},
$$
where the inequality follows from $\wt{R}\le 1$ and $\calD(f^{-1}(1))\ge \epsilon/2$
  since $f$ is $(\epsilon/2)$-far from $\mconj$ with respect to $\calD$.
The only obstacle for (\ref{lululu}) is to handle the double counting.
This is where $G_f^*$ and $L'$ help.

Consider the following sum (and its connection to (\ref{lululu})):
\begin{equation}\label{main-sum}
\sum_{\calB\in S'} p(\calB)\cdot |J_\calB|.
\end{equation}
On the one hand, as $|J_\calB|=\Omega(rW)$ and
  $\sum_{\calB\in S'} p(\calB)=\Omega(1)$, the sum is $\Omega(rW)$.
On the other hand,
\begin{equation}\label{tititi}
\text{(\ref{main-sum})}=\sum_{\calB\in S'}\sum_{(i,\alpha)\in J_\calB} p(\calB)
\le \frac{2}{\epsilon}\cdot \sum_{\calB\in S'} \sum_{(i,\alpha)\in J_\calB} q(\calB^i,\alpha) \cdot \frac{\calD(B_i)}{ \wt{\alpha}}.
\end{equation}
Focusing on any fixed pair $(\calB',\alpha)$ in $T$,
  the coefficient of $q(\calB',\alpha) $ in (\ref{tititi}) is given by
\begin{equation}\label{coe}
\frac{2}{\epsilon\cdot \wt{\alpha}}\cdot \hspace{0.06cm}\sum_{\substack{\calB\in \calS',\hspace{0,03cm} (i,\alpha)\in J_\calB\\ \calB^i=\calB'}}
 {\calD(B_i)}.
\end{equation}
However, fixing an $i\in [t]$, for
  $\calB$ to generate $(\calB',\alpha)$ via $(i,\alpha)$, a necessary condition
  is $\alpha\in B_i$.
This implies that the string $y$ satisfying $\zero(y)=B_i$ must be a neighbor
  of $\alpha$ in $G_f^*$ (since $y\in L'$ by definition).
As a result it follows from Property \ref{summary} that
  the sum of (\ref{coe}) with $i$ fixed is at most $2dW/\epsilon$ (with $\wt{\alpha}$ cancelled) and thus,
  the coefficient of $q(\calB',\alpha)$ of each $(\calB',\alpha)\in T$ in (\ref{tititi})
  is $O(tdW/\epsilon)$.

Combining all these inequalities, we have
$$
\Omega(rW)=\sum_{\calB\in S'} p(\calB)\cdot |J_\calB|
\le O\left(\frac{tdW}{\epsilon}\right)\cdot \sum_{(\calB',\alpha)\in T} q(\calB',\alpha) ,
$$
and (\ref{tututu}) follows.
This finishes the proof of Lemma \ref{main-lemma-2}, and completes the analysis of Algorithm 2.

\section{Lower Bound: Proof of Theorem \ref{lowertheorem}}\label{sec:lower}

In this section, we present a lower bound of $\tilde{\Omega}(n^{1/3})$
  for the distribution-free testing of monotone conjunctions,
  and prove Theorem \ref{lowertheorem}.
Our proof is based on techniques used in the $\tilde{\Omega}(n^{1/5})$
  lower bound of Glasner and Servedio
  \cite{GlasnerServedio}, with certain careful modifications on their construction and arguments.

We start by presenting two distributions of pairs $(f,\calD)$, $\yesd$ and
  $\nod$, in Section \ref{two-distributions}, such that
\begin{enumerate}
\item Every pair $(f,\calD_f)$ in the support of $\yesd$ has $f\in \mconj$; and
\item Every pair $(g,\calD_g)$ in the support of $\nod$ has $\dist_{\calD_g}(g,\mconj)\ge 1/3$.
\end{enumerate}

Let $ q=n^{1/3}/\log^3 n$. 
Let $T$ be a deterministic (and adaptive) oracle algorithm that, upon $(f,\calD)$,
  makes no more than $q$ queries to the sampling oracle and the black-box oracle each.
(Note that even though $T$ is deterministic, each of its query to the sampling
  oracle returns a pair $(x,f(x))$ with $x$ drawn from $\calD$.)

Our main technical lemma in this section
  shows that $T$ cannot distinguish $\yesd$ and $\nod$.

\begin{lemma}\label{lowerlemma}
Let $T$ be a deterministic oracle algorithm that makes at most $q$
  queries to each oracle. Then
\begin{align*}
\left|\hspace{0.05cm}\Pr_{(f,\mathcal{D}_f)\sim\mathcal{YES}}\big[\hspace{0.03cm}
T(f,\mathcal{D}_f)\ \text{accepts}\hspace{0.03cm}\big]-\Pr_{(g,\mathcal{D}_g)\sim\mathcal{NO}}\big[
\hspace{0.03cm}T(g,\mathcal{D}_g)\text{\ accepts}\hspace{0.03cm}\big]\hspace{0.05cm}\right|\leq \frac{1}{4}.
\end{align*}
\end{lemma}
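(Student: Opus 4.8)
The plan is to use the standard two-distribution / indistinguishability argument, but the real work is a coupling between the view of $T$ under $\yesd$ and under $\nod$. First I would set up the sampling phase so that, without loss of generality, $T$ receives all $q$ samples from $\calD$ at the outset (this is justified in the footnote earlier in the excerpt) and only afterwards makes black-box queries. Under both $\yesd$ and $\nod$ the underlying random object is the same collection of sets: the random partition of $R$ into $n^{1/3}$ blocks, the $n^{2/3}$ sets $C_i$ (each the union of $2\log^2 n$ blocks plus two private indices $\alpha_i,\beta_i$), and the split of each $C_i$ into $A_i,B_i$. So I can couple $\yesd$ and $\nod$ by using literally the same draw of these objects, and the only difference is which labels $f$ versus $g$ assigns and what weight $\calD_f$ versus $\calD_g$ places. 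The claim is: conditioned on the high-probability ``good'' event described below, the transcript $T$ sees is identically distributed under this coupling, hence $T$'s accept/reject decision is the same, and the two acceptance probabilities differ by at most the failure probability of the good event, which I will bound by $1/4$ (in fact $o(1)$, leaving slack).

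Next I would define the good event $\mathcal{G}$ as the conjunction of several sub-events, each of which I expect to fail with probability $o(1)$: (i) by the birthday paradox, the $q = n^{1/3}/\log^3 n$ samples touch at most one string from each triple $(a^i,b^i,c^i)$, so in the sampling phase $T$ just sees a bunch of pairwise-disjoint-looking sets $\zero(x)$ of two sizes and cannot tell $\yesd$ from $\nod$ from the samples alone; (ii) the covering property emphasized in the introduction — with $\tilde O(n^{1/3})$ sampled sets, only a $o(1)$-fraction of each sampled set is covered by the union of the others — so block structure and private indices are not revealed by intersections; (iii) no black-box query $z$ of $T$ has a zero-entry outside $R$ that would let it distinguish (handled since both $f,g$ are $0$ there, analogously to \cite{GlasnerServedio}); and (iv) the key loophole events do not occur: for every black-box query $z$, and every $i$, if $\zero(z)$ hits the private index $\alpha_i$ (or $\beta_i$) then $\zero(z)\cap A_i$ (resp. $B_i$) is large, so the binary search procedure and direct queries cannot isolate $\alpha_i$. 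For each query $z$, since $\cup_i C_i$ has size $\approx n/2$, a uniformly-random-looking $z$ hits any fixed private index with probability $O(1/n)$ and, conditioned on hitting it, the rest of the relevant block is missed with probability only $2^{-\Omega(\log^2 n)} = n^{-\omega(1)}$; a union bound over $\le q \cdot \mathrm{polylog}(n)$ query-index pairs (including the $O(\log n)$ internal queries of each binary search) keeps this $o(1)$.

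Then I would carry out the simulation argument: I describe an explicit simulator that answers all of $T$'s black-box queries with no access to the oracle, using only the sets received in the sampling phase, and I argue that on $\mathcal{G}$ this simulator agrees with both $f$ (under $\yesd$) and $g$ (under $\nod$). Concretely, on a query $z$: if $\zero(z)$ contains an index outside the union of received sets, answer $0$ (correct for both on $\mathcal{G}$); otherwise, the intricate definition of $g$ outside $\mathrm{supp}(\calD_g)$ is exactly engineered — this is where I would need the fine-tuned construction of $\nod$ — so that the answer depends only on data $T$ already has, and coincides with what $f$ would answer. Hence $\Pr_{\yesd}[\text{accept}] = \Pr[\text{simulator accepts}] \pm \Pr[\neg\mathcal{G}]$ and likewise for $\nod$, and the two simulator-acceptance probabilities are equal because the simulator's input distribution (the received sets) is identical under the coupling on $\mathcal{G}$.

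The main obstacle — and the part that will require the ``more delicate construction'' and ``tighter analysis'' the introduction promises — is item (iv) together with the design of $g$ off the support of $\calD_g$: because $\cup_i C_i \approx n/2$ rather than $o(n)$ as in \cite{GlasnerServedio}, $T$ has a genuine incentive to make clever black-box queries with zero-entries concentrated inside $R$, and one must enumerate every such loophole (isolating an $\alpha_i$ via binary search, via a targeted small query, via combining information from overlapping $C_i$'s) and show each fails with probability $o(1/q)$. The overlapping-$C_i$ structure is what makes this nonroutine; the covering property (ii) is the quantitative lemma that tames it. Everything else — the birthday-paradox bound, the Chernoff/union bounds, and the observation that $g\in\nod$ is $1/3$-far — is routine once the construction in Section \ref{two-distributions} is in place.
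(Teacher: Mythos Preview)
Your outline is the right one---define a simulator $T'$ with no black-box access, show $T'$ cannot distinguish $\yesd$ from $\nod$, and show $T'\approx T$ under each distribution---and this is exactly the paper's decomposition into Lemmas \ref{noquerylemma}, \ref{yeslemma}, \ref{nolemma}. But your simulator is mis-specified in a way that breaks the argument.

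You propose that the simulator answer $0$ whenever $\zero(z)$ contains an index \emph{outside the union $S(Q)$ of received sets}. This does not work here, and the introduction already flags why: $|R|\approx n/2$ while $|S(Q)|=O(n/\log^3 n)$, so an index outside $S(Q)$ lies in $R$ with probability $\approx 1/2$. For such $z$ (say $\zero(z)=\{j\}$ with $j\in R\setminus S(Q)$ and $j$ not an $\alpha_i$) we have $f(z)=1$, and your simulator's answer of $0$ is simply wrong---not with probability $o(1)$ but with constant probability. The fix the paper uses is to \emph{give the simulator $R$ for free}: $T'$ returns $p(z,R,Q)=0$ iff $z_i=0$ for some $i\notin R$ or some $i\in\Gamma(Q)$, and $1$ otherwise. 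Lemma \ref{noquerylemma} then shows that even with $R$ (indeed, with the full unlabelled collection $J=((C_i,\{A_i,B_i\},\{\alpha_i,\beta_i\}))$) the simulator cannot distinguish, because the joint law of $(J,Q)$ is the same under $\yesd$ and $\nod$ up to the birthday-paradox collision event.

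Your item (iv) also has the wrong probability model. The query $z$ is chosen adaptively by $T$, not ``uniformly-random-looking,'' so the bound ``hits any fixed private index with probability $O(1/n)$'' is not available. The paper's replacement is a \emph{good/bad leaf} argument on the decision tree $T_Q$: for each leaf $w$ let $H_w\subseteq [n]\setminus S(Q)$ be the indices that must lie in $R$ for $T'$ to reach $w$. A leaf with $|H_w|\ge 0.02\,n^{1/3}$ is reached by $T'$ with probability $\le 2^{-0.02\,n^{1/3}}$ (since $R\setminus S(Q)$ is uniform), and a union bound over $2^q$ leaves kills these. For a good leaf, $|H_w|<0.02\,n^{1/3}$, and \emph{now} one argues that a hidden $\alpha_i$ lands in $H_w$ with probability $<0.07$. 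Under $\nod$ this good-leaf analysis splits further into events $E_0,E_1,E_2,E_3$ (hidden $\alpha_k$ outside $S$; $\alpha_k\in S\setminus\Gamma$ with the $i$-special condition failing on the $A_k$ side or on the $B_k$ side; and $\alpha_k\in\Gamma$ but $z$ is $k$-special), each requiring its own estimate; your sketch conflates these. The ``separated'' event (your item (ii)) is used, but as a conditioning device to control $E_1$--$E_3$, not as a standalone indistinguishability claim.
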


Theorem \ref{lowertheorem} then follows directly from Lemma \ref{lowerlemma}
  by Yao's minimax lemma.

\subsection{The Two Distributions $\yesd$ and $\nod$}\label{two-distributions}
We need some notation.
For strings $x,y\in\{0,1\}^n$, we use $x\wedge y\in \{0,1\}^n$ to denote
  the bitwise AND of $x$ and $y$, 
  and $x\vee y\in \{0,1\}^n$ to denote the bitwise OR of $x$ and $y$.

We use the following parameters in the definition of the two distributions:
\begin{align}
\notag h={ \frac{n^{2/3}}{2\log^2 n}},\quad r=n^{1/3}\log^2 n,\quad\ell=  {n^{2/3}} +2 ,\quad m= n^{2/3},
\quad\text{and}
\quad { s=\log^2 n}. 
\end{align}

\subsubsection{The  Distribution $\mathcal{YES}$}\label{yesdist}

A draw $(f,\mathcal{D}_f)$ from the distribution $\mathcal{YES}$ is obtained using
  the following procedure:
\begin{flushleft}\begin{enumerate}
\item Select a set $R$ of size $hr+2m=(n/2)+2n^{2/3}$ from $[n]$ uniformly at random.\vspace{-0.16cm}
\item Select a  tuple of $2m$ different indices
  $(\alpha_1, \ldots,\alpha_m,\beta_1, \ldots,\beta_m)$ from $R$
  uniformly at random.\vspace{-0.16cm}
\item Partition $R'=R\setminus\{\alpha_1, \ldots,\alpha_m,\beta_1, \ldots,\beta_m\}$ into $r$ sets
  of the same size $h$ uniformly at random.\newline
We refer to each such set as a \emph{block}. \vspace{-0.16cm}
   \item For each $i\in [m]$, select $ 2\log^2 n$ blocks uniformly at random
   (and independently for different $i$'s)
   and  let $C'_i$ be their union.
So $|C'_i|=\ell-2$. Let $C_i=C'_i\cup\{\alpha_i,\beta_i\}$ for each $i\in [m]$ and
  thus, $|C_i|=\ell$.\vspace{-0.16cm}

\item For each $i\in [m]$, select $ \log^2 n$ blocks from $C'_i$
  uniformly at random and call their union together with $\{\alpha_i\}$ to be $A_i$; let $B_i=C_i\setminus A_i$. Then $A_i$ and $B_i$ partition $C_i$ and $|A_i|=|B_i|=\ell/2$.
  \vspace{-0.16cm}

\item
We define two Boolean functions $f_1,f_2:\{0,1\}^n\rightarrow \{0,1\}$ as follows:
$$
f_1(x_1,\ldots,x_n)=\bigwedge_{j\notin R} x_j\quad \text{and}\quad
f_2(x_1,\ldots,x_n)=x_{\alpha_1}\wedge x_{\alpha_2}\wedge\cdots\wedge x_{\alpha_m}.
$$
Finally, we define $f:\{0,1\}^n\rightarrow \{0,1\}$ as $f(x)=f_1(x)\wedge f_2(x)$.\vspace{-0.13cm}

\item We define distribution $\mathcal{D}_g$ as follows.
For each $i\in [m]$, let $a^i,b^i,c^i\in\{0,1\}^n$ denote the three strings
  with $A_i=\zero(a^i)$, $B_i=\zero(b^i)$, and $C_i=\zero(c^i)$. Then we have
    $f(b^i)=1$ and $f(a^i)=f(c^i)=0$.
The probabilities of $b^i$ and
  $c^i$ in $\mathcal{D}_g$ are $2/(3m)$ and $1/(3m)$, respectively, for each $i\in [m]$.
All other strings have probability zero in $\calD_g$.
\end{enumerate}\end{flushleft}
It is clear that any pair $(f,\mathcal{D}_f)$ drawn from
  $\mathcal{YES}$ has $f\in \mconj$ as promised earlier.

\subsubsection{The Distribution $\mathcal{NO}$}\label{nodist}

A draw $(g,\mathcal{D}_g)$ from the distribution $\mathcal{NO}$ is obtained using the following procedure:
\begin{flushleft}\begin{enumerate}
\item Follow the first six steps of the procedure for $\yesd$ to
  obtain
$R,A_i,B_i,C_i,\alpha_i,\beta_i,f_1,f_2$.\vspace{-0.16cm}

\item We say a string $x\in\{0,1\}^n$ is \emph{$i$-special}, for some $i\in [m]$, if it
  satisfies both conditions:\vspace{-0.1cm}
\begin{enumerate}
\item {there are at least $3\log^2 n/4$ many blocks in $A_i$,
  each of which has (strictly)\\ more than $s$ indices $j$ in it with $x_j=0$; and
\vspace{-0.03cm}}
\item
{there are at least $3\log^2 n/4$ many blocks in $B_i$,
  each of which has at most\\ $s$ indices $j$ in it with $x_j=0$.\vspace{-0.02cm}}
\end{enumerate}

\item We use $f_2$ to define a new Boolean function $g'$.
If $f_2(x)=0$ but $x$ is $i$-special for every $i$ such that $x_{\alpha_i}=0$,
  then set $g'(x)=1$; otherwise $g'(x)=f_2(x)$.
Finally, we define $g(x)=f_1(x)\wedge g'(x)$.\vspace{-0.14cm}

\item Recall the definition of strings $a^i,b^i$ and $c^i$ from $A_i,B_i$ and $C_i$.
The probability of each of these \\ $3m$ strings
  $a^i,b^i,c^i$ is set to be $1/(3m)$ in $\calD_f$, and all other strings
  have probability zero in $\calD_f$.
\end{enumerate}\end{flushleft}

It's easy to verify that for each $(g,\mathcal{D}_g)$ drawn from the $\mathcal{NO}$ distribution, we have
$$
g(a^i)=g(b^i)=1\quad \text{but}\quad \ g(c^i)=g(a^i\wedge b^i)=0.
$$
Note that $f\in \mconj$ satisfies
  $f(x\wedge y)=f(x)\wedge f(y)$.
As a result,
  at least one of $g(a^i),g(b^i)$ or $g(c^i)$ must
  be changed in order to make $f$ a monotone conjunction.
Thus, $\dist_{\calD_g}(g,\mconj)\ge 1/3$
  as promised.

\subsubsection{The Strong Sampling Oracle}\label{strongoracle}

In the rest of the section, $(f,\calD)$ is drawn from either
  $\yesd$ or $\nod$.
While each query to the sampling oracle returns a pair $(x,f(x))$,
  $f(x)$ is redundant given the definition of $\yesd$ and $\nod$:
  $f(x)=0$ if $|\zero(x)|=\ell$ and $f(x)=1$ if $|\zero(x)|=\ell/2$.

For clarity of the proof,
  we assume that $T$ has access to a sampling oracle
  that sometimes returns extra information in addition to $x\sim\calD$.
Each time $T$ queries, the oracle draws a string $x\sim \calD$. Then
\begin{flushleft}\begin{enumerate}
\item {If $x=c^k$ for some $k\in [m]$,
  the oracle returns a pair $(C_k,\alpha_k)$.
(For the lower bound proof it is easier to work on sets instead of strings so
  we let the oracle return $C_k$ instead of $c^k$.
The extra information
  in the pair is the special variable index $\alpha_k\in C_k$.)}
\item If $x=a^k$ or $b^k$ for some $k\in [m]$ (the former happens only if $(g,\calD_g)\sim \nod$),
  the oracle returns $(\zero(x),\nil)$ (so no extra information for this case).
\end{enumerate}\end{flushleft}
We will refer to this oracle as the \emph{strong sampling oracle}.
In the rest of the section we show that Lemma~\ref{lowerlemma}
  holds even if $T$ can make $q$ queries to the strong sampling oracle
  and the black-box oracle each.

Let $T$ be such an algorithm. 
Without loss of generality, we assume that $T$ starts by making
  $q$ queries to the strong sampling oracle.
Let $Q=((D_i,\gamma_i):i\in [q])$ denote the sequence of $q$ pairs that $T$ receives~in~the sampling phase, where
  each pair $Q_i=(D_i,\gamma_i)$ has either $|D_i|=\ell/2$ and $\gamma_i=\nil$ (meaning
  that $D_i$~is~$A_{k}$~or $B_{k}$ for some $k$)
  or $|D_i|=\ell$ and $\gamma_i\in D_i$ (meaning
  that $D_i$ is $C_{k}$ and $\gamma_i$ is $\alpha_{k}$ for some $k\in [m]$).
Let $\Gamma(Q)$ denote the set of integer $\gamma_i$'s in $Q$, i.e.,
  $\alpha_k$'s revealed in $Q$,
  $S(Q)\subset [n]$ denote $\cup_{i\in [q]} D_i$, and
  $I(Q)\subseteq [q]$ denote the set of $i\in [q]$ such that $|D_i|=\ell/2$.

\subsection{Simulating $\boldsymbol{T}$ with No Access to the Black-Box Oracle}\label{defT'}

Our proof of Lemma \ref{lowerlemma} follows the high-level strategy of
  Glasner and Servedio \cite{GlasnerServedio}.
We derive~a new deterministic oracle algorithm $T'$ from $T$
  that has \emph{no access} to the black-box oracle.
We then show that such an algorithm $T'$ cannot
  distinguish the two distributions $\yesd$ and $\nod$ (Lemma \ref{noquerylemma}) but $T'$
  agrees with $T$ most of the time (Lemma \ref{yeslemma} and Lemma
  \ref{nolemma}), from which Lemma \ref{lowerlemma} follows.

Now we define $T'$ from $T$.
In addition to a sequence $Q$ of $q$ samples,
  $T'$ receives the set $R\subset [n]$ used in both procedures
  for $\yesd$ and $\nod$ for free.
Given $R$ and $Q$, $T'$ simulates $T$ on $Q$ as follows (note that $T$ is not
  given $R$ but receives only $Q$ in the sampling phase): whenever $T$ queries
   about $z\in \{0,1\}^n$,
  $T'$ does not query the black-box but passes the following bit $p(z,R,Q)$ back to $T$:
\begin{align*}
p(z,R,Q)=\begin{cases}0 & \text{if $z_i=0$ for some $i\in [n]\backslash R$
  or $i\in \Gamma(Q)$;}\\
1 & \text{otherwise.}\end{cases}
\end{align*}
So $T'$ receives $R$ and makes $q$ queries to the strong sampling oracle only.


The following lemma is the first step of our proof of Lemma \ref{lowerlemma}.

\begin{lemma}\label{noquerylemma}
Let $T^*$ be any deterministic oracle algorithm that, on a pair $(f,\mathcal{D})$
  drawn from $\yesd$ or $\nod$, receives $R$ and a sequence $Q$
  of $q$ samples but has no access to the black-box oracle.
Then  %
\begin{align*}
\notag\left|\hspace{0.05cm}\Pr_{(f,\mathcal{D}_f)\sim\mathcal{YES}}\big[\hspace{0.03cm}
T^*\ \text{accepts}\hspace{0.03cm}\big]-\Pr_{(g,\mathcal{D}_g)\sim\mathcal{NO}}\big[
\hspace{0.03cm}T^*\text{\ accepts}\hspace{0.03cm}\big]\hspace{0.05cm}\right|
=o(1).
\end{align*}
\end{lemma}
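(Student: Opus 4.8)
The plan is to show that, conditioned on the randomness $R$ and on the sequence $Q$ of samples returned by the strong sampling oracle, the \emph{joint distribution of what $T^*$ sees} — namely the pair $(R,Q)$ — is nearly identical under $\yesd$ and $\nod$. Since $T^*$ is a deterministic function of $(R,Q)$ and has no other source of information (it never queries the black box), its acceptance probability is determined by this distribution, so a bound on the statistical distance between the two distributions of $(R,Q)$ immediately yields the lemma.

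First I would observe that the marginal distribution of $R$ is literally identical in $\yesd$ and $\nod$, since both begin by running the first six steps of the $\yesd$ procedure. So it suffices to bound, for a typical fixed $R$, the total variation distance between the conditional distributions of $Q$ under the two models. Now I would analyze a single draw from the strong sampling oracle. Under $\yesd$, the underlying $\calD_f$ is supported on $\{b^i, c^i\}$ with weights $2/(3m)$ and $1/(3m)$; under $\nod$, $\calD_g$ is uniform on $\{a^i,b^i,c^i\}$. In both cases a draw of $c^i$ causes the oracle to return the pair $(C_i,\alpha_i)$, and a draw of $a^i$ or $b^i$ causes it to return $(\zero(x),\nil)$; crucially, the oracle strips away \emph{which} of $a^i,b^i$ was drawn — it returns only the set $\zero(x)$ of size $\ell/2$ together with $\nil$. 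Because $A_i$ and $B_i$ are, by construction (step 5 of $\yesd$), exchangeable given $C_i$ and given $\alpha_i$ — wait, not quite: $A_i$ contains $\alpha_i$ and $B_i$ contains $\beta_i$, so an $A_i$-set and a $B_i$-set are not identically distributed. Here I would use the fact that $\beta_i$ plays no role in $f$, $g$, or anywhere the algorithm can detect it except as a "generic" index of $R$, so the distribution of $\zero(x)$ when $x=a^i$ versus $x=b^i$ differs only in a cosmetic relabelling that is washed out once we marginalize over the uniformly random choice of which indices are the $\alpha$'s and $\beta$'s. Concretely, I would argue that the distribution of $(R, \text{list of sets and their }\gamma\text{-tags})$ is the same whether a given $A_i$-type sample came from the "$a$" slot or the "$b$" slot. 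The one genuine difference between the two models is the \emph{ratio} of $c$-type samples to non-$c$-type samples: $1{:}2$ under $\yesd$ versus $1{:}2$ under $\nod$ — in fact these ratios are also equal! The real difference is that under $\yesd$ every non-$c$ sample is a "$b$", while under $\nod$ a non-$c$ sample is $a$ or $b$ with probability $1/2$ each; but since the oracle output for $a$ and $b$ is distributionally identical after the relabelling argument, and the per-draw probability of getting a $c$-type output is $1/3$ in both models, the distributions of $Q$ coincide \emph{exactly} up to the relabelling — the only slippage being lower-order birthday-type events.

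The remaining work, and the main obstacle, is handling the event that two samples collide on the same triple index $k$ (i.e., $D_i$ and $D_j$ come from the same $C_k$). When no such collision occurs, the relabelling argument above is exact and the two distributions of $Q$ are identical, so $T^*$ behaves identically. By the birthday paradox, since $q = n^{1/3}/\log^3 n$ and there are $m = n^{2/3}$ triples, the probability of any collision is $O(q^2/m) = O(1/\log^6 n) = o(1)$. So I would condition on the no-collision event, argue exact equality of the conditional laws of $(R,Q)$ there (this is where I must be careful to verify that revealing $\alpha_k$ inside $C_k$ but only $\nil$ for $A/B$-sets leaks no distinguishing information — the key point being that in $\yesd$ a set of size $\ell/2$ is always a $B$-set while in $\nod$ it is an $A$- or $B$-set, yet $A$- and $B$-sets have the same marginal law given $R$), and then pay the $o(1)$ for the collision event via a crude union bound on acceptance probabilities. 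Summing, $\big|\Pr_{\yesd}[T^* \text{ accepts}] - \Pr_{\nod}[T^* \text{ accepts}]\big| \le o(1) + 0 = o(1)$, which is the claim. The delicate step I expect to spend the most care on is rigorously justifying that an $A_i$-type sample and a $B_i$-type sample induce the same distribution on the transcript $(R,Q)$ — this requires tracking the joint distribution of $R$, the block partition, and the $\alpha/\beta$ assignment, and exhibiting an explicit measure-preserving bijection that swaps the roles of $A_i$ and $B_i$ (equivalently, of $\alpha_i$ and $\beta_i$, together with a swap of the two halves of the block set of $C_i$).
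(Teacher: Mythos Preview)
Your proposal is correct and follows the same approach as the paper: a birthday bound on same-triple collisions, plus the $A_i \leftrightarrow B_i$, $\alpha_i \leftrightarrow \beta_i$ symmetry to show that, absent collisions, $(R,Q)$ has identical distribution under $\yesd$ and $\nod$. The paper bypasses your ``delicate step'' by handing $T^*$ even more free information, namely $J = \big((C_i,\{A_i,B_i\},\{\alpha_i,\beta_i\}):i\in[m]\big)$ with the pairs \emph{unordered}; conditioned on $J$, a one-line per-sample calculation (each size-$\ell/2$ sample has probability $1/(3m)$ and each size-$\ell$ sample $1/(6m)$ under both models) replaces your measure-preserving bijection.
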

\begin{proof}
We prove a stronger statement by giving the following extra information
  to $T^*$ for free:
$$
J=\Big(\big(C_i,\{A_i,B_i\},\{\alpha_i,\beta_i\}\big):i\in [m]\Big).
$$
Note that $\{A_i,B_i\}$ is given to $T^*$ but they
  are not labelled. The same can be said about $\{\alpha_i,\beta_i\}$.
Also $R$ is revealed in $J$ as $R=\cup_i C_i$.
After $J$, $T^*$ receives a sequence of $q$ samples $Q$ 
  and now needs to either accept or reject with no other information about $(f,\calD)$.
We show that $T^*$ cannot distinguish $\yesd$ and $\nod$.

\def\calJ{\mathcal{J}}

By definition, the distribution of
  $J$ when $(f,\calD)\sim \yesd$ is the same as that when $(f,\calD)\sim \nod$,
  and we use $\calJ$ to denote the distribution of $J$.
Given a tuple $J$ drawn from $\calJ$,
  we use $\calQ_J$ to denote the distribution of the sequence of $q$-samples $Q$ conditioning on $J$
  when  $(f,\calD)\sim \yesd$, and use $\calQ_J'$ to denote
  the distribution of $Q$ conditioning on $J$ when $(f,\calD)\sim \nod$.
We show that for any fixed $J$,
\begin{equation}\label{yyy}
\left|\hspace{0.05cm}\Pr_{Q\sim\calQ_J}\big[\hspace{0.03cm}
T^*\ \text{accepts $(J,Q)$}\hspace{0.03cm}\big]-\Pr_{Q\sim \calQ_J'}\big[
\hspace{0.03cm}T^*\text{\ accepts $(J,Q)$}\hspace{0.03cm}\big]\hspace{0.05cm}\right|
=o(1).
\end{equation}
The lemma then follows because procedures for $\yesd$ and $\nod$ induce the same distribution $\calJ$ of $J$.

For (\ref{yyy}), it suffices to show that $\calQ_J$ and $\calQ_J'$ are close to each other.
For this purpose, we say a sequence $Q=((D_i,\gamma_i):i\in [q])$ has \emph{no collision} if
  no two sets $D_i$ and $D_j$ of $Q$ come from $\{A_k,B_k,C_k\}$
  with the same $k$.
On the one hand, using the birthday paradox and our choices of $q$ and $m$,
  $Q\sim \calQ_J$ has a collision with probability $o(1)$.
On the other hand, when $Q$ has no collision,
  the probability of $Q$ in $\calQ_J$ is exactly the same as
  that of $Q$ in $\calQ_J'$ (which is a product of probabilities, one for each sample
  $Q_i$ in $Q$: the probability of receiving each sample $Q_i=(D_i,\gamma_i)$ is
  $1/(6m)$ if $|D_i|=\ell$ and $1/(3m)$ if $|D_i|=\ell/2$).
(\ref{yyy}) follows, and this finishes the proof of the lemma.
\end{proof}

\subsection{Algorithms $\boldsymbol{T'}$ versus $\boldsymbol{T}$ When $(f,\calD_f)\sim \yesd$}\label{sec:analysis-lowerbound1}

Next, we show that $T'$ agrees with $T$ most of the time when $(f,\calD_f)$ is
  drawn from $\yesd$, and when $(f,\calD_f)$ is drawn from $\nod$.
We first deal with the easier case of $\yesd$.
We start with some notation.

Given a sequence of $q$-samples $Q$ in the sampling phase, we use $T_Q$ to denote
  the binary decision tree of $T$ of depth $q$ upon receiving $Q$.
So each internal node of $T_Q$ is labeled a query string $z\in \{0,1\}^n$,
  and each leaf is labeled either accept or reject.
Given $Q$, $T$ walks down the tree by making queries about $f(z)$ to the black-box oracle.
Given $R$ and $Q$, $T'$ walks down the same decision tree $T_Q$
  but does not make any query to the black-box oracle;
  instead it follows the bit $p(z,R,Q)$ for each query string $z$ in $T_Q$.

We show that the probability of $T'$ accepting a pair $(f,\calD_f)\sim\yesd$
  is very close to that of $T$.

\begin{lemma}\label{yeslemma}
Let $T$ be a deterministic oracle algorithm that makes $q$ queries to the
  strong sampling oracle and the black-box oracle each, and let
  $T'$ be the algorithm defined using $T$ as in Section \ref{defT'}.
Then
$$
\left|\Pr_{(f,\calD_f)\sim \yesd} \big[\hspace{0.03cm}\text{$T$ accepts}
  \hspace{0.03cm}\big]
-\Pr_{(f,\calD_f)\sim \yesd}\big[\hspace{0.03cm}\text{$T'$ accepts}
  \hspace{0.03cm}\big]\right|\le 0.1.
$$
\end{lemma}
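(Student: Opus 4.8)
The plan is to couple the runs of $T$ and $T'$ on the same draw $(f,\calD_f)\sim\yesd$ and the same sample sequence $Q$. By construction $T$ and $T'$ descend the \emph{same} decision tree $T_Q$; the only difference is that at a query $z$ the algorithm $T$ branches on the true bit $f(z)$ whereas $T'$ branches on $p(z,R,Q)$. Hence they reach the same leaf of $T_Q$, and return the same answer, unless at some query $z$ along $T'$'s root‑to‑leaf path we have $f(z)\ne p(z,R,Q)$; call this event $\calE$. Then $\big|\Pr_{\yesd}[T\text{ accepts}]-\Pr_{\yesd}[T'\text{ accepts}]\big|\le \Pr_{\yesd}[\calE]$, so it suffices to prove $\Pr_{\yesd}[\calE]=o(1)$, which is stronger than the claimed bound of $0.1$.

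The next step is to pin down when $f(z)\ne p(z,R,Q)$ for a pair from $\yesd$. There $f(z)=1$ iff $\zero(z)\subseteq R$ and $\zero(z)\cap\{\alpha_1,\dots,\alpha_m\}=\emptyset$, while $p(z,R,Q)=1$ iff $\zero(z)\subseteq R$ and $\zero(z)\cap\Gamma(Q)=\emptyset$. Since $\Gamma(Q)\subseteq\{\alpha_1,\dots,\alpha_m\}$, the only possible disagreement is $f(z)=0$ and $p(z,R,Q)=1$, which forces $\zero(z)\subseteq R$, $\zero(z)\cap\Gamma(Q)=\emptyset$, and $\zero(z)$ to contain a \emph{hidden} special index $\alpha_i\notin\Gamma(Q)$. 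In the $\yesd$ model a hidden $\alpha_i$ lies in no sampled set, hence in $R\setminus S(Q)$. Let $U$ be the union of $\zero(z)\setminus S(Q)$ over all queries $z$ on $T'$'s path at which $T'$ reads the bit $1$; every such query has $\zero(z)\subseteq R$, so $U\subseteq R\setminus S(Q)$, and $\calE$ occurs only if some hidden $\alpha_i$ lies in $U$. Crucially $U$ is a deterministic function of $(R,Q)$ alone, since $T'$'s path and all bits $p(z,R,Q)$ depend only on $R$ and $Q$.

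Two estimates then close the argument. (i) $\Exp_{(R,Q)\sim\yesd}[\,|U|\,]=O(q)$: one exposes $R\setminus S(Q)$ — a uniform random subset of $[n]\setminus S(Q)$ of density $1/2-o(1)$ — lazily while $T'$ descends $T_Q$, noting that a query $z$ can receive the bit $1$ only if its whole fresh part $\zero(z)\setminus S(Q)$ lies in $R$, an event of conditional probability at most $(1/2+o(1))^{k}$ in the number $k$ of its not‑yet‑exposed fresh indices; since $x\cdot 2^{-x}=O(1)$, the expected number of indices ever certified to lie in $R$ — which bounds $|U|$ — is $O(1)$ per query, hence $O(q)$ overall, so $|U|\le q\log n$ with probability $1-o(1)$ by Markov. (ii) Conditioned on $(R,Q)$, each of the $\le m$ hidden indices $\alpha_i$ is near‑uniform over a set of $\Omega(n)$ indices (it may be any element of $R\setminus S(Q)$, and $\Pr[\alpha_i=v\mid R,Q]=O(1/n)$ there), so $\Pr[\text{some hidden }\alpha_i\in U\mid R,Q]\le m\cdot|U|\cdot O(1/n)$, which is $O(mq\log n/n)=O(1/\log^2 n)=o(1)$ whenever $|U|\le q\log n$. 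Combining (i) and (ii) yields $\Pr_{\yesd}[\calE]\le\Pr[\,|U|>q\log n\,]+o(1)=o(1)$.

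The step I expect to be the main obstacle is the conditional‑distribution claim in (ii): showing that the hidden special indices $\alpha_i$ remain spread over $\Omega(n)$ possible locations given everything $T'$ sees, which needs care because they are entangled with the random block partition of $R'$ and with the companion indices $\beta_i$. The lazy‑exposure bookkeeping in (i) must also be handled carefully, since $T'$ queries adaptively based on bits that are themselves functions of $R$. This is, however, the easy side of Lemma~\ref{lowerlemma}; the genuinely difficult side is the $\nod$ distribution, which is treated separately.
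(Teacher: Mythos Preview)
Your approach is essentially the paper's: couple $T$ and $T'$, observe that a disagreement forces some hidden $\alpha_i$ to land in the exposed set $U$ (the paper calls this set $H_{w'(R)}$), bound $|U|$ with high probability, and then use the near-uniformity of the hidden $\alpha_i$'s over $R\setminus S(Q)$. The one difference is in step (i): the paper bounds $|H_{w'(R)}|$ by a direct union bound over the at most $2^q$ leaves $w$ of $T_Q$ (for fixed $w$, $H_w$ is a fixed subset of $[n]\setminus S$ and $\Pr[H_w\subseteq R]\le 2^{-|H_w|}$), which is simpler than your lazy-exposure argument and sidesteps the adaptive bookkeeping you flag as a concern.
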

\def\calR{\mathcal{R}}
\begin{proof}
Given a sequence $Q$ of $q$ samples that $T$ and $T'$ receive in the sampling phase,
  we let $\yesd_{Q}$ denote the distribution of $(f,\calD_f)$ drawn
  from $\yesd$ conditioning on $Q$.
We claim that for any $Q$,
\begin{equation}\label{ttt}
\left|\Pr_{(f,\calD_f)\sim \yesd_{Q}} \big[\hspace{0.03cm}\text{$T$ accepts}
  \hspace{0.03cm}\big]
-\Pr_{(f,\calD_f)\sim \yesd_{Q}}\big[\hspace{0.03cm}\text{$T'$ accepts}
  \hspace{0.03cm}\big]\right|\le 0.1.
\end{equation}
The lemma then follows directly. In the rest of the proof we consider a \emph{fixed}
  sequence $Q$ of samples.

We use $S=S(Q)$ to denote the union of sets in $Q$ (so $|S|\le q\hspace{0.02cm} \ell=O(n/\log^3 n)$),
  and use $t=|\Gamma(Q)|$ to denote the number of $\alpha_i$'s in $Q$.
By the definition of $\yesd$, every $\alpha_i\in S$ must appear in $Q$
  since $\calD_f$ has zero probability on strings $a^i$ (so the only possibility of having
  an $\alpha_i\in S$ is because $C_i$ is in $Q$, for which case $\alpha_i$ is also given in $Q$).
Thus, there are exactly $m-t$ many $\alpha_i$'s in $R\setminus S$
  and we use $\Delta$ to denote the set of these $\alpha_i$'s.
Let $\calR_{Q}$ denote the distribution of the set $R$, conditioning on $Q$.
Given an $R$  from $\calR_{Q}$,
  we abuse the notation and use $\yesd_{{Q}, R}$ to
  denote the distribution of $(f,\calD_f,\Delta)$, conditioning on $Q$
  and $R$.

We make a few simple but very useful observations.
First the leaf of $T_{Q}$ that $T'$ reaches only depends
  on the set $R$ it receives at the beginning; we use $w'(R)$
  to denote the leaf that $T'$ reaches.
Second, conditioning on $Q$ (and $S$), all indices $i\in [n]\setminus S$
  are symmetric and are equally likely to be in $R$.
Thus,  in $\calR_{Q}$, $R\setminus S$ is a subset of $[n]\setminus S$ of size $hr+2m-|S|$
  drawn uniformly at random.
Finally, conditioning on $Q$ and an $R$ drawn from $\calR_{Q}$,
  all indices $i\in R\setminus S$ are symmetric and equally likely to be in $\Delta$ (i.e., chosen
  as an $\alpha_i$).
In $\yesd_{Q, R}$, $\Delta$ is  a subset of $R\setminus S$ of size $m-k$ drawn uniformly at random.

Now we work on (\ref{ttt}).
Our plan is to show that, when $(f,\calD_f)\sim \yesd_{Q}$,
  most likely $T$ and $T'$ reach the same leaf of $T_{Q}$ (and then
  either both accept or reject).
We need a few definitions.

For each leaf $w$ of $T_{Q}$, we define $H_w\subseteq [n]\setminus S$ to be the set of indices
  $i\in[n]\setminus S$ such that there exists a query string
  $z$ on the path from the root to $w$ but $z_i=0$ and $w$ lies in the $1$-subtree of $z$.
By the definition of $H_w$ and the way $T'$ walks down $T_{Q}$ using $R$, a necessary condition for
  $T'$ to reach $w$ is that $H_w\subset R$.
However, conditioning on $Q$, all indices $i\in [n]\setminus S$
  are symmetric and equally likely to be in $R$ drawn from $\calR_{Q}$.
So intuitively it is unlikely for $T'$ to reach $w$ if $H_w$ is large.

Inspired by discussions above, we say a leaf $w$ of $T_{Q}$ is \emph{bad} if $|H_w|\geq 0.02\cdot n^{1/3}$;
  otherwise $w$ is a \emph{good} leaf
  (notice that whether $w$ is good or bad only depends on $Q$ (thus, $S$)
  and $T_{Q}$).
We show that, when $R$ is drawn from $\calR_{Q}$,
  the probability of $w'(R)$ being bad is $o(1)$.
To see this,
for each bad leaf $w$ of $T_Q$ we have (letting $K=(n/2)+2n^{2/3}-|S|$ be the size of $R\setminus S$
  and plugging in $|S|\le q\hspace{0.02cm} \ell=O(n/\log^3 n)$)
\begin{align*}
\Pr_{R\sim \calR_{Q}}\big[w'(R)=w\big]
&\le \Pr_{R\sim \calR_{Q}}\big[H_w\subset R\big]
=\frac{{n-|S|-|H_w|\choose K-|H_w|}}{{n-|S|\choose K}} \\[0.4ex]
&=\frac{K-|H_w|+1}{n-|S|-|H_w|+1} \times\cdots\times \frac{K}{n-|S|}<2^{-|H_w|}\le 2^{-0.02\cdot n^{1/3}}.
\end{align*}
By a union bound on the at most $2^q$ many bad leaves in $T_{Q}$
  and our choice of $q=O(n^{1/3}/\log^3 n)$
  we have 
 the probability of $T'$ reaching a bad leaf is $o(1)$, when $R\sim \calR_Q$.
This allows us to focus on good leaves.

Let $w$ be a good leaf in $T_Q$, and let
  $R$ be a set from $\calR_{Q}$ such that $w'(R)=w$
  (and thus, we must have $H_w\subset R\setminus S$).
We bound probability of $T$ not reaching $w$,
  when $(f,\calD_f,\Delta)\sim\yesd_{Q, R}$.
We claim that this happens only when $\alpha_i\in H_w$ for some $i\in [m]$
  (or equivalently, $H_w\cap \Delta$ is not empty).

We now prove this claim. Let $z$ denote the first query string along
  the path from the root to $w$ such that $f(z)\ne p(z,R,Q)$.
By the definition of $\yesd$ and $p(z,R,Q)$, $p(z,R,Q)=0$ implies $f(z)=0$.
As a result, we must have $f(z)=0$ and $p(z,R,Q)=1$.
By $p(z,R,Q)=1$, we have $\zero(z)\subseteq R$
  and $\zero(z)$ has none of the $\alpha_i$'s in $\Gamma(Q)$.
By $f(z)=0$, $\zero(z)$ must contain
  an $\alpha_i$ outside of $S$, so this $\alpha_i$ is in
  $H_w\cap \Delta$.
The latter is because $p(z,R,Q)=1$ implies that $z$ is one of the strings
  considered in the definition of $H_w$.

Using this claim, our earlier discussion on the distribution of $\Delta$
  in $\yesd_{Q, R}$ and $|H_w|<0.02\hspace{0.03cm} n^{1/3}$ as $w$ is a good leaf of $T_Q$, we have
(letting $K=(n/2)+2n^{2/3}-|S|$ be the size of $R\setminus S$)
\begin{align*}
\Pr_{(f,\calD_f,\Delta)\sim \yesd_{Q, R}} \Big[
\hspace{0.03cm}\text{$T$ does not reach $w$}\hspace{0.03cm}\Big]
&\le \Pr_{(f,\calD_f,\Delta)\sim \yesd_{Q, R}} \Big[\hspace{0.03cm}
|H_w\cap \Delta|\ne \emptyset\hspace{0.03cm}\Big]\\[1ex]
&=1-\frac{{K-|H_w|\choose m-t}}{{K\choose m-t}}\le 1-\left(1-\frac{m}{K-|H_w|+1}\right)^{|H_w|}\\[0.1ex]
&\le 1-\left(1-\frac{3m}{n}\right)^{|H_w|}\le 1-\left(1-\frac{3}{n^{1/3}}\right)^{0.02\hspace{0.02cm} n^{1/3}}\\[0.6ex]
&\approx 1-e^{-0.06}<0.07.
\end{align*}

Combining this and the fact that $T'$ reaches a bad leaf with $o(1)$ probability, we have
\begin{align*}
\Pr_{(f,\calD_f)\sim \yesd_{Q}} \big[\hspace{0.03cm}\text{$T$ and $T'$ reach different
  leaves of $T_{Q}$}
  \hspace{0.03cm}\big]\\[0.6ex]
&\hspace{-7.2cm}=\sum_{w} \hspace{0.06cm}\sum_{R:\hspace{0.03cm} w'(R)=w}
  \Pr_{(f,\calD_f,\Delta)\sim \yesd_{Q, R}} \big[\hspace{0.03cm}\text{$T$ does not reach $w$}
  \hspace{0.03cm}\big]
    \cdot \Pr_{\calR_{Q}}[\hspace{0.02cm}R\hspace{0.02cm}]
 \\[0.5ex]
&\hspace{-7,2cm}=o(1)+ \sum_{\text{good $w$}}\hspace{0.06cm} \sum_{R:\hspace{0.03cm}w'(R)=w}
  \Pr_{(f,\calD_f,\Delta)\sim \yesd_{Q, R}} \big[\hspace{0.03cm}\text{$T$ does not reach $w$}
  \hspace{0.03cm}\big]
    \cdot \Pr_{\calR_{Q}}[\hspace{0.02cm}R\hspace{0.02cm}]<0.1.
\end{align*}
This finishes the proof of (\ref{ttt}) and the lemma.
\end{proof}

\subsection{Algorithms $\boldsymbol{T'}$ versus $\boldsymbol{T}$ When $(g,\calD_g)\sim \nod$}\label{sec:analysis-lowerbound2}

We work on the more challenging case when $(g,\calD_g)\sim\nod$.
We start by introducing a condition on $Q$, 
and show that $Q$ satisfies it with probability $1-o(1)$.

\begin{definition}\label{ffff}
Given a sequence $Q=((D_i,\gamma_i):i\in [q])$ of $q$ samples from
  $(g,\calD_g)\sim \nod$,
  we use $H_i$ to denote the \emph{unique} set $C_k$ for some $k\in [m]$ that contains $D_i$.
Then we say that $Q$ is
  \emph{separated} with respect to $(g,\calD_g)$
  (since by $Q$ itself one cannot tell if it satisfies the following condition) if for each $i\in [q]$
  the number of $2\log^2 n$ blocks of $H_i$ that do not appear in any other $H_j$, $j\ne i$,
  is at least $(15/8)\log^2 n$.
\end{definition}

Here is an observation that
  inspires (part of) the definition.
Assume that algorithm $T$, given $Q$,
  suspects that $D_i$ in $Q$ is $A_k$ for some $k$
  and wants to find $\alpha_k$.
However, indices that appear in $D_i$ only, $D_i\setminus \cup_{j\ne i}\hspace{0.03cm}D_j$,
  are symmetric and are equally likely to be $\alpha_k$.
$Q$ being
  separated with respect to $(g,\calD_g)$ implies that there are many such indices in $D$.
Of course the definition of $Q$ being separated is stronger, and
  intuition behind it will  become
  clear later in the proof of Lemma \ref{nolemma}. 

We show that when $(g,\calD_g)\sim \nod$, $Q$ is
  separated with respect to $(g,\calD_g)$ with probability $1-o(1)$.

\begin{lemma}\label{lem:separated}
When $(g,\calD_g)\sim \nod$, a sequence $Q$ of $q$ samples from the
  sampling oracle is
  separated with respect to $(g,\calD_g)$ with probability $1-o(1)$.
\end{lemma}
\begin{proof}
Recall that $R'$ is the subset of $R$ with $\alpha_i$'s and $\beta_i$'s
  removed.
Fix a $R'\subset [n]$ of size $hr$ and a partition of $R'$ into
  $r$ pairwise disjoint blocks of size $h$ each.
We write $J$ to denote the tuple consists of~$R'$
  and blocks in $R'$, and $\nod_J$ to denote the distribution of $(g,\calD_g)
    \sim \nod$ conditioning on $J$.
We also write $C_i'$ to denote the set obtained from $C_i$
  after removing $\alpha_i$ and $\beta_i$.
Given $J$,  each $C_i'$ is the union of $2\log^2 n$ blocks
  drawn uniformly at random from the $r$ blocks in $R'$.

Fix an $J$. Below we show that if each $C_i'$ is the union of $2\log^2 n$ random
  blocks and
  a sequence $j_1,\ldots,j_q$ is drawn from $[m]$ uniformly and independently,
  then with probability $1-o(1)$ we have for each $i\in [q]$:
\begin{equation}\label{yuyu}
\text{the number of blocks of $C_{j_i}'$ that appear in $\cup_{k\ne i}C_{j_k}'$ is at most $\log^2 n/16$.}
\end{equation}
It follows that $Q$ has the desired properties
  when $(g,\calD_g)\sim \nod_J$ with probability $1-o(1)$, and the lemma follows.
For the rest of the proof we assume that $J$ is fixed.

We now prove the claim.
First of all  by the birthday paradox and our choices of $q$ and $m$,
  the probability of two indices $j_1,\ldots,j_q$ being the same is $o(1)$.
Suppose that no two indices in $j_1,\ldots,j_q$ are the same.
The distribution of $C_{j_1}',\ldots,C_{j_q}'$ is then the same
  as $H_1,\ldots,H_q$, where
  each $H_i$ is the union of $2\log^2 n$ blocks in $J$ drawn uniformly and independently at random.
For the latter, we show that with probability $1-o(1)$:
\begin{equation}\label{appt}
\text{for each $i\in [q]$, the number of blocks in $H_i$ that appear in $\cup_{k\ne i} H_k$
  is at most $\log^2 n/16$.}
\end{equation}
This is not really surprising:
on expectation, the number of blocks of $H_i$ that also appear in $\cup_{k\ne i}\hspace{0.05cm}
  H_k$ is
$$
(q-1)\cdot \frac{2\log^2 n\cdot 2\log^2 n}{r}=o(1).
$$
A formal proof that (\ref{appt}) happens with probability $1-o(1)$ can be found in Appendix \ref{app:tt}.
\end{proof}

We write $E$ to denote the event that a sequence $Q$
  of $q$ samples drawn from $(g,\calD_g)\sim \nod$ is separated with respect to $(g,\calD_g)$,
  and $\calQ_E$ to denote the probability distribution of $Q$ conditioning on $E$.
By definition not every $Q$ is in the support of $\calQ_E$; we record the following property
  of $Q$ in the support of $\calQ_E$.

\begin{property}\label{trivial:1}
Given any $Q=((D_i,\gamma_i):i\in [q])$ in the support of $\calQ_E$,
  each $D_i$ has at most $\log n^2/8$ many blocks that appear in $\cup_{j\ne i} D_j$.
\end{property}

Given a $Q$ in the support of $\calQ_E$, we write $\calR_{Q,E}$ to denote the
   distribution of $R$, conditioning on $Q$ and $E$.
It is clear that $\calR_{Q,E}$ is the same as $\calR_{Q}$ with $E$ dropped
  since all indices in $[n]\setminus S(Q)$ remain symmatric and equally likely to be
  in $R$ even given $E$.

\begin{property}\label{trivial:2}
\hspace{-0.02cm}For  $R \sim \calR_{Q,E}$, \hspace{-0.02cm}$R\hspace{-0.02cm}\setminus\hspace{-0.02cm} S(Q)$
  is a set of size $hr+2m-|S(Q)|$
  drawn uniformly from $[n]\hspace{-0.02cm}\setminus\hspace{-0.02cm} S(Q)$.
\end{property}
\def\calF{\mathcal{F}}
Given $Q=((D_i,\gamma_i):i\in [q])$,
   we use
   $F_i$ to denote the other set of size $\ell/2$ paired with $D_i$, $i\in I(Q)$
   (so $F_i$ is $A_k$ if $D_i$ is $B_k$ and vice versa).
Given $Q=((D_i,\gamma_i):i\in [q])$ in the support of $\calQ_E$
  and $R$ in the support of $\calR_{Q,E}$,
  we use $\calF^i_{R,Q,E}$ to denote the distribution
  of $F_i$ conditioning on $R,Q$ and $E$.
Then

\begin{property}\label{trivial:3}
Every $F_i$ in the support of $\calF^i_{R,Q,E}$
  has at least $(7/8)\log^2 n$ blocks in $R\setminus S(Q)$.
Moreover, they are drawn uniformly at random from blocks in $R\setminus S(Q)$.
(More exactly, the number $k$ of blocks of $F_i$ in $R\setminus S(Q)$ is drawn
  from a certain distribution, where $k\ge (7/8)\log^2n$ with probability $1$,
  and then $k$ blocks are drawn uniformly at random from blocks in $R\setminus S(Q)$.)
\end{property}

We now show that $T'$ agrees with $T$ most of the time
  when $(g,\calD_g)\sim \nod$:

\begin{lemma}\label{nolemma}
Let $T$ be a deterministic oracle algorithm that makes $q$ queries to the
  strong sampling oracle and the black-box oracle each, and let
  $T'$ be the algorithm defined using $T$ as in Section \ref{defT'}.
Then
$$
\left|\Pr_{(g,\calD_g)\sim \nod} \big[\hspace{0.03cm}\text{$T$ accepts}
  \hspace{0.03cm}\big]
-\Pr_{(g,\calD_g)\sim \nod}\big[\hspace{0.03cm}\text{$T'$ accepts}
  \hspace{0.03cm}\big]\right|\le 0.1.
$$
\end{lemma}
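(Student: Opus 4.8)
The plan is to mirror the proof of Lemma~\ref{yeslemma} as closely as possible, isolating the extra work caused by the "specialness'' correction in the definition of $g$. First I would condition on the sample sequence $Q$. Since by Lemma~\ref{lem:separated} a random $Q$ is separated with respect to $(g,\calD_g)$ with probability $1-o(1)$, and since (by the birthday paradox, as already used in the proofs of Lemmas~\ref{noquerylemma} and~\ref{lem:separated}) with probability $1-o(1)$ no two of the $q$ samples come from the same triple $\{A_k,B_k,C_k\}$, it suffices to prove the bound for every $Q$ in the support of $\mathcal{Q}_E$ satisfying this no-collision condition. Fixing such a $Q$, and drawing $R\sim\mathcal{R}_{Q,E}$ (which by Property~\ref{trivial:2} is just $\mathcal{R}_Q$), let $w'(R)$ be the leaf of $T_Q$ reached by $T'$; define $H_w\subseteq[n]\setminus S(Q)$ exactly as in Lemma~\ref{yeslemma} and call a leaf $w$ bad if $|H_w|\ge 0.02\,n^{1/3}$. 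The same union bound over the at most $2^q$ bad leaves shows that $T'$ reaches a bad leaf with probability $o(1)$, so it remains to bound, for each good leaf $w$, the probability that $T$ fails to reach $w$ when the hidden data (placement of the concealed special indices, the concealed $A_i$--$B_i$ partitions, the fine block structure) is drawn conditioned on $Q$, $E$, $R$ and $w'(R)=w$.

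Now $T$ fails to reach $w$ precisely when some query string $z$ on the root-to-$w$ path of $T_Q$ has $g(z)\ne p(z,R,Q)$; fix the first such $z$, which is determined by $R$ and $Q$. Unwinding $g=f_1\wedge g'$ and the definition of $p(\cdot,R,Q)$ leaves two cases. In the first, $p(z,R,Q)=0$ but $g(z)=1$: this forces $\zero(z)\subseteq R$, forces some $\alpha_k\in\Gamma(Q)$ (hence $C_k$ sampled) to lie in $\zero(z)$, and forces $z$ to be $k$-special. Because no two samples collide, the partition $\{A_k,B_k\}$ of $C_k$ is still uniformly random given everything conditioned on, while $z$ is fixed; a Chernoff/hypergeometric estimate shows that a fixed string is $k$-special with probability $n^{-\omega(1)}$, so a union bound over the $\le q$ candidate strings $z$ and the $\le q$ indices $k$ disposes of this case with loss $o(1)$.

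In the second case $p(z,R,Q)=1$ but $g(z)=0$: now $\zero(z)\subseteq R$, $\zero(z)\cap\Gamma(Q)=\emptyset$, $w$ lies in the $1$-subtree of $z$ (so $|\zero(z)\setminus S(Q)|\le|H_w|<0.02\,n^{1/3}$ since $w$ is good), and there is a concealed special index $\alpha_{i^*}\in\zero(z)$ for which $z$ is \emph{not} $i^*$-special. If $\alpha_{i^*}\in R\setminus S(Q)$ then $\alpha_{i^*}\in H_w$, and since the concealed special indices lying in $R\setminus S(Q)$ form a uniformly random subset of $R\setminus S(Q)$ of density $O(n^{-1/3})$, this contributes at most $|H_w|\cdot O(n^{-1/3})$, a small constant --- exactly the estimate appearing at the end of the proof of Lemma~\ref{yeslemma}. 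The remaining possibility, $\alpha_{i^*}\in S(Q)$, is the crux: then $A_{i^*}$ is itself one of the sampled sets $D_j$ with $j\in I(Q)$ (and $B_{i^*},C_{i^*}$ are unsampled, by no-collision), so merely "hitting'' $\alpha_{i^*}$ with $z$ is not unlikely, and one must genuinely use that $z$ fails to be $i^*$-special.

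The key observation for this last sub-case is that the two requirements --- $\alpha_{i^*}\in\zero(z)$ and $z$ not $i^*$-special --- pull against each other. By Property~\ref{trivial:3}, $B_{i^*}=F_j$ consists, apart from at most $\tfrac18\log^2 n$ of its blocks, of uniformly random blocks among the $\sim r$ blocks contained in $R\setminus S(Q)$; since $z$ has fewer than $0.02\,n^{1/3}$ zeros outside $S(Q)$, fewer than $0.02\,n^{1/3}/s=o(\log^2 n)$ of those blocks can be "heavy'' (contain more than $s$ zeros of $z$), so with probability $1-n^{-\omega(1)}$ at least $\tfrac34\log^2 n$ blocks of $B_{i^*}$ are light and the $B$-part of $i^*$-specialness holds automatically. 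Hence "$z$ not $i^*$-special'' forces the $A$-part to fail: fewer than $\tfrac34\log^2 n$ blocks of $A_{i^*}=D_j$ are heavy. By the separated condition (and Property~\ref{trivial:1}), at least $\tfrac78\log^2 n$ of $D_j$'s blocks together with $\alpha_{i^*}$ form a "private'' region $U_j$ of size $\ge\tfrac78\log^2 n\cdot h$ whose partition into $h$-blocks and whose distinguished index $\alpha_{i^*}$ are, conditioned on everything, uniformly random. A balls-in-bins concentration bound for a uniform partition then yields a dichotomy: if $|\zero(z)\cap U_j|$ exceeds a suitable multiple of $\log^4 n$, then with probability $1-n^{-\omega(1)}$ at least $\tfrac34\log^2 n$ private blocks are heavy, so the $A$-part holds and $z$ \emph{is} $i^*$-special after all; otherwise $\Pr[\alpha_{i^*}\in\zero(z)]=|\zero(z)\cap U_j|/|U_j|=O(\log^6 n/n^{2/3})$. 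A union bound over the $\le q$ strings $z$ and $\le q$ sets $D_j$ keeps the total contribution of this sub-case to a small constant, and choosing the bad-leaf threshold $0.02\,n^{1/3}$ and the hidden constants carefully makes the sum of all contributions --- $o(1)$ for bad leaves, $o(1)$ for the first case, the Lemma~\ref{yeslemma}-style constant for $\alpha_{i^*}\in R\setminus S(Q)$, and the small constant here --- at most $0.1$, which is the lemma. I expect this last sub-case --- in particular making the balls-in-bins concentration rigorous for a uniform partition rather than independent placements, and tracking the constants so the bound really comes out below $0.1$ --- to be the main obstacle.
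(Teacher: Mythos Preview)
Your proposal is essentially the paper's proof, just organized a little differently: the paper names four events $E_0,E_1,E_2,E_3$, whereas you fold $E_1$ and $E_2$ into the $\alpha_{i^*}\in S(Q)$ branch of your second case (treating the $B$-side first, then the $A$-side). Your treatments of $E_0$ (the $\alpha_{i^*}\in R\setminus S(Q)$ sub-case, identical to Lemma~\ref{yeslemma}), of $E_1$ (the dichotomy on $|\zero(z)\cap U_j|$ versus a $\log^4 n$ threshold), and of $E_3$ (your first case, using that the $A_k/B_k$ partition of a sampled $C_k$ is still uniform) all match the paper's arguments.

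There is one genuine arithmetic slip in your $E_2$ analogue. You write ``fewer than $0.02\,n^{1/3}/s=o(\log^2 n)$ of those blocks can be heavy,'' but $s=\log^2 n$, so $0.02\,n^{1/3}/s=\Theta(n^{1/3}/\log^2 n)$, which is $\omega(\log^2 n)$; the deterministic bound you claim is false, and so the $B$-part of specialness does \emph{not} hold automatically in the way you state. The paper's fix (which your phrase ``with probability $1-n^{-\omega(1)}$'' already gestures toward) is probabilistic: among the roughly $r=n^{1/3}\log^2 n$ blocks in $R\setminus S(Q)$ only an $O(1/\log^4 n)$ \emph{fraction} can be heavy with respect to $z$, and since by Property~\ref{trivial:3} the out-of-$S$ blocks of $B_{i^*}=F_j$ are drawn uniformly from this pool, a concentration bound shows that more than $\log^2 n/8$ of them are heavy with probability only $n^{-\Omega(\log n)}$; a union bound over $i\in I(Q)$ then gives $o(1)$. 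With this correction your argument goes through and coincides with the paper's. (Minor remark: your extra no-collision assumption is harmless but redundant---separatedness already forces any two samples $D_i,D_j$ to come from distinct triples, since otherwise $H_i=H_j$ and no block of $H_i$ would be private.)
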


\begin{proof}
Let $Q$ be a sequence of $q$ samples in the support of $\calQ_E$.
We prove that for any such $Q$:
\begin{equation}\label{ttt2}
\left|\Pr_{(g,\calD_g)\sim \nod_{Q,E}} \big[\hspace{0.03cm}\text{$T$ accepts}
  \hspace{0.03cm}\big]
-\Pr_{(g,\calD_g)\sim \nod_{Q,E}}\big[\hspace{0.03cm}\text{$T'$ accepts}
  \hspace{0.03cm}\big]\right|\le 0.09.
\end{equation}
The lemma then follows from (\ref{ttt2}) and Lemma \ref{lem:separated}.
Below we consider a \emph{fixed} $Q$ in the support of $\calQ_E$.

For convenience, we let $S=S(Q)$, $\Gamma=\Gamma(Q)$ and $I=I(Q)$ since $Q$ is fixed
  (so $|S|=O(n/\log^3 n)$).
Given $R$ in the support of
  $\calR_{Q,E}$, we let $w'(R)$ denote the leaf of $T_{Q}$ that $T'$ reaches given $R$.
We define $H_w$ for each leaf $w$ of $T_{Q}$ and
  \emph{good}\hspace{0.05cm}/\hspace{0.05cm}\emph{bad} leaves of $T_{Q}$ similarly as in
  the proof of Lemma \ref{yeslemma}.
Using the same argument (as by Property \ref{trivial:2}, $R\setminus S$ is also drawn
  uniformly at random from $[n]\setminus S$) we have the probability of
  $w'(R)$ being bad is $o(1)$ when $R\sim \calR_{Q,E}$.
This again allows us to focus on good leaves in $T_Q$.

Now we fix a good leaf $w$ of $T_{Q}$ and  a set $R$ from
  $\calR_{Q,E}$ with $w'(R)=w$.
We use $P_w$ to denote the~path of query strings from the root to $w$.
We  drop $R$ and $Q$ in $p(z,R,Q)$ since they are fixed.
In the rest of the proof we bound the probability of $T$ not reaching $w$, when $(g,\calD_g)\sim \nod_{R,Q,E}$ (conditioning on $R,Q,E$).

We consider all the possibilities of $T$ not reaching $w$.
This happens because, for some $z$ on the path $P_w$, $p(z)\ne f(z)$.
By the definition of $\nod$, at least one of the following four events holds.
We bound the probability of~each event by $o(1)$, when $(g,\calD_g)\sim \nod_{R,Q,E}$,
  and apply a union bound.
For the four events below, Events $E_0, E_1$ and $E_2$ cover
  the case when $p(z)=1$ but $f(z)=0$. Event
  $E_3$ covers the case when $p(z)=0$ but $f(z)=1$ for some $z$ in $P_w$.
(Recall that $s=\log^2 n$.)

\begin{flushleft}\begin{enumerate}
\item[] Event $E_0$: There is a string $z$ in $P_w$ such that $p(z)=1$ (so
  $w$ is in the $1$-subtree of $z$) but
  $z_{\alpha_k}=0$\\ for some $\alpha_k\notin S$. 

\item[] Event $E_1$: There is a $z$ in $P_w$ such that $p(z)=1$
  but {1) $z_{\alpha_k}=0$ for some $\alpha_k\in S$ and $\alpha_k\notin \Gamma$; \\ 2) $z$ is not $k$-special because
  there are more than $\log^2 n/4$ many blocks in $A_k$, each of which
  has\\ at most $s$ indices $j$ with $z_j=0$.}

\item[] Event $E_2$: There is a $z$ in $P_w$ such that $p(z)=1$
  but {1) $z_{\alpha_k}=0$ for some $\alpha_k\in S$ and $\alpha_k\notin \Gamma$;\\
  2) $z$ is not $k$-special because
  there are more than $\log^2 n/4$ many blocks in $B_k$,
  each of which has \\ (strictly) more than $s$ indices $j$ such that $z_j=0$.}

\item[] Event $E_3$: There is a $z$ in $P_w$ such that $z_{\alpha_k}=0$ for some
  $\alpha_k\in \Gamma$ but $z$ is $k$-special, i.e.,
{there are\\ at least $3\log^2 n/4$ blocks in $A_k$,
  each of which has (strictly) more than $s$ indices $j$ in it with $z_j=0$;
there are at least $3\log^2 n/4$ blocks in $B_k$,
  each of which has at most $s$ indices $j$ in it with $z_j=0$.}
\end{enumerate}\end{flushleft}
The probability of $E_0$ under $\nod_{R,Q,E}$ is less than $0.07$ by the
  same argument in the proof of Lemma \ref{yeslemma}.

Next we bound the probability of $E_1$.
Let $D_i'=D_i\setminus (\cup_{j\ne i}D_j)$ for each $i\in [q]$.
Note that if there is an $\alpha_k\in S$ but
  $\alpha_k\notin \Gamma$, then $\alpha_k\in D_i'$ for some $i\in I$.
Fixing a query string $z$ in $P_w$ and an $i\in I$,
  we bound the probability that $E_1$ happens at $z$ and $\alpha_k\in D_i'$,
  and then apply a union bound on at most $q^2$ pairs of $z$ and $i$.

Consider the scenario that $D_i$ is indeed $A_k$ for some $k$;
  otherwise $E_1$ can never happen.
When $D_i$ is $A_k$, $D_i'$ consists of $\{\alpha_k\}$ and
  $u\ge 7\log^2 n/8$ blocks. (Note that $u$ can be determined from the size of $|D_i'|$.)
A key observation is that, conditioning on $R,Q$ and $E$, all indices
  in $D_i'$ are symmetric. So the choice of $\alpha_k$ as well as
  the partition of the rest of $D_i'$ into $u$ blocks are both done uniformly at random.
Let $Z=\zero(z)\cap D_i'$.
By the observation above, part 1) of $E_1$ happens with probability
  $|Z|/|D_i'|=O(|Z|/\ell)$.
So to make part 1) happen, one would like to set $Z$ to be as large as possible.
However, we claim that if $|Z|\ge 10\log^4 n$, then
  with high probability, every block in $D_i'$
  has at least $2s$ indices in $\zero(z)$, from which we know
  part 2) is violated because by $E$ the number of blocks in $D_i\setminus D_i'$ is
  at most $\log^2 n/8$.

The claim above is not surprising, since each block by our discussion earlier
  is a subset of size $h$ drawn from $D_i'$ uniformly at random.
So when $Z\ge 10\log^4 n$, the expected number of indices of a block in $Z$
  is $$|Z|\cdot \frac{h}{|D_i'|}\ge (10\log^4 n)\cdot \frac{n^{2/3}}{2\log^2 n}\cdot \frac{1}{n^{2/3}+2}
  \ge 4\log^2 n=4s.$$
For a formal proof of the claim, we assume that blocks in $D_i'$ are labelled:
  $D_i'$ is  partitioned into $\alpha_k$ and $u$ blocks uniformly at random and
  then the blocks are labelled uniformly at random from $1$ to $u$.
Focusing~on the block labelled $j$ it is a set of size $h$ drawn from $D_i'$ uniformly at random
  and thus, can be also generated as a sequence of indices drawn from $D_i'$
  uniformly at random and independently until $h$ distinct
  indices are sampled.
However, even if we draw a sequence of $h$ indices from $D_i'$ uniformly at random
  and independently the probability of
  having at least $2s$ samples in $Z$ is already $1-n^{-\Omega(\log n)}$, e.g.,
  by a folklore extension of Chernoff bound (see Lemma \ref{lerandomlemma}).
Thus, the probability of block $j$ having at most $2s$ indices in $\zero(z)$
  is bounded by $n^{-\Omega(\log n)}$.
By a union bound on all blocks in $D_i'$, we have
  that
  every block in $D_i'$ has at least $2s$ indices in $\zero(z)$ with
  probability $1-n^{-\Omega(\log n)}$.

Combining the two cases when $Z$ is small and large,
  we have that $E_1$ happens at a fixed $z$ and $D_i$ with probability $O(\log^4 n/\ell)$.
Applying a union bound, $E_1$ happens with probability
  $O(q^2 \log^4 n/\ell)=o(1)$.

Next we consider $E_2$.
Let $Q=((D_i,\gamma_i):i\in [q])$, and $F_i$ denote the set
  paired with $D_i$ for each $i\in I$.
A necessary condition for part 2) of $E_2$ to happen
  is that there exists an $i\in I$ such that more than $\log^2 n/8$ blocks of $F_i$
  outside of $S$ has more than $s$ indices in $H_w$.
To see this is the case consider a $z\in P_w$~and~$k$~such that $E_2$ happens at $z$ and $\alpha_k$.
Then it must be the case that $A_k$ is in $Q$ and
  $B_k$ is one of the $F_i$'s.
By part 2) of $E_2$, more than $\log^2 n/4$ blocks of $B_k$ has more than $s$ indices
  in $\zero(z)$.
Given $E$, we know that at least $\log^2 n/8$ many such blocks are outside of $S$,
  each of which has more than $s$ indices in $\zero(z)$.
By $p(z)=1$ $z$ is one of the strings used to define $H_w$.
Thus, all indices of $\zero(z)$ outside of $S$ belong to $H_w$.

We fix an $i\in I$ (and apply a union bound later).
Also note that $H_w$ is a fixed set in $R\setminus S$ of size at most $0.02\hspace{0.03cm}n^{1/3}$
  because $w$ is a good leaf of $T_Q$.
Consider any partition of $R\setminus S$ into blocks (and certain number~of
  $\alpha_i$'s and $\beta_i$'s).
Then by the size of $H_w$, only $O(n^{1/3}/s)$ many of them
  can have an intersection of size more than $s$ with $H_w$,
and a block drawn uniformly at random from $R\setminus S$ is one such block
  with probability only $O(1/\log^4n)$. By Property \ref{trivial:3}
  $\smash{F_i\sim \calF^i_{R,Q,E}}$ draws at most $\log^2 n$
  blocks uniformly at random from those in $R\setminus S$.
The probability that more than $\log^2 n/8$ of them have an
  intersection of size more than $s$ with $H_w$
  can be bounded by $n^{-\Omega(\log^4 n)}$ (e.g., by following a similar argument used
  in Appendix \ref{app:tt} and considering a sequence of $2\log^2 n$ blocks
  sampled uniformly and independently).
By applying a union bound on all $i\in I$ we have that $E_2$ happens with probability
  $o(1)$ when $(g,\calD_g)\sim \nod_{R,Q,E}$.

For event $E_3$ we bound the probability that $E_3$ happens for some string
  $z$ in $P_w$ and some $\alpha_k\in \Gamma$, and then apply a union bound
  on at most $q^2$ many pairs of $z$ in $P_w$ and $\alpha_k\in\Gamma$.
Consider an adversary that picks a string $z$ and aims to make $E_3$ happen
  on $z$ and $\alpha_k$ with probability as high as possible, given $R,Q$ and $E$.
Since $\alpha_k\in \Gamma$, $C_k$ is a set in $Q$ (paired with $\alpha_k$ as a sample $Q_i$).
To ease the proof, we reveal $\beta_k$ and all the blocks in $C_k$ to
  the adversary for free and denote this information by $J$.
Next, consider the distribution of $A_k$ and $B_k$ conditioning on
  $J,R,Q$ and $E$.
A key observation is that all blocks in $J$ are equally likely to be in $A_k$
  and $B_k$: $A_k$ is the union of $\alpha_k$ and $\log^2 n$ blocks drawn uniformly
  at random from $J$, and $B_k$ is the union of $\beta_k$ and the rest of blocks from $J$.
This is because, given $E$ and that $C_k$ is in $Q$,
  neither $A_k$ nor $B_k$ is in $Q$.
Thus, neither of $J,R,Q,E$ reveals any information about how blocks in $C_k$
  are partitioned. 

Let $M$ denote the set of blocks in $J$ that have more than $s$ indices in $\zero(z)$.
For event $E_3$ to happen, $A_k$ draws $\log^2 n$ blocks from $J$ uniformly at random and have
  to hit $3\log^2 n/4$ blocks in $M$, while the~rest can only have $\log^2 n/4$
  blocks in $M$, which is highly unlikely.
For a formal proof,
  note that $M$ must have at least $3\log^2 n/4$ blocks;
  otherwise the event never happens. Also, $M$ certainly has at most $2\log^2 n$ blocks.
We sample $B_k$ using the following procedure: include in the first phase
  each block in $B_k$ independently with probability $1/2$
  and then either~add  or remove random blocks to make $B_k$ with $\log^2 n$ blocks.
By Chernoff bound, we have that with probability $1-n^{-\Omega(\log n)}$
  the first phase gets a $B_k$ with at least $(11/32)\log^2 n$ blocks in $M$
  and at most $(33/32)\log^2 n$ blocks in total (since the expectation for number of blocks is between $3\log^2 n/8$ and $\log^2 n$).
When this happens, $B_k$ sampled at the end
  must have at least $(5/16)\log^2 n>\log^2 n/4$ blocks in $M$.

Applying a union bound on all $z$ in $P_w$ and $\alpha_k$ in $\Gamma$,
  we have that $E_3$ happens with probability $o(1)$.

Combining these bounds on the probability of events $E_i$, $i\in \{0,1,2,3\}$,
  we have the probability of $T$ not reaching $w$ when $(g,\calD_g)\sim \nod_{R,Q,E}$
  is less than $0.08$.
The lemma then follows.
\end{proof}

\subsection{Putting All Pieces Together}

We now combine all the lemmas to prove Lemma \ref{lowerlemma}.

\begin{proof}[Proof of Lemma \ref{lowerlemma}]
Let $T$ be a deterministic oracle algorithm that makes at most $q$ queries to
  each oracle, and $T'$ be the algorithm that simulates $T$ with no access to the black-box oracle.
By Lemmas \ref{noquerylemma}, \ref{yeslemma}, \ref{nolemma}:
\begin{align*}
\left|\hspace{0.05cm}\Pr_{(f,\mathcal{D}_f)\sim\mathcal{YES}}\big[\hspace{0.03cm}
T(f,\mathcal{D}_f)\ \text{accepts}\hspace{0.03cm}\big]-\Pr_{(g,\mathcal{D}_g)\sim\mathcal{NO}}\big[
\hspace{0.03cm}T(g,\mathcal{D}_g)\text{\ accepts}\hspace{0.03cm}\big]\hspace{0.05cm}\right|\leq
o(1)+0.1+0.1<1/4.
\end{align*}
This finishes the proof of Lemma \ref{lowerlemma} (and Theorem \ref{lowertheorem}).
\end{proof}

\section{Extending the Upper Bound to General Conjunctions}\label{upperboundgeneral}

In this section, we prove Theorem \ref{uppertheoremforgeneral} using a simple reduction
  based on the following connection between $\mconj$ and $\conj$.
We need some notation. Given any $x\in \{0,1\}^n$ and $C\subseteq [n]$, we use $x^{(C)}$ to denote
  the string obtained from $x$ by flipping all coordinates in $C$.
Given a probability distribution $\calD$ over $\{0,1\}^n$, we use
  $\calD^{(C)}$ to denote the distribution with $\calD(x)=\calD(x^{(C)})$ for all $x$.

\begin{lemma}\label{conjunctionreduction}
Let $\calD$ be a probability distribution over $\{0,1\}^n$, $f:\{0,1\}^n\rightarrow\{0,1\}$
  be a Boolean function, and $x^*\in\{0,1\}^n$ be a string such that $f(x^*)=1$.
Let $C=\zero(x^*)$, and let $g:\{0,1\}^n\rightarrow\{0,1\}$
  denote the Boolean function with $g(x)=f(x^{(C)})$ for all $x\in \{0,1\}^n$. Then we have
\begin{enumerate}
\item If $f\in \conj$, then $g\in \mconj$.\vspace{-0.1cm}

\item If $\dist_\calD(f,\conj)\ge \epsilon$, then $\dist_{\calD^{(C)}}(g,\mconj)\ge \epsilon$.
\end{enumerate}
\end{lemma}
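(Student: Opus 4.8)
The plan is to verify both claims essentially by unwinding the definition of $g$ and the observation that flipping coordinates in $C=\zero(x^*)$ sets up a bijection between $\{0,1\}^n$ and itself that is an involution (since $(x^{(C)})^{(C)}=x$). For the first claim, I would start from a general conjunction $f(z)=\bigl(\bigwedge_{i\in S} z_i\bigr)\wedge\bigl(\bigwedge_{i\in S'}\overbar{z_i}\bigr)$. Since $f(x^*)=1$, every $i\in S$ has $x^*_i=1$ (so $i\notin C$) and every $i\in S'$ has $x^*_i=0$ (so $i\in C$); in particular $S$ and $S'$ are disjoint from and contained in $C$ respectively. Then $g(x)=f(x^{(C)})$: for $i\in S$, $(x^{(C)})_i=x_i$ (coordinate not flipped), and for $i\in S'$, $(x^{(C)})_i=\overbar{x_i}$ (coordinate flipped), so $\overbar{(x^{(C)})_i}=x_i$. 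Hence $g(x)=\bigl(\bigwedge_{i\in S}x_i\bigr)\wedge\bigl(\bigwedge_{i\in S'} x_i\bigr)=\bigwedge_{i\in S\cup S'} x_i$, which is a monotone conjunction. This is a short syntactic check and I expect it to be routine.

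For the second claim I would use the contrapositive of the distance statement together with the symmetry of the construction. The key structural fact is: the map $h\mapsto h^{(C)}$ (defined by $h^{(C)}(x)=h(x^{(C)})$) is a bijection on Boolean functions that carries $\mconj$ \emph{into} $\conj$ — indeed if $h\in\mconj$ then $h^{(C)}$ is the conjunction of literals $z_i$ for $i\in T\setminus C$ and $\overbar{z_i}$ for $i\in T\cap C$, where $h=\bigwedge_{i\in T}z_i$, so $h^{(C)}\in\conj$. Moreover the distance is preserved under this change of variables: $\dist_{\calD^{(C)}}(g,h)=\Pr_{x\sim\calD^{(C)}}[g(x)\ne h(x)]=\Pr_{y\sim\calD}[g(y^{(C)})\ne h(y^{(C)})]=\Pr_{y\sim\calD}[f(y)\ne h^{(C)}(y)]=\dist_\calD(f,h^{(C)})$, where the second equality is the substitution $x=y^{(C)}$ which is measure-preserving because $\calD^{(C)}(y^{(C)})=\calD(y)$, and the third uses $g=f^{(C)}$ together with $(y^{(C)})^{(C)}=y$.

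Putting these together: suppose for contradiction that $\dist_{\calD^{(C)}}(g,\mconj)<\epsilon$, so there is $h\in\mconj$ with $\dist_{\calD^{(C)}}(g,h)<\epsilon$. By the displayed identity, $\dist_\calD(f,h^{(C)})<\epsilon$, and by the structural fact $h^{(C)}\in\conj$, contradicting $\dist_\calD(f,\conj)\ge\epsilon$. Hence $\dist_{\calD^{(C)}}(g,\mconj)\ge\epsilon$, which is the second claim. I do not anticipate a genuine obstacle here; the only point that needs care is making sure the change-of-variables in the probability is stated correctly (that $y\mapsto y^{(C)}$ pushes $\calD$ forward to $\calD^{(C)}$, directly from the definition $\calD^{(C)}(x)=\calD(x^{(C)})$) and that one uses $\mconj\subseteq\conj$ — or rather its image under the twist lands in $\conj$ — in the right direction, since we need the distance to $\conj$ of $f$ to lower-bound, not upper-bound, the distance to $\mconj$ of $g$. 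A symmetric argument in the reverse direction would in fact give equality, but only the stated inequality is needed.
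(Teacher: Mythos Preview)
Your proposal is correct and follows essentially the same argument as the paper: for part~1 you use $f(x^*)=1$ to deduce $S\cap C=\emptyset$ and $S'\subseteq C$, then compute $g(x)=\bigwedge_{i\in S\cup S'}x_i$; for part~2 you prove the contrapositive by taking $h\in\mconj$ close to $g$, showing $h^{(C)}\in\conj$, and verifying via the change of variables $x\mapsto x^{(C)}$ that $\dist_\calD(f,h^{(C)})=\dist_{\calD^{(C)}}(g,h)$. The paper's proof is the same, with your $h^{(C)}$ denoted $h'$.
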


\begin{proof}
Assume that $f\in \conj$. Then
$$
f(x)=\left(\bigwedge_{i\in S}x_i\right)\bigwedge\left(\bigwedge_{i\in S'}\overbar{x_i}\right),
$$
where $S,S'\subseteq [n]$ are disjoint (since $f(x^*)=1$).
We also have that $C\cap S=\emptyset$ and $S'\subseteq C$.
As a result, $$g(x)=f(x^{(C)})=\bigwedge_{i\in S\cup S'} x_i\in \mconj,$$ and the first part of the lemma follows.


We prove the contrapositive of the second part.
Assume that $\dist_{\calD^{(C)}}(g,h)<\epsilon$, for some $h\in \mconj$.
Let $h'$ denote the Boolean function with $h'(x)=h(x^{(C)})$.
Then we have $h'\in \conj$ and
\begin{align*}
\dist_\calD(f,\conj)\le \dist_\calD(f,h')&=\Pr_{x\in \calD} \big[f(x)\ne h'(x)\big]\\
&=\Pr_{x\in \calD} \big[ g(x^{(C)})\ne h(x^{(C)})\big]
=\Pr_{x\in \calD^{(C)}} \big[g(x)\ne h(x)\big]=\dist_{\calD^{(C)}}(g,h)<\epsilon.
\end{align*}
This finishes the proof of the second part of the lemma.
\end{proof}

Now we prove Theorem \ref{uppertheoremforgeneral}.

\begin{proof}[Proof of Theorem \ref{uppertheoremforgeneral}]
Given Lemma \ref{conjunctionreduction},
  a distribution-free testing algorithm for $\conj$ on $(f,\calD)$
  starts by drawing $O(1/\epsilon)$ samples from $\calD$ to find a string $x^*$ with $f(x^*)=1$.
If no such string is found, the algorithm accepts; otherwise the algorithm
  takes the first sample $x^*$ with $f(x^*)=1$
  and runs our algorithm~for $\mconj$ to test whether $g(x)=f(x^{(C)})$ is in $\mconj$, where
  $C=\zero(x^*)$, or $g$ is $\epsilon$-far from $\mconj$ with respect to $\calD^{(C)}$,
(Note that we can simulate queries on $g$ using
  the black box for $f$ query by query; we can simulate
  samples drawn from $\calD^{(C)}$ using the sampling oracle for $\calD$ sample by sample.)
and returns the same answer.

This algorithm is clearly one-sided given Lemma \ref{conjunctionreduction} and the fact that our
  algorithm for testing $\mconj$ is one-sided.
When $f$ is $\epsilon$-far from $\conj$, we have that $\calD(f^{-1}(1))\ge \epsilon$ because
  the all-$0$ function is in $\conj$ (when both $x_i$ and $\overbar{x_i}$
  appear in the conjunction for some $i\in [n]$).
As a result, the algorithm finds an~$x^*$~with $f(x^*)=1$ within the first $O(1/\epsilon)$ samples
  with high probability.
It then follows from Lemma~\ref{conjunctionreduction}~that~$(f,\calD)$ is rejected with high probability.
\end{proof}

\section{Extending the Lower Bound to General Conjunctions and Decision Lists}\label{lowerboundgeneral}

\def\ltf{\textsc{Ltf}}

Let $\conj,\dl$ and $\ltf$ denote the classes of all
  general conjunctions, decision lists, and linear threshold functions, respectively.
Then we have $\mconj\subset\conj\subset\dl\subset \ltf$.
In this section, we prove Theorem \ref{lowertheoremgeneral} for
  general conjunctions and decision lists.
For this purpose we follow the same strategy used in \cite{GlasnerServedio}
  and prove the following property on the distributions $\nod$
  defined in Section \ref{two-distributions}:

\begin{lemma}\label{hehelast}
With probability $1-o(1)$, $(f,\calD_f)$ drawn from $\nod$ satisfies
  $\dist_{\calD_f}(f,\dl)\ge 1/{12}$.
\end{lemma}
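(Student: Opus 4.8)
The plan is to show that any decision list $L$ must err on at least a $1/12$ fraction of the $3m$ equally-weighted strings $\{a^i, b^i, c^i\}$, with high probability over the draw of $(g, \calD_g)$ from $\nod$. The key structural fact I would exploit is the one already recorded after the definition of $\nod$: for every $i$, we have $g(a^i) = g(b^i) = 1$ but $g(c^i) = 0$, and moreover $c^i = a^i \wedge b^i$, i.e.\ $\zero(c^i) = \zero(a^i) \cup \zero(b^i)$ with $\zero(a^i), \zero(b^i)$ disjoint. So the obstruction to being a monotone conjunction is purely local to each triple, and I want to argue the same local obstruction defeats decision lists. First I would set up the counting: if $L$ agrees with $g$ on all three of $a^i, b^i, c^i$ for some index $i$, I will call triple $i$ \emph{good for $L$}; the number of strings on which $L$ errs is at least the number of triples that are not good for $L$ (in fact at least one error per non-good triple, since a non-good triple contributes at least one mismatch among its three strings). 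So it suffices to show that, with probability $1-o(1)$ over $\nod$, \emph{every} decision list is good for at most $(3/4)m$ of the triples; then $\dist_{\calD_f}(f, \dl) \ge (1/4)\cdot m \cdot \tfrac{1}{3m} = 1/12$.

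The heart of the argument is a combinatorial claim about a single decision list restricted to one triple. Fix a decision list $L = ((\ell_1,\beta_1),\dots,(\ell_k,\beta_k),\beta_{k+1})$ and a triple $(a^i, b^i, c^i)$. Consider the first literal $\ell_j$ in the list that is ``activated'' by at least one of these three strings (i.e.\ made true). Because $c^i$ sets a \emph{superset} of the coordinates that $a^i$ or $b^i$ set to $0$ — equivalently $c^i \le a^i$ and $c^i \le b^i$ in the coordinatewise order — a positive literal $z_t$ true on $a^i$ or on $b^i$ need not be true on $c^i$, while a negative literal $\overbar{z_t}$ true on $a^i$ or $b^i$ \emph{is} true on $c^i$, and a positive literal true on $c^i$ is true on both $a^i$ and $b^i$. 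I would do a short case analysis on whether the first activated literal is positive or negative and on which of the three strings activates it, to conclude that $L$ can be good for triple $i$ (outputting $1,1,0$ respectively) only if the ``deciding'' literal is a \emph{positive} literal $z_t$ with $t \in (\zero(a^i)\cup\zero(b^i)) \setminus \zero(c^i)$ — but that set is empty — \emph{or} only if the structure of $L$'s literals relative to $C_i$ is of a very restricted form, essentially requiring $L$ to name an index inside $A_i$ or $B_i$ early. More carefully: the only way a decision list outputs $1$ on both $a^i$ and $b^i$ but $0$ on $c^i = a^i \wedge b^i$ is for the first literal true on $c^i$ (call it $\ell_j$, with value $\beta_j = 0$) to be preceded by a literal $\ell_{j'}$, $j' < j$, that is true on $a^i$ (or on $b^i$) with $\beta_{j'} = 1$, and symmetrically. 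Tracking which coordinates these literals touch, goodness of triple $i$ forces $L$ to reference, among its first $O(k)$ literals, a coordinate lying in a specific small ``private'' region of $C_i$ — and since the $C_i$ are built from randomly chosen blocks out of $r = n^{1/3}\log^2 n$ blocks, a fixed list of polynomially many literals can hit such a region for only $o(m)$ of the indices $i$.

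The main obstacle, and where I would spend the real effort, is the union bound over all decision lists: there are far too many of them to union-bound naively, so I cannot fix $L$ first. The standard fix — which I'd follow, as in \cite{GlasnerServedio} — is to observe that what matters about $L$ is only which \emph{coordinates} appear among the literals that could be ``deciding'' on these triples, and the \emph{order} in which block-coordinates of $\cup_i C_i$ first appear in $L$; this is captured by a bounded-size combinatorial object (a short prefix of the list, or equivalently a partial order / ranking on the relevant coordinates), and one shows that for a \emph{random} draw of the block structure, every such object simultaneously fails to make more than $(3/4)m$ triples good, with probability $1-o(1)$. Concretely I would: (i) argue that goodness of triple $i$ for $L$ depends only on the restriction of $L$'s literal sequence to coordinates in $C_i$, and within that only on a short prefix because each $C_i$ has $2\log^2 n$ blocks; (ii) bound the number of relevant ``types'' of such restrictions by $n^{O(\log^2 n)}$ or so; (iii) for each fixed type, use the randomness of which blocks go into each $C_i$ (and which into $A_i$ vs.\ $B_i$, and the placement of $\alpha_i,\beta_i$) to show the expected number of triples that type makes good is $o(m)$, with exponential-in-$\mathrm{polylog}$ concentration via the folklore Chernoff variant already cited (Lemma~\ref{lerandomlemma}); then (iv) union-bound over types. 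Getting the bookkeeping of step (i) exactly right — pinning down precisely which short prefix of literals controls triple $i$, so that the ``type'' count in step (ii) stays sub-exponential — is the delicate part; everything downstream is the same birthday-paradox and concentration machinery already used in Lemmas~\ref{lem:separated} and \ref{nolemma}. Finally, since $\mconj \subset \conj \subset \dl$, the same bound $\dist_{\calD_f}(f,\conj) \ge 1/12 \ge \dist_{\calD_f}(f,\mconj)$ downgrades trivially, giving the decision-list and general-conjunction cases of Theorem~\ref{lowertheoremgeneral} at once.
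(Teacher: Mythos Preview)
Your proposal is headed in a much harder direction than necessary, and the central step---the union bound over ``types'' of decision lists---is not made precise enough to be confident it closes. You correctly identify the local structure (if $L$ agrees on the triple $(a^i,b^i,c^i)$ then the first literal fired by $a^i$ or $b^i$ must be a positive literal $z_t$ with $t$ in $B_i$, respectively $A_i$), but from there you try to argue that a fixed list can only be good for $o(m)$ triples and then union-bound over a sub-exponential family of ``types.'' The type-reduction in your step (i) is the crux and is left vague: decision lists can have length $\Theta(n)$, the $C_i$'s overlap heavily through shared blocks, and it is not at all clear that goodness of triple $i$ is controlled by a short prefix of $L$ or by a coordinate in any ``small private region'' of $C_i$. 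Without that, the bound in (ii) on the number of types, and hence the union bound in (iv), doesn't go through.

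The paper sidesteps all of this with a single deterministic observation, following Glasner--Servedio: if $C_i\cap C_j=\emptyset$, then \emph{no} decision list can agree with $g$ on all six of $a^i,b^i,c^i,a^j,b^j,c^j$. The argument is a few lines using $\first(\cdot)$: assuming $\first(a^i)$ is the minimum among $\first(a^i),\first(b^i),\first(a^j),\first(b^j)$, the literal $\ell_{\first(a^i)}$ must be $z_k$ with $k\in B_i$; disjointness gives $c^j_k=1$, hence $\first(c^j)<\first(a^i)$, contradicting the other inequality $\min\{\first(a^j),\first(b^j)\}\le\first(c^j)$. No union bound over decision lists is needed at all. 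The only use of randomness is to show that with probability $1-o(1)$ at least half of the pairs $(C_{2i-1},C_{2i})$, $i\in[m/2]$, are disjoint (immediate since each pair shares a block with probability $O(\log^2 n/n^{1/3})$). That yields at least $m/4$ forced errors among the $3m$ equally weighted strings, i.e.\ distance $\ge 1/12$. You were one step away from this once you wrote down the ``first fired literal is $z_t$, $t\in B_i$'' observation---the missing idea is to play that literal against a \emph{second}, disjoint triple rather than trying to count across all triples simultaneously.
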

The same lower bound for $\conj$ and $\dl$ then follows directly from Lemma \ref{lowerlemma},
  given that $\mconj\subset \conj\subset \dl$ and the fact that any pair
  $(g,\calD_g)$ drawn from $\yesd$ satisfies $g\in \mconj$.


\begin{proof}[Proof of Lemma \ref{hehelast}]
Let $(f,\calD_f)$ be a pair drawn from $\nod$.
Given any $i,j\in [m]$ such that $C_i\cap C_j=\emptyset$,
  we follow the same argument from Glasner and Servedio \cite{GlasnerServedio} to show that
  no decision list agrees with $f$ on all of the following six strings $a^i,b^i,c^i,a^j,b^j,c^j$.

\def\first{\textsc{first}}

Assume for contradiction that a decision list $h$ of length $k$:
$$
(\ell_1,\beta_1),\ldots,(\ell_k,\beta_k),\beta_{k+1},
$$
agrees with $f$ on all six strings.
Let $\first(a)$ denote the index of the first literal $\ell_i$ in $h$ that is
  satisfied by a string $a$, or $k+1$ if no literal is satisfied by $a$.
Then we have
\begin{equation}\label{uiui}
 \min\big\{\first(a^i),\first(b^i)\big\}\le \first(c^i)\ \ \ \ \text{and}\ \ \ \
 \min\big\{\first(a^j),\first(b^j)\big\}\le \first(c^j).
\end{equation}
This is because by the definition of $a^i,b^i$ and $c^i$, any literal satisfied by
  $c^i$ is satisfied by either $a^i$ or $b^i$.
Next assume without loss of generality that
\begin{equation}\label{uiui2}
\first(a^i)=\min\big\{\first(a^i),\first(b^i),\first(a^j),\first(b^j)\big\}.
\end{equation}
By (\ref{uiui}) we have that $\first(c^i)\ge \first(a^i)$.
As $h(c^i)=f(c^i)=0$ and $h(a^i)=f(a^i)=1$, we have that $\first(c^i)\ne \first(a^i)$ and thus,
  $\first(c^i)>\first(a^i)$.
This implies that the literal $\smash{\ell_{\first(a^i)}}$ must be $x_k$ for some $k\in B_i$.
As $C_i\cap C_j=\emptyset$, we have $B_i\cap C_j=\emptyset$ and thus,
  $c^j_k=1$.
This implies that $\first(c^j)\le \first(a^i)$, and $\first(c^j)<\first(a^i)$
  because they cannot be the same given that $h(c^j)=f(c^j)=0$ and $h(a^i)=f(a^i)=0$.
However, $\first(c^j)<\first(a^i)$ contradicts with (\ref{uiui}) and (\ref{uiui2}).

As a result, when $C_i$ and $C_j$ are disjoint, one has to flip at least
  one bit of $f$ at the six strings to make it consistent with a decision list.
The lemma then follows from the fact that, with probabiilty $1-o(1)$, at least
  half of the pairs $C_{2i-1}$ and $C_{2i}$, $i\in [m/2]$, are disjoint.
\end{proof}

\section{Extending the Lower Bound to Linear Threshold Functions}\label{LTFsec}

In this section we extend our lower bound to the distribution-free testing of linear threshold
  functions (LTF for short).
We follow ideas from Glasner and Servedio \cite{GlasnerServedio} to construct a pair
  of probability distributions $\yesd^*$ and $\nod^*$ with the following properties:
\begin{flushleft}\begin{enumerate}
\item For each draw $(f,\calD_f)$ from $\yesd^*$, $f$ is a LTF;\vspace{-0.1cm}

\item For each draw $(g,\calD_g)$ from $\nod^*$, $g$ is $(1/4)$-far from LTFs with respect to $\calD_g$.
\end{enumerate}\end{flushleft}
Let $q=n^{1/3}/\log^3n$.
We follow arguments from the proof of Lemma \ref{lowerlemma} to prove the following
  lemma:

\begin{lemma}\label{ltflowerbound}
Let $T$ be a deterministic algorithm that makes at most $q$
  queries to each oracle. Then
\begin{align*}
\left|\hspace{0.05cm}\Pr_{(f,\mathcal{D}_f)\sim\mathcal{YES^*}}\big[\hspace{0.03cm}
T(f,\mathcal{D}_f)\ \text{accepts}\hspace{0.03cm}\big]-\Pr_{(g,\mathcal{D}_g)\sim\mathcal{NO^*}}\big[
\hspace{0.03cm}T(g,\mathcal{D}_g)\text{\ accepts}\hspace{0.03cm}\big]\hspace{0.05cm}\right|\leq \frac{1}{4}.
\end{align*}
\end{lemma}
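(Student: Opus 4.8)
The plan is to run the proof of Lemma~\ref{lowerlemma} essentially verbatim, after replacing the three-string gadget $(a^i,b^i,c^i)$ of Section~\ref{two-distributions} by a four-string gadget tailored to $\ltf$, following the LTF construction of Glasner and Servedio~\cite{GlasnerServedio}. The first steps of the two procedures (sampling $R$, the special indices, the blocks, and the sets $C_i$) are kept, but now $A_i$ and $B_i$ are chosen so that they \emph{overlap} in a controlled set of blocks $M_i=A_i\cap B_i$ (still with $\alpha_i\in A_i\setminus B_i$, $\beta_i\in B_i\setminus A_i$, and $A_i\cup B_i=C_i$), and a fourth string $d^i$ with $\zero(d^i)=M_i$ is added. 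The sizes $|A_i|=|B_i|$, $|C_i|$ and $|M_i|$ are three distinct values, so the strong sampling oracle still reveals the ``type'' of each sample, together with $\alpha_k$ whenever a $c^k$ is drawn, exactly as in Section~\ref{strongoracle}. The overlap is chosen so that $\tfrac12 a^i+\tfrac12 b^i=\tfrac12 c^i+\tfrac12 d^i$ holds coordinate by coordinate (since $A_i\cup B_i=C_i$ and $A_i\cap B_i=M_i$); hence no $\ltf$ $h$ can satisfy $h(a^i)=h(b^i)=1$ and $h(c^i)=h(d^i)=0$, because applying the linear functional $x\mapsto w\cdot x$ to both sides of this identity would contradict the threshold. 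In $\nod^*$ we define $g$ so that each gadget realizes exactly this forbidden labeling on its support (off the support, $g$ is built from the same ``$i$-special'' mechanism as $\nod$), and $\calD_g$ is uniform over the $4m$ strings; in $\yesd^*$ we take $f$ to be a genuine $\ltf$ (necessarily with some negative weights, on the blocks exclusive to each $B_i$) realizing a labeling that agrees with $g$ on every size-type, namely $f(b^i)=1$ and $f(c^i)=f(d^i)=0$, and $\calD_f$ is supported on $b^i,c^i,d^i$ with weights $\tfrac12,\tfrac14,\tfrac14$, so that the distribution of the (type, label) pair of a sample is identical under $\calD_f$ and $\calD_g$.

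With this setup the YES property is immediate ($f\in\ltf$ by construction), and the NO property follows just as in Lemma~\ref{hehelast}: each gadget forces at least one of its four values of $g$ to be changed to reach any $\ltf$, and the gadgets' strings are (with probability $1-o(1)$, or deterministically after a routine tweak making the $d^i$ distinct) pairwise distinct of mass $1/(4m)$, so $\dist_{\calD_g}(g,\ltf)\ge 1/4$. The heart of the proof is then to re-establish, with $T'$ defined exactly as in Section~\ref{defT'}, the three simulation lemmas. The analogue of Lemma~\ref{noquerylemma} is unchanged: the free information $J$ has the same law under $\yesd^*$ and $\nod^*$, and conditioned on $J$ the sample sequence $Q$ has no collision with probability $1-o(1)$ (birthday paradox with $q=n^{1/3}/\log^3n$ and $m=n^{2/3}$), in which case its law is a product of per-sample probabilities identical under the two distributions. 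The analogue of Lemma~\ref{yeslemma} follows the same good/bad-leaf argument: the only way $T$ and $T'$ diverge on a good leaf is that some query $z$ has $\zero(z)\subseteq R$, avoids $\Gamma(Q)$, yet hits a special index outside the sampled sets, which has probability $o(1)$ by the same counting. The analogue of Lemma~\ref{nolemma} reuses the decomposition of ``$T$ does not reach a good leaf $w$'' into a constant number of bad events ($E_0$ through $E_3$ as in the proof, plus one new bad event caused by $d^i$, whose zero set $M_i\subset A_i$ is governed by exactly the block statistics already controlled by the definition of a separated $Q$), each bounded by $o(1)$. Combining the three lemmas by the triangle inequality, as in the proof of Lemma~\ref{lowerlemma}, yields the claimed bound of $1/4$.

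I expect the analogue of Lemma~\ref{nolemma} to be the main obstacle: one must re-audit the definition of $g$ off the support of $\calD_g$ so that it is consistent with the four-string gadget and still returns the conjunction-like value to every black-box query a $q$-query algorithm can afford, and then re-verify that each bad event---including the new one introduced by $d^i$, and the fact that $A_i$ and $B_i$ now share blocks---has probability $o(1)$ with the modified parameters. The index budget itself does not tighten, since all four strings of a gadget reuse the same $R$ of size $n/2$ and the same pool of blocks, but the ``$i$-special'' condition and the separatedness of $Q$ likely need to be re-tuned to absorb these new cases.
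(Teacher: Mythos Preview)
Your plan diverges from the paper's actual construction of $\yesd^*$ and $\nod^*$, and the paper's route is considerably simpler. The paper keeps $A_i$ and $B_i$ \emph{disjoint}, exactly as in the $\mconj$ construction, and obtains the fourth point of each affine dependency for free by putting the single string $\11^n$ into the support of both $\calD_f$ and $\calD_g$ with weight $1/4$: since $A_i\cap B_i=\emptyset$ and $A_i\cup B_i=C_i$, one has $a^i+b^i=c^i+\11^n$, so no $h\in\ltf$ can realize $h(a^i)=h(b^i)=1$, $h(c^i)=h(\11^n)=0$. The YES function is the genuine LTF $u(x)\ge\theta$ with weights $10n^2$ on $[n]\setminus R$, $5n$ on each $\alpha_i$, and $-1$ uniformly; the NO function replaces $5n\sum_i x_{\alpha_i}$ by $5n\big(|J(x)|+\sum_{i\notin J(x)}x_{\alpha_i}\big)$ where $J(x)$ is the set of $i$ for which $x$ is $i$-special. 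The simulator $T'$ returns $1$ iff a proxy $\phi(x)\ge\theta$, where $\phi$ uses only the $\alpha_i$'s in $\Gamma(Q)$; the events $\phi(x)\ne u(x)$ and $\phi(x)\ne v(x)$ then reduce \emph{literally} to the events $E_0,E_1,E_2,E_3$ already bounded in the proof of Lemma~\ref{nolemma}. No overlap of $A_i,B_i$, no new string $d^i$, no re-tuning of $i$-special or separatedness, and no new bad events.

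Your per-gadget four-string route is plausible in outline, but there is a concrete gap beyond the re-auditing you flag. You propose to realize the YES LTF with ``negative weights on the blocks exclusive to each $B_i$,'' but blocks are \emph{shared} across gadgets: a block in $B_i\setminus M_i$ can sit in $A_j\setminus M_j$ for some $j\ne i$, and a single coordinate weight cannot simultaneously be the negative weight your $i$-th constraint needs and the positive (or zero) weight your $j$-th constraint needs. The only per-gadget private coordinates are $\alpha_i$ and $\beta_i$, so the fix is to put the negative weight on $\beta_i$; but then the simulator $T'$ must also guess about unrevealed $\beta_i$'s, not just $\alpha_i$'s, and the bad-event decomposition for Lemma~\ref{nolemma} roughly doubles. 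None of this is fatal, but it is exactly the extra work that the paper's $\11^n$ trick avoids.
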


Our lower bound for LTFs then follows from Yao's minimax lemma.
Below we define $\yesd^*$ and $\nod^*$ in Sections \ref{LTFyes} and \ref{LTFnod}, respectively,
  and prove Lemma \ref{ltflowerbound} in Section \ref{LTFproof}.



\subsection{The Distribution $\yesd^*$}\label{LTFyes}

Recall the following parameters from the definition of $\yesd$ and $\nod$ in Section \ref{two-distributions}:
\begin{align}
\notag \ell=  {n^{2/3}} +2 ,\quad m= n^{2/3},
\quad\text{and}
\quad { s=\log^2 n}. 
\end{align}
A draw $(f,\calD_f)$ from the distribution $\yesd^*$ is obtained using the following procedure:

\begin{flushleft}\begin{enumerate}
 \item Following the first five steps of the definition of
   $\yesd$ in Section \ref{yesdist} to obtain $R,C_i,A_i,B_i,\alpha_i,\beta_i$.
For each $i\in [m]$, let $a^i,b^i,c^i$ be the strings with $A_i=\zero(a^i)$,
  $B_i=\zero(b^i)$, $C_i=\zero(c^i)$.

\item  Define $u:\{0,1\}^n\rightarrow\mathbb{Z}$ as following:
$$u(x)=10n^2\sum\limits_{k\in[n]\backslash R}x_k+5n\sum\limits_{i\in[m]}x_{\alpha_i}-
\sum\limits_{k\in[n]}x_k.$$
Let $\theta=10n^2(n/2-2m)+5nm-(n-\ell/4)$.
\item Let $f:\{0,1\}^n\rightarrow\{0,1\}$ be the function with
  $f(x)=1$ if $u(x)\ge\theta$, and $f(x)=0$ otherwise.
The distribution $\calD_f$ is defined as follows: we put $1/4$  weight on $\11^n$, and for each $i\in[m]$, we put $1/(2m)$ weight on $b^i$ and $1/(4m)$ weight on $c^i$.
\end{enumerate}\end{flushleft}
Clearly every pair $(f,\calD_f)$ drawn from $\yesd^*$
  satisfies that $f$ is an LTF.
It is also easy to check that $$f(a^i)=f(c^i)=f(\11^n)=0\ \ \ \text{and}\ \ \ f(b^i)=1,
\ \ \ \ \ \text{for each $i\in[m]$}.$$

\subsection{The Distribution $\nod^*$}\label{LTFnod}

A draw $(g,\calD_g)$ from the distribution $\nod^*$ is obtained in the following procedure:

\begin{flushleft}\begin{enumerate}
 \item Following the definition of
   $\yesd$ in Section \ref{yesdist} to obtain $R,C_i,A_i,B_i,\alpha_i,\beta_i,c^i,a^i,b^i$. 

\item We follow the same definition of a string being \emph{$i$-special} for some $i\in [m]$
  as in Section \ref{nodist}. Let
$$J(x)=\big\{\hspace{0.01cm}i\in[m]:\text{$x$ is $i$-special}\hspace{0.02cm}\big\},\ \ \ \ \ \text{for each $x\in \{0,1\}^n$.}$$

\item Define $v:\{0,1\}^n\rightarrow\mathbb{Z}$ as following:
$$v(x)=10n^2\sum\limits_{k\in[n]\backslash R}x_k+5n\left(|J(x)|+\sum\limits_{i\in[m]\backslash J(x)}x_{\alpha_i}\right)-\sum\limits_{k\in[n]}x_k.$$
Let $\theta$ be the same threshold: $\theta=10n^2(n/2-2m)+5nm-(n-\ell/4)$.

 \item Let $g:\{0,1\}^n\rightarrow \{0,1\}$ be the function with
   $g(x)=1$ if $v(x)\ge\theta$, and $g(x)=0$ otherwise.
$\calD_g$ is defined as follows: we put $1/4$  weight on $\11^n$  and $1/(4m)$ weight on each of $a^i,b^i,c^i$,
 $i\in [m]$.
\end{enumerate}\end{flushleft}
For each $i\in[m]$, we still have $g(c^i)=g(\11^n)=0$, $g(b^i)=1$ but $g(a^i)$ is flipped to $1$
  (since $a^i$ is $i$-special). As $C_i=A_i\cup B_i$, we have that at least one of $g(a^i),g(b^i),g(c^i),g(\11^n)$ needs to be flipped to make $g$~an LTF. It follows from the definition of $\calD_g$ that $g$ is $(1/4)$-far from LTFs with respect to $\calD_g$.

\subsection{Proof of Lemma \ref{ltflowerbound}}\label{LTFproof}

We follow arguments used in the proof of
  Lemma \ref{lowerlemma} to prove lemma \ref{ltflowerbound}.

Let $T$ be any deterministic algorithm that makes $q$ queries to each of the two oracles.
We follow Section \ref{strongoracle}, and assume that $T$ has access to the
  following \emph{strong sampling oracle}:
\begin{flushleft}\begin{enumerate}
\item When the sampling oracle returns $c^i$ for some $i\in [m]$, it returns
  the special index $\alpha_i$ as well;\vspace{-0.1cm}
\item {For convenience we also assume without loss of generality that the
  oracle always returns a sample drawn from the marginal distribution of $\calD$
  within $\{a^i,b^i,c^i\}$ since samples of $\11^n$ are not useful in distinguishing $\yesd^*$ and $\nod^*$.}
\end{enumerate}   \end{flushleft}
We show that Lemma \ref{ltflowerbound} holds even~if~$T$ receives $q$ samples from the
  strong sampling oracle and makes~$q$ queries to the black-box oracle.
We follow the same notation introduced in Section \ref{strongoracle}.
Given a sequence $\smash{Q=((D_i,\gamma_i):i\in [q])}$~of samples that $T$ receives from
  the strong sampling oracle, let $\Gamma(Q)$ denote the set of integer $\gamma_i$'s in $Q$,
  let $S(Q)=\cup_{i\in [q]} D_i$, and let $I(Q)$ denote the set of $i\in [q]$ with $|D_i|=\ell/2$.

Next we follow Section \ref{defT'} to derive from $T$ a
  new deterministic oracle algorithm $T'$ that
  has \emph{no access} to the black-box oracle but receives $R$ in addition to the sequence of samples $Q$
  at the beginning.
We show that $T'$ cannot distinguish the two
  distributions $\yesd^*$ and $\nod^*$ (Lemma \ref{ltfdistinguish}),
  but $T'$ agrees with $T$ most of the time (Lemma \ref{ltfyescase} and Lemma \ref{ltfnocase}), from which Lemma \ref{ltflowerbound} follows.



The new algorithm $T'$ works as follows:
\begin{flushleft}\begin{enumerate}
 \item[] Given $R$ and $Q$, $T'$ simulates $T$ on $Q$ as follows
 (note that $T$ is not given $R$ but receives only $Q$ in the sampling phase):
 whenever $T$ queries about $x\in \{0, 1\}^n$, $T'$ does not
 query the oracle but computes
$$\phi(x)=10n^2\sum\limits_{k\in[n]\backslash R}x_k+5n\left(m-|I'(x)|\right)-\sum_{k\in [n]} x_k,$$
where $I'(x)=\zero(x)\cap \Gamma(Q)$, i.e., the set of all $\alpha_i$'s in
  $\Gamma(Q)$ revealed in the sampling phase such that $x_{\alpha_i}=0$.
$T'$ then passes $1$ back to $T$ if $\phi(x)\ge\theta$ and $0$ otherwise to continue the simulation of $T$.
At the end of the simulation, $T'$ returns the same answer as $T$.
\end{enumerate}\end{flushleft}

Now we are ready to prove the three lemmas mentioned above.

 The first lemma is to show that a deterministic oracle algorithm with no access to the black-box oracle cannot distinguish $\yesd^*$ and $\nod^*$ distributions with high probability.

\begin{lemma}\label{ltfdistinguish}\begin{flushleft}
Let $T^*$ be any deterministic oracle algorithm that, on a pair $(f,\mathcal{D})$
  drawn from either $\yesd^*$ or $\nod^*$, receives $R$ and a sequence $Q$
  of $q$ samples but has no access to the black-box oracle.
Then  %
\begin{align*}
\notag\left|\hspace{0.05cm}\Pr_{(f,\mathcal{D}_f)\sim\mathcal{YES}}\big[\hspace{0.03cm}
T^*\ \text{accepts}\hspace{0.03cm}\big]-\Pr_{(g,\mathcal{D}_g)\sim\mathcal{NO}}\big[
\hspace{0.03cm}T^*\text{\ accepts}\hspace{0.03cm}\big]\hspace{0.05cm}\right|
=o(1).
\end{align*}
\end{flushleft}\end{lemma}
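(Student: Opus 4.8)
The plan is to reduce Lemma~\ref{ltfdistinguish} to the already-established Lemma~\ref{noquerylemma}, using the observation that an algorithm $T^*$ with no access to the black-box oracle never evaluates $f$ or $g$, so the particular LTF shape of $f$ and $g$ is irrelevant here: all that matters is the joint distribution of $R$ together with the sample sequence $Q$ returned by the strong sampling oracle of Section~\ref{LTFproof}. First I would hand $T^*$, for free, the full structural tuple $J=\big((C_i,\{A_i,B_i\},\{\alpha_i,\beta_i\}):i\in[m]\big)$ with $\{A_i,B_i\}$ and $\{\alpha_i,\beta_i\}$ left \emph{unlabeled}, exactly as in the proof of Lemma~\ref{noquerylemma}; since $R=\cup_i C_i$ is recovered from $J$, this only makes the statement stronger. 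Because $\yesd^*$ and $\nod^*$ both generate $R,C_i,A_i,B_i,\alpha_i,\beta_i$ via the first five steps of $\yesd$, the marginal law of $J$ is identical in the two cases, so it suffices to fix $J$ and show that the conditional distribution of $Q$ under $\yesd^*$ is within $o(1)$ total variation of that under $\nod^*$; averaging over $J$ then finishes the lemma.

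For the conditional closeness of $Q$, the key point is that after the strong sampling oracle discards the (useless) draws of $\11^n$, it produces exactly the sample distributions of Section~\ref{two-distributions}: conditioning $\calD_f$ on landing in $\cup_i\{a^i,b^i,c^i\}$ turns the $\yesd^*$ weights $1/4,1/(2m),1/(4m)$ into weight $2/(3m)$ on each $b^i$ and $1/(3m)$ on each $c^i$, while the analogous conditioning turns the $\nod^*$ weights into $1/(3m)$ on each of $a^i,b^i,c^i$ --- precisely the $\calD_f$ of $\yesd$ and the $\calD_g$ of $\nod$. Hence I can repeat the collision argument verbatim: call $Q$ collision-free if no two of its sets are drawn from $\{A_k,B_k,C_k\}$ for a common $k$; since $q=n^{1/3}/\log^3 n=o(\sqrt m)$ with $m=n^{2/3}$, the birthday paradox gives a collision with probability $o(1)$, and conditioned on $J$ and on $Q$ being collision-free the probability of a given $Q$ is a product over $i\in[q]$ of per-sample probabilities equal to $1/(6m)$ when $|D_i|=\ell$ and $1/(3m)$ when $|D_i|=\ell/2$, in both $\yesd^*$ and $\nod^*$. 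The two conditional laws of $Q$ therefore differ by $o(1)$ in total variation; since $T^*$ is a deterministic function of $(J,Q)$, its acceptance probability shifts by at most $o(1)$, which is the claim.

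The only delicate bookkeeping --- and it is routine, being identical to the corresponding step of Lemma~\ref{noquerylemma} --- is the verification that the per-sample probabilities match in the collision-free case. Because $J$ exposes $\{A_i,B_i\}$ and $\{\alpha_i,\beta_i\}$ only as unordered pairs, the asymmetric $\yesd^*$ weights (mass $2/(3m)$ on $b^i$ and $0$ on $a^i$) get symmetrized by the unrevealed $A_i/B_i$ labeling to the same value $1/(3m)$ that $\nod^*$ assigns to each half, and likewise a draw of $c^i$ contributes $1/(6m)$ (a factor $1/2$ for the unrevealed labeling times $1/(3m)$ for the draw) to the observed pair $(C_i,\alpha_i)$ in both distributions; so the two product expressions agree term by term. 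I do not expect any genuinely new obstacle beyond this matching, since the LTF functions $f$ and $g$ themselves never enter a lemma that forbids black-box queries.
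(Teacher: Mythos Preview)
Your proposal is correct and follows essentially the same approach as the paper, which simply observes that the proof is identical to that of Lemma~\ref{noquerylemma} once the $\11^n$ samples are handled (and indeed the strong sampling oracle of Section~\ref{LTFproof} already conditions them away). You provide more explicit detail than the paper does---in particular the verification that conditioning $\calD_f$ and $\calD_g$ on $\cup_i\{a^i,b^i,c^i\}$ reproduces exactly the sample distributions of $\yesd$ and $\nod$---but the underlying argument via $J$, the birthday collision bound, and the per-sample probability matching is the same.
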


\begin{proof}
The proof of the lemma is essentially the same as the proof of Lemma \ref{noquerylemma}.
The only difference here is that the distribution $\calD$ is also supported on $\11^n$.
But because $\calD_f(\11^n)=\calD_g(\11^n)=1/4$ in both
  $\yesd^*$ and $\nod^*$, the same proof works here.
\end{proof}

Next we show that $T'$ agrees with $T$ most of the time when $(f,\calD_f)\sim\yesd^*$:

\begin{lemma}\label{ltfyescase}\begin{flushleft}
Let $T$ be a deterministic oracle algorithm that makes $q$ queries to the
  strong sampling oracle and the black-box oracle each, and let
  $T'$ be the algorithm defined using $T$ as above.
Then
$$
\left|\Pr_{(f,\calD_f)\sim \yesd^*} \big[\hspace{0.03cm}\text{$T$ accepts}
  \hspace{0.03cm}\big]
-\Pr_{(f,\calD_f)\sim \yesd^*}\big[\hspace{0.03cm}\text{$T'$ accepts}
  \hspace{0.03cm}\big]\right|\le 0.1.
$$
\end{flushleft}\end{lemma}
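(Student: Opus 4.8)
The plan is to follow the proof of Lemma~\ref{yeslemma} essentially verbatim, the only genuinely new ingredient being a short algebraic comparison between the true evaluation $u(x)$ and the value $\phi(x)$ that $T'$ computes. First I would fix the sequence $Q$ of $q$ samples received in the sampling phase and prove the claimed bound $0.1$ conditioned on each such $Q$; the lemma then follows by averaging over $Q$. Write $S=S(Q)$ and $t=|\Gamma(Q)|$. Exactly as in Lemma~\ref{yeslemma}, in $\yesd^*$ the distribution $\calD_f$ puts no mass on any $a^i$, so an index $\alpha_i$ can lie in $S$ only because $C_i$ appears in $Q$, in which case $\alpha_i$ is revealed; hence precisely $m-t$ of the $\alpha_i$'s lie in $R\setminus S$, and conditioned on $Q$ this set $\Delta$ is a uniformly random subset of $R\setminus S$ of size $m-t$, while $R\setminus S$ itself is a uniformly random subset of $[n]\setminus S$ of the appropriate size.

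Next I would reuse the decision-tree machinery: let $T_Q$ be the depth-$q$ decision tree of $T$ on input $Q$, let $w'(R)$ be the leaf that $T'$ reaches on $R$, and for each leaf $w$ let $H_w\subseteq[n]\setminus S$ be the set of $i$ such that some query string $z$ on the root-to-$w$ path has $z_i=0$ and $w$ lies in the $1$-subtree of $z$. A necessary condition for $T'$ to reach $w$ is $H_w\subseteq R$, so by the same hypergeometric estimate and a union bound over the at most $2^q$ ``bad'' leaves (those with $|H_w|\ge 0.02\,n^{1/3}$), using $q=O(n^{1/3}/\log^3 n)$, the leaf $w'(R)$ is bad with probability $o(1)$; this lets me restrict attention to good leaves.

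The one new step is the algebraic observation that for every $x\in\{0,1\}^n$,
$$\phi(x)-u(x)=5n\cdot\big|\{i\in[m]:\alpha_i\notin\Gamma(Q)\text{ and }x_{\alpha_i}=0\}\big|\ \ge\ 0,$$
since $\phi$ agrees with $u$ on the $10n^2\sum_{k\notin R}x_k$ term and on $-\sum_k x_k$, and replaces $5n\sum_{i}x_{\alpha_i}$ by $5n(m-|I'(x)|)$, which counts every \emph{unrevealed} $\alpha_i$ as though $x_{\alpha_i}=1$. Consequently $f(z)=1$ (i.e.\ $u(z)\ge\theta$) forces $\phi(z)\ge\theta$, so the only way $T'$ can pass a bit differing from $f(z)$ is that $T'$ passes $1$, $f(z)=0$, and $u(z)<\theta\le\phi(z)$; the strict inequality $\phi(z)>u(z)$ then forces some unrevealed $\alpha_i$ to satisfy $z_{\alpha_i}=0$. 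As in Lemma~\ref{yeslemma}, the first such disagreeing $z$ along the path witnesses $\alpha_i\in H_{w'(R)}$ (since $w'(R)$ lies in the $1$-subtree of $z$ and $\alpha_i\in[n]\setminus S$), and $\alpha_i\in\Delta$; hence, conditioned on $Q,R$ with $w'(R)=w$ a good leaf, $T$ fails to reach $w$ only if $H_w\cap\Delta\ne\emptyset$, an event of probability at most $1-\binom{|R\setminus S|-|H_w|}{m-t}\big/\binom{|R\setminus S|}{m-t}$, which by exactly the computation in Lemma~\ref{yeslemma} is at most $1-(1-3/n^{1/3})^{0.02\,n^{1/3}}<0.07$.

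Putting the pieces together: conditioned on $Q$, the probability that $T$ and $T'$ reach different leaves of $T_Q$ is at most $o(1)+0.07<0.1$, and whenever they reach the same leaf they return the same answer, which gives the conditional bound and hence the lemma. I do not expect a real obstacle here; the proof transfers from Lemma~\ref{yeslemma} with no structural change, and the only thing to get right is the bookkeeping for the identity $\phi(x)-u(x)=5n\,|\{i:\alpha_i\notin\Gamma(Q),\,x_{\alpha_i}=0\}|$ and the check that it makes $\phi$ play exactly the one-sided role that the bit $p(z,R,Q)$ played in the monotone-conjunction argument (passing $1$ is never ``too optimistic'' about $f$ being $1$, only ``too optimistic'' about the unrevealed $\alpha_i$'s being set).
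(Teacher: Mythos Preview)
Your proposal is correct and follows essentially the same approach as the paper's proof: both condition on $Q$, reuse the good/bad leaf machinery from Lemma~\ref{yeslemma}, observe that $\phi(x)\ge u(x)$ so any disagreement must have $T'$ passing $1$ while $f(z)=0$, and then trace this to an unrevealed $\alpha_i\in H_w\cap\Delta$, which is bounded by the identical hypergeometric computation. Your version is slightly more explicit in writing out the identity $\phi(x)-u(x)=5n\cdot|\{i:\alpha_i\notin\Gamma(Q),\,x_{\alpha_i}=0\}|$, whereas the paper simply rewrites $u$ in the form $10n^2\sum_{k\notin R}x_k+5n(m-|I(x)|)-\sum_k x_k$ and asserts $\phi\ge u$; but the content is the same.
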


\begin{proof}
Fix a sequence $Q$ of $q$ samples.
We prove the same statement conditioning on $Q$.
Let $\calR_Q$ denote the distribution of the set $R$, conditioning on $Q$.
We let $T_Q$ denote the binary decision tree of $T$ of depth $q$ upon receiving $Q$,
  and let $w'(R)$ denote the leaf that $T'$ reaches given $R$.

Following the same definition and argument used in the proof of Lemma \ref{yeslemma} (as
  $\phi(x)<\theta$ if one of the variables outside of $R$ is set to $0$), it suffices to
  show for every $R$ in the support of $\calR_Q$ such that
  $w=w'(R)$ is a \emph{good} leaf (see the definition in the proof of Lemma \ref{yeslemma}), we have that $T$ reaches $w$ with
  high probability (conditioning on both $Q$ and $R$).
Note that
$u(x)$ in the $\yesd^*$ distribution can also be written as:
$$u(x)=10n^2\sum\limits_{k\in[n]\backslash R}x_k+5n\left(m-|I(x)|\right)-\sum_{k\in [n]} x_k,$$
where $I(x)$ here is the set of all $\alpha_i$'s, $i\in [m]$, such that
  $x_{\alpha_i}=0$.
Since $\phi(x)\ge u(x)$,
  $T$ does not reach $w$ if and only if one of the strings $x$ along the path
  from the root of $T_Q$ to $w$
  satisfies $$|I'(x)|<|I(x)|\ \ \ \text{and}\ \ \ \phi(x)\ge \theta>u(x).$$
Given that $\Gamma(Q)$ contains all $\alpha_i$'s in $S(Q)$ (as $a^i$ is not in the support of $\calD_f$)
  it must be the case that $x_{\alpha_i}=0$ for some $\alpha_i\notin S(Q)$ and thus,
  $\alpha_i\in H_w$ for some $i\in [m]$ (see the definition of $H_w$ in the proof of Lemma \ref{yeslemma}).
This is exactly the same event analyzed in the proof of Lemma \ref{yeslemma}, with its probability bounded from
  above by $0.1$. This finishes the proof of the lemma.
\end{proof}

Finally we show that $T'$ agrees with $T$ most of the time when $(g,\calD_g)\sim\nod^*$:

\begin{lemma}\label{ltfnocase}\begin{flushleft}
Let $T$ be a deterministic oracle algorithm that makes $q$ queries to the
  strong sampling oracle and the black-box oracle each, and let
  $T'$ be the algorithm defined using $T$ as above.
Then
$$
\left|\Pr_{(g,\calD_g)\sim \nod^*} \big[\hspace{0.03cm}\text{$T$ accepts}
  \hspace{0.03cm}\big]
-\Pr_{(g,\calD_g)\sim \nod^*}\big[\hspace{0.03cm}\text{$T'$ accepts}
  \hspace{0.03cm}\big]\right|\le 0.1.
$$
\end{flushleft}\end{lemma}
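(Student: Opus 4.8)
The plan is to carry out the $\nod^*$ analysis exactly as in the proof of Lemma~\ref{nolemma} (just as the proof of Lemma~\ref{ltfyescase} mirrored that of Lemma~\ref{yeslemma}), after first recording the threshold arithmetic that makes the LTF construction behave, on the relevant strings, like the conjunction construction. The key computation is that the coefficient $10n^2$ dominates everything else: since $5nm = 5n^{5/3}$, $\ell/4 = O(n^{2/3})$ and $\sum_k x_k \le n$ are all $o(n^2)$, any query string $x$ having a coordinate outside $R$ set to $0$ has $\phi(x) < \theta$ and $v(x) < \theta$, so $T'$ passes $0$ and $g(x) = 0$. When $\zero(x) \subseteq R$, writing $\phi(x)-\theta = -5n|I'(x)| + |\zero(x)| - \ell/4$ (and similarly $v(x)-\theta = -5n\,|I(x)\setminus J(x)| + |\zero(x)| - \ell/4$, using that $v(x)=10n^2\sum_{k\notin R}x_k + 5n(m-|I(x)\setminus J(x)|)-\sum_k x_k$), and using $\ell/4 < 5n$ and $|\zero(x)| \le n < 5n$, one gets: $\phi(x)\ge\theta$ iff $\zero(x)\subseteq R$, $\zero(x)\cap\Gamma(Q)=\emptyset$ and $|\zero(x)|\ge\ell/4$; and $v(x)\ge\theta$ iff $\zero(x)\subseteq R$, $x$ is $i$-special for every $i$ with $x_{\alpha_i}=0$, and $|\zero(x)|\ge\ell/4$. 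In particular $T'$ passes $0$ whenever the conjunction-case bit $p(z,R,Q)$ of Section~\ref{defT'} is $0$ (and on at most a few more strings, those with $|\zero(x)|<\ell/4$, on which $g$ also vanishes, so they cause no discrepancy), and $g$ restricted to $\zero(x)\subseteq R$ has the same description as the function $g$ in the conjunction $\nod$. Hence "$T$ and $T'$ reach different leaves of $T_Q$" is caused, exactly as in Lemma~\ref{nolemma}, by some $z$ on the root-to-leaf path having $[\phi(z)\ge\theta]\ne g(z)$.

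Next I would set up the conditioning identically to the proof of Lemma~\ref{nolemma}. Lemma~\ref{lem:separated} applies verbatim to $\nod^*$: the block structure is produced by the same first five steps of $\yesd$, and — using the assumption that the strong sampling oracle returns a sample from the marginal of $\calD_g$ on $\cup_i\{a^i,b^i,c^i\}$, which after normalization places weight $1/(3m)$ on each of these $3m$ strings, precisely as in the conjunction case — the distribution of the sample sequence $Q$ is the same. So it suffices to fix a separated $Q$ in the support of $\calQ_E$, condition on $Q$ and on $E$, and bound by $0.09$ the probability that $T$ and $T'$ reach different leaves of $T_Q$. The bad-leaf step (leaves $w$ with $|H_w| \ge 0.02\,n^{1/3}$) goes through unchanged, since conditioning on $Q$ and $E$ the set $R\setminus S(Q)$ is still uniformly random in $[n]\setminus S(Q)$ (the analog of Property~\ref{trivial:2}) and, by the first paragraph, $T'$ can only walk into the $1$-subtree of a query string $z$ with $\zero(z)\subseteq R$, so $H_w\subseteq R$ remains necessary for $T'$ to reach $w$. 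Thus we may restrict to a good leaf $w$ and a fixed $R$ with $w'(R)=w$.

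For such $w$ and $R$ I would bound the probability, under $\nod^*$ conditioned on $R,Q,E$, that $T$ fails to reach $w$; by the first paragraph this equals the probability that $[\phi(z)\ge\theta]\ne g(z)$ for some $z$ on $P_w$, and this event is contained in the union of exactly the four events $E_0,E_1,E_2,E_3$ from the proof of Lemma~\ref{nolemma} (the discrepancy is driven either by an $\alpha_k$ with $\alpha_k\notin S(Q)$ lying in $H_w$, by an $\alpha_k\in S(Q)\setminus\Gamma(Q)$ at which $z$ fails to be $k$-special in $A_k$ or in $B_k$, or by an $\alpha_k\in\Gamma(Q)$ at which $z$ happens to be $k$-special). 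Since nothing in $\phi$, $v$, or $\theta$ depends on the residual randomness — the extra $\alpha_i$'s of $R\setminus S(Q)$, the partition of each sampled $C_k$ into $A_k,B_k$, and the blocks of the sets $F_i$ — beyond what the bit $[\phi(\cdot)\ge\theta]$ and $g$ already encode, Properties~\ref{trivial:1}--\ref{trivial:3} still hold and the estimates $\Pr[E_0]<0.07$ and $\Pr[E_1],\Pr[E_2],\Pr[E_3]=o(1)$ transfer word for word. Summing gives probability $<0.08$ that $T$ misses $w$, and combining with the $o(1)$ bad-leaf bound yields the $\nod^*$-analog of~(\ref{ttt2}); Lemma~\ref{lem:separated} then finishes the proof. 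I do not expect a real obstacle here: the only care needed is (i) the threshold arithmetic, so that the correction terms $-\sum_k x_k$ and $\ell/4$ — which are present solely to force $g(a^i)=1$ — never flip one of the comparisons above, and (ii) the extra mass on $\11^n$ in $\calD_f$ and $\calD_g$, which is harmless because it is the same constant $1/4$ in both distributions and $\11^n$ is discarded by the strong sampling oracle.
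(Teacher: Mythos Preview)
Your proposal is correct and follows essentially the same approach as the paper's proof. You make the threshold arithmetic more explicit upfront (deriving the iff conditions for $\phi(x)\ge\theta$ and $v(x)\ge\theta$), whereas the paper simply rewrites $v(x)=10n^2\sum_{k\notin R}x_k+5n(m-|I(x)|)-\sum_k x_k$ and compares $|I(x)|$ with $|I'(x)|$; but the core structure---condition on separated $Q$ via Lemma~\ref{lem:separated}, restrict to good leaves, and show that any discrepancy between $[\phi(z)\ge\theta]$ and $g(z)$ is contained in the union of the four events $E_0,E_1,E_2,E_3$ from Lemma~\ref{nolemma}, whose probability bounds carry over verbatim---is identical.
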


\begin{proof}
Following Definition \ref{ffff} and Lemma \ref{lem:separated},
  the event $E$ of $Q$ being \emph{separated} (with respect to $(g,\calD_g)$)
  happens with probability $1-o(1)$.
Let $\calQ_E$ denote the probability distribution of $Q$ conditioning on $E$.
Fix~a sequence $Q$ in the support of $\calQ_E$.
Below we prove the statement of the lemma conditioning on both
  $Q$ and $E$.
Let $\calR_{Q,E}$ denote the distribution of $R$ conditioning on $Q$ and $E$.

Similar to the proof of Lemma \ref{nolemma}, it suffices to show that for every $R$
  in the support of $\calR_{Q,E}$ such that $w=w'(R)$ is a good leaf,
  $T$ reaches $w$ with high probability, conditioning on $R,Q$ and $E$.


Note that $v(x)$ from the $\nod^*$ distribution can be also written as:
$$v(x)=10n^2\sum\limits_{k\in[n]\backslash R}x_k+5n\left(m-|I(x)|\right)-\sum_{k\in [n]} x_k,$$
where $I(x)$ is the set of all $\alpha_i$'s, $i\in [m]$,
  such that $x_{\alpha_i}=0$ and $x$ is not $i$-special.
Then $T$ does not reach $w$ only if for some $x$ along the path from the root of $T_Q$ to $w$,
  either $\phi(x)\ge \theta>v(x)$ or $v(x)\ge \theta>\phi(x)$.

When $\phi(x)\ge \theta>v(x)$, we have $|I(x)|>|I'(x)|$ and thus,
  one of the following two events must hold:
\begin{enumerate}
\item[] Event $E_0^*$: $\phi(x)\ge \theta$ (so $w$ is in the $1$-subtree of $x$) and
  $x_{\alpha_k}=0$ for some $\alpha_k\notin S(Q)$;\vspace{-0.1cm}
\item[] Event $E_{1,2}^*$: $\phi(x)\ge \theta$, $x_{\alpha_k}=0$ for some
  $\alpha_k\in S(Q)$ but $\alpha_k\notin \Gamma(Q)$, and $x$ is not $k$-special.
\end{enumerate}
For the case when $v(x)\ge \theta>\phi(x)$, we have $|I'(x)|>|I(x)|$ and thus,
  the following event must hold:
\begin{enumerate}
\item[] Event $E_3^*$: $x_{\alpha_k}=0$ for some $\alpha_k\in \Gamma(x)$ and $x$ is
  $k$-special.
\end{enumerate}
Note that $E_0^*$ is the same event as $E_0$, $E_{1,2}^*$ is the same event as the union
  of $E_1$ and $E_2$, and $E_3^*$ is the same event as $E_3$ in the proof of Lemma \ref{nolemma}.
The lemma follows from bounds on their probabilities given in the proof of Lemma \ref{nolemma}.
\end{proof}

Lemma \ref{ltflowerbound} then follows from Lemmas \ref{ltfdistinguish}, \ref{ltfyescase}, and \ref{ltfnocase}. 

\begin{flushleft}
\bibliographystyle{amsalpha}
\bibliography{references}
\end{flushleft}
\appendix

\section{Proof of Inequality (\ref{hairy})}\label{proof:hairy}
We prove the last step of (\ref{hairy}).
Let $k=|B|=\Omega(rtW)\gg r$ since $W=\Omega(\epsilon)$. Let $\delta=7/r$. Then
\begin{align*}
\frac{{{k}\choose{r}}}{{{k-\delta k}\choose{r-1}}}\notag&=\frac{1}{r}\cdot
\frac{(k-\delta k+1)(k-\delta k+2)\cdots k}{(k-\delta k-r+2)(k-\delta k-r+3)\cdots (k-r)}\\[0.8ex]
\notag&=\frac{k}{r}\cdot \frac{k-\delta k+1}{k-\delta k-r+2}\cdot \frac{k-\delta k+2}{k-\delta k-r+3}
 \cdots \frac{k-1}{k-r}\\[1.8ex]
\notag&\leq\frac{k}{r}\cdot \left(\frac{k-\delta k+1}{k-\delta k-r+2}\right)^{\delta k-1}
\notag \leq\frac{k}{r}\cdot \left(1+\frac{2r}{k}\right)^{\delta k} 
 =O\left(\frac{k}{r}\right).
\end{align*}

\section{Proof of (\ref{appt})}\label{app:tt}

We use the following folklore extension of the standard Chernoff bound:

\begin{lemma}\label{lerandomlemma}
Let $p\in [0,1]$ and $X_1,\ldots,X_n$ be a sequence of (not necessarily independent)
  $\{0,1\}$-valued random variables.
Let $X=\sum_{i\in [n]} X_i$.
If for any $i\in [n]$ and any $b_1,\ldots,b_{i-1}\in \{0,1\}$:
$$
\Pr\big[\hspace{0.03cm}X_i=1\ |\ X_1=b_1,\cdots,X_{i-1}=b_{i-1}\hspace{0.03cm}\big]\le p,
$$
then we have $\Pr[X\ge (1+\delta)\cdot pn]\le e^{-\delta^2pn/3}$.
\end{lemma}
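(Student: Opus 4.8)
The plan is to run the standard exponential–moment (Chernoff) argument, inserting the conditional-probability hypothesis at exactly the point where independence is normally invoked. Fix any $\lambda>0$. By Markov's inequality applied to $e^{\lambda X}$,
$$
\Pr\big[X\ge (1+\delta)pn\big]\ \le\ e^{-\lambda(1+\delta)pn}\cdot \mathbb{E}\big[e^{\lambda X}\big],
$$
so everything reduces to an upper bound on $\mathbb{E}\big[e^{\lambda X}\big]=\mathbb{E}\big[\prod_{i=1}^n e^{\lambda X_i}\big]$.

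First I would peel the variables off one at a time with the tower rule. For the last factor, using that $X_n$ is $\{0,1\}$-valued and that $e^{\lambda}-1\ge 0$,
$$
\mathbb{E}\big[e^{\lambda X_n}\ \big|\ X_1,\dots,X_{n-1}\big]=1+\Pr\big[X_n=1\ \big|\ X_1,\dots,X_{n-1}\big]\cdot(e^{\lambda}-1)\le 1+p(e^{\lambda}-1)\le e^{p(e^{\lambda}-1)},
$$
where the first inequality is the hypothesis and the second is $1+x\le e^x$. Since the bound $e^{p(e^{\lambda}-1)}$ is a deterministic constant, it pulls out of the remaining expectation; iterating over $i=n,n-1,\dots,1$ gives $\mathbb{E}\big[e^{\lambda X}\big]\le e^{np(e^{\lambda}-1)}$, precisely as in the independent case. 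Plugging this in yields $\Pr\big[X\ge(1+\delta)pn\big]\le \exp\!\big(np(e^{\lambda}-1)-\lambda(1+\delta)pn\big)$, which I would then optimize over $\lambda$: the minimizer is $\lambda=\ln(1+\delta)$, giving
$$
\Pr\big[X\ge (1+\delta)pn\big]\ \le\ \left(\frac{e^{\delta}}{(1+\delta)^{1+\delta}}\right)^{pn}.
$$
Finally I would invoke the standard numerical estimate $e^{\delta}/(1+\delta)^{1+\delta}\le e^{-\delta^2/3}$ for $0\le\delta\le 1$ (a routine one-variable calculus check on the function $(1+\delta)\ln(1+\delta)-\delta-\delta^2/3$) to conclude $\Pr[X\ge(1+\delta)pn]\le e^{-\delta^2 pn/3}$. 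This range of $\delta$ is the only one needed; for $\delta\ge 1$ the same derivation gives the cruder bound $e^{-\delta pn/3}$, which already suffices in every application.

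There is no genuine obstacle here — the statement is folklore — but the single point worth spelling out is \emph{why} the conditional bound is enough: each step of the peeling only needs an upper bound on $\mathbb{E}\big[e^{\lambda X_i}\mid X_1,\dots,X_{i-1}\big]$ that holds uniformly over the conditioning, and the hypothesis supplies exactly that. An alternative, arguably cleaner write-up avoids the moment generating function entirely via a coupling: draw i.i.d.\ uniforms $U_1,\dots,U_n\in[0,1]$, set $q_i=\Pr[X_i=1\mid X_1,\dots,X_{i-1}]$ (a function of $X_1,\dots,X_{i-1}$, hence of $U_1,\dots,U_{i-1}$), and define $X_i=\mathbf{1}\{U_i\le q_i\}$ and $Y_i=\mathbf{1}\{U_i\le p\}$; then $X=\sum_i X_i\le\sum_i Y_i$ pointwise, the $Y_i$ are i.i.d.\ $\mathrm{Bernoulli}(p)$, and the lemma follows immediately from the ordinary multiplicative Chernoff bound applied to $\sum_i Y_i$.
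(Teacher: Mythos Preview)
Your proof is correct. The paper does not actually prove this lemma: it is stated in the appendix as a ``folklore extension of the standard Chernoff bound'' and used as a black box, so there is nothing to compare against. Both routes you sketch (the MGF peeling argument and the coupling with i.i.d.\ Bernoulli$(p)$ variables) are standard and valid; the coupling version is arguably the cleanest justification of why the lemma deserves the label ``folklore.'' Your remark about the range of $\delta$ is also well taken --- the exponent $-\delta^2 pn/3$ as literally written only holds for $\delta$ up to a constant (roughly $\delta\lesssim 1.8$), and the paper's applications in Appendix~B in fact need the large-$\delta$ form $e^{-\Omega(\delta pn\log\delta)}$ or $e^{-\delta pn/3}$; this is a harmless imprecision in the paper's statement, not in your argument.
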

\def\calA{\mathcal{A}}
Now we prove (\ref{appt}).
Fix an $i\in [q]$ and the $2\log^2 n$ blocks in $H_i$.
Then we sample all other $q-1$ many $H_j$'s and bound the probability that
  (\ref{appt}) does not happen for $i$.
We use the following procedure to sample $H_j$'s:
for each $j\ne i$ sample a sequence of $4\log^2 n$ blocks uniformly at random
   with replacement and~set $H_j$ to be the union of the first $2\log^2 n$ distinct
  blocks sampled.
This procedure, denoted by $\calA$,
  fails if for some $j$, there are less than $2\log^2 n$ distinct blocks
  from the $4\log^2 n$ samples.
When it succeeds, $\calA$ yields the desired uniform and independent distribution.
We claim that $\calA$ succeeds with probability $1-e^{-\Omega(r)}$.

To see this, for each $j$, its $k$th sample is the same as one of the previous $k-1$ samples
  with probability at most $(k-1)/r\le 4\log^2 n/r$, no matter what the outcomes of
  the first $k-1$ samples are.
By Lemma~\ref{lerandomlemma},~$\calA$ failed at $H_j$ with probability
  $e^{-\Omega(r)}$
  because this happens only if more than $2\log^2 n$ samples have appeared before.
By a union bound on $j$, $\calA$ succeeds with probability $1-e^{-\Omega(r)}$.

Let $U$ denote the union of all $(q-1)\cdot (4\log^2 n)$ blocks sampled
  by $\calA$.
Then
\begin{align*}
\Pr\big[\hspace{0.03cm}(\ref{appt})\ \text{does not hold for $i$}\hspace{0.03cm}\big]
  &\le {\Pr\big[\hspace{0.03cm}\text{$U$ has $> \log^2 n/16$ blocks
  of $H_i$}\ |\ \text{$\calA$ succeeds}\hspace{0.04cm}\big]} \\[0.6ex]
 &\le \frac{\Pr\big[\hspace{0.03cm}\text{$U$ has $> \log^2 n/16$ blocks
   of $H_i$}\hspace{0.03cm}\big]}{\Pr\big[\hspace{0.03cm}\text{$\calA$
   succeeds}\hspace{0.04cm}\big]}.
\end{align*}
Using Chernoff bound, the probability of $U$ having  more than $\log^2 n/16$ blocks
  of $H_i$ is at most $n^{-\Omega(\log n)}$.
(\ref{appt}) follows from $\Pr[\hspace{0.03cm}\text{$\calA$ succeeds}\hspace{0.04cm}]\ge 1-e^{-\Omega(r)}$
  and a union bound on $i\in [q]$.

\end{document}